\newtheorem{theorem}{Theorem}[section]
\newtheorem{corollary}[theorem]{Corollary}
\newtheorem{lemma}[theorem]{Lemma}
\newtheorem{proposition}[theorem]{Proposition}
\newtheorem{observation}[theorem]{Observation}
\newcommand{\DeclareAutoPairedDelimiter}[3]{%
  \expandafter\DeclarePairedDelimiter\csname Auto\string#1\endcsname{#2}{#3}%
  \begingroup\edef\x{\endgroup
    \noexpand\DeclareRobustCommand{\noexpand#1}{%
      \expandafter\noexpand\csname Auto\string#1\endcsname*}}%
  \x}
\DeclareMathOperator{\polylog}{polylog}
\DeclareMathOperator{\poly}{poly}
\DeclareMathOperator{\supp}{supp}
\newcommand{\RFunc}{\mathtt{R}}
\newcommand{\groundset}{\mathcal{N}}
\newcommand{\realnum}{\mathbb{R}}
\newcommand{\nonnegative}{\realnum_{\geq 0}}
\newcommand{\cE}{\mathcal{E}}
\newcommand{\tx}{\tilde{x}}
\DeclareAutoPairedDelimiter\power{\lvert}{\rvert}
\DeclareAutoPairedDelimiter{\ceil}{\lceil}{\rceil}
\DeclareAutoPairedDelimiter{\bracks}{[}{]}
\DeclareAutoPairedDelimiter{\curly}{\{}{\}}
\newcommand{\range}[2]{\ensuremath{\bracks{#1,#2}}}
\newcommand{\cube}[3]{\ensuremath{\range{#1}{#2}^{#3}}}
\newcommand{\Exp}[1]{\ensuremath{\mathbb{E}\bracks{#1}}}
\newcommand{\pr}[1]{\ensuremath{\Pr\bracks{#1}}}
\newcommand{\Func}[3]{\ensuremath{#1\colon#2\rightarrow#3}}
\newcommand{\onevect}[1]{\ensuremath{\mathbf{1}_{#1}}}
\newcommand{\onenorm}[1]{\ensuremath{\left\|{#1}\right\|_1}}
\newcommand{\email}[1]{{\href{mailto:#1}{#1}}}
\newcommand{\newreptheorem}[2]{%
	\newenvironment{rep#1}[1]{%
	\expandafter\renewcommand\csname the#2\endcsname{\ref*{##1}}%
	\expandafter\renewcommand\csname theH#2\endcsname{repeat.##1}%
	\begin{#1}}%
	{\end{#1}%
	\addtocounter{#2}{-1}}}
\begin{document}
\title{Making a Sieve Random: Improved Semi-Streaming Algorithm for Submodular Maximization under a Cardinality Constraint}
\author{
Naor Alaluf\thanks{Department of Mathematics and Computer Science, Open University of Israel. E-mail: \email{naoralaluf@gmail.com}}
\and
Moran Feldman\thanks{Department of Mathematics and Computer Science, Open University of Israel. E-mail: \email{moranfe@openu.ac.il}}
}
\maketitle
\begin{abstract}
In this paper we consider the problem of maximizing a non-negative submodular function subject to a cardinality constraint in the data stream model. Previously, the best known algorithm for this problem was a $5.828$-approximation semi-streaming algorithm based on a local search technique (Feldman et al., 2018). For the special case of this problem in which the objective function is also monotone, the state-of-the-art semi-streaming algorithm is an algorithm known as Sieve-Streaming, which is based on a different technique (Badanidiyuru, 2014). Adapting the technique of Sieve-Streaming to non-monotone objective functions has turned out to be a challenging task, which has so far prevented an improvement over the local search based $5.828$-approximation. In this work, we overcome the above challenge, and manage to adapt Sieve-Streaming to non-monotone objective functions by introducing a ``just right'' amount of randomness into it. Consequently, we get a semi-streaming polynomial time $4.282$-approximation algorithm for non-monotone objectives. Moreover, if one allows our algorithm to run in super-polynomial time, then its approximation ratio can be further improved to $3 + \varepsilon$.
\end{abstract}
\pagenumbering{Alph}
\thispagestyle{empty}
\clearpage
\setcounter{page}{1}
\pagenumbering{arabic}

\section{Introduction}
Submodular functions are a wide class of functions capturing the intuitive notion of diminishing returns. As diminishing returns occurs naturally in many scenarios, the optimization of submodular functions subject to combinatorial constraints has found many applications in diverse fields, including machine learning~\cite{DBLP:conf/icml/DasK11,DBLP:journals/jair/GolovinK11,DBLP:conf/miccai/SalehiKSC17}, social networks~\cite{DBLP:conf/www/HartlineMS08,DBLP:journals/toc/KempeKT15} and algorithmic game theory~\cite{DBLP:journals/toc/DughmiRS12,DBLP:journals/disopt/SchulzU13}.

In the context of many of the above applications, it is desirable for the submodular optimization algorithm to be a (semi-)streaming algorithm because the input is either very large (e.g., the friendships graph of a social network) or naturally occurs at the form of a long stream (e.g., summarizations of the frames generated by a surveillance camera). In response to this need, Badanidiyuru et al.~\cite{DBLP:conf/kdd/BadanidiyuruMKK14} and Chakrabarti and Kale~\cite{DBLP:journals/mp/ChakrabartiK15} developed two semi-streaming algorithms for the maximization of non-negative submodular functions that are also \emph{monotone}. A set function $f\colon 2^\groundset \to \realnum$ over a ground set $\groundset$ is submodular if for every two sets $A \subseteq B \subseteq \groundset$ and element $u \in \groundset \setminus B$ it holds that
\[
	f(A \cup \{u\}) - f(A)
	\geq
	f(B \cup \{u\}) - f(B)
	\enspace,
\]
and it is monotone if $f(A) \leq f(B)$ for every two sets $A \subseteq B \subseteq \groundset$. The algorithm of Badanidiyuru et al.~\cite{DBLP:conf/kdd/BadanidiyuruMKK14} maximizes non-negative monotone submodular functions subject to a cardinality constraint up to an approximation ratio of $2$; and the algorithm of Chakrabarti and Kale~\cite{DBLP:journals/mp/ChakrabartiK15} achieves a $4$-approximation for the maximization of the same kind of functions subject to a more general class of constraints known as matroid constraints.

Note again that the two above algorithms work only for monotone submodular functions. To handle non-monotone submodular functions, it is usually necessary to use randomness.\footnote{Some works use alternative techniques involving the maintenance of multiple solutions. However, it is often natural to view these solutions as the support of a distribution of solutions. See~\cite{DBLP:journals/talg/BuchbinderF18} for an explicit example of this point of view.} The algorithm of Chakrabarti and Kale~\cite{DBLP:journals/mp/ChakrabartiK15} is local-search based, and easily integrates with randomness, which has lead to multiple works adapting it to non-monotone functions~\cite{DBLP:conf/icalp/ChekuriGQ15,DBLP:conf/nips/FeldmanK018,DBLP:conf/aaai/MirzasoleimanJ018}. The best of these adaptations achieves an approximation ratio of $3 + 2\sqrt{2} \approx 5.828$ for maximizing a non-negative (not necessarily monotone) submodular function subject to a matroid constraint~\cite{DBLP:conf/nips/FeldmanK018}. In contrast, the algorithm of Badanidiyuru et al.~\cite{DBLP:conf/kdd/BadanidiyuruMKK14} for cardinality constrains, which is known as Sieve-Streaming, is based on a thresholding technique. In a nutshell, the algorithm picks a threshold and then selects every element whose marginal contribution to the current solution of the algorithm exceeds this threshold, until the solution gets to the maximum cardinality allowed. The analysis of Sieve-Streaming then handles in two very different ways the case in which the solution grew all the way to the maximum cardinality allowed and the case in which this did not happen. Unfortunately, most natural ways to add randomness to Sieve-Streaming result in input instances for which both these cases might happen with a non-zero, which makes the analysis break down. Due to this hurdle, prior to this work, no random adaptation of Sieve-Streaming managed to improve over the $5.828$-approximation of~\cite{DBLP:conf/nips/FeldmanK018} despite the significant advantage of Sieve-Streaming over the local search approach of  Chakrabarti and Kale~\cite{DBLP:journals/mp/ChakrabartiK15} in the context of monotone submodular functions.\footnote{Chekuri et al.~\cite{DBLP:conf/icalp/ChekuriGQ15} claimed an improved approximation ratio of $4.718$ for cardinality constraints based on such an adaptation of Sieve-Streaming, but an error was later found in the proof of this improved ratio~\cite{communication:Chekuri18}. See Appendix~\ref{app:error} for more detail.}

In this paper, we present a novel way to introduce randomness into the thresholding technique of Sieve-Streaming. Our basic idea is to base the decisions of the algorithm on the values of expectations over appropriately chosen distributions. On the one hand, this allows our algorithm to include the necessary random component, and on the other hand, since expectations have deterministic values, the algorithm we get is deterministic enough to allow us to consider at every given time only one of the cases from the analysis of Sieve-Streaming. Using this idea, we get the following theorem. In this theorem, and in the rest of the paper, we denote by $k$ the maximum number of elements allowed by the cardinality constraint in a solution. We also remind the reader that a \emph{semi-streaming} algorithm is an algorithm that gets its input in the form of a data stream and uses a memory whose size is bounded by the maximum size of a feasible solution up to a poly-logarithmic factor, which in our context means $O(k \cdot \polylog (k, |\groundset|))$ space.

\begin{theorem}
\label{trm:non-polynomial}
	For every constant $\varepsilon \in (0, 1]$, there exists a semi-streaming $(3 +\varepsilon)$-approximation algorithm for maximizing a non-negative submodular function
	subject to cardinality constraint. The algorithm stores at most $O(k\varepsilon^{-2} \log k)$ elements.
\end{theorem}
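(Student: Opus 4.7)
The goal is to adapt Sieve-Streaming to non-monotone objectives by making the algorithm's inclusion decisions deterministic functions of \emph{expectations} over an internal random subset, so that the analysis can still cleanly split into the ``solution is full'' versus ``solution is not full'' cases. Concretely, for each threshold $\tau$ in a geometric grid $\{(1+\varepsilon)^i\}$ approximating $f(\mathrm{OPT})$ to within a factor of $1+\varepsilon$, I maintain a deterministic candidate set $\tS_\tau$. When an element $u$ arrives, let $S_\tau$ denote the random set obtained by independently retaining each element of the current $\tS_\tau$ with probability $p$ (where $p$ is a constant to be tuned). The element $u$ is inserted into $\tS_\tau$ iff the deterministic quantity $\Exp{f(S_\tau \cup \{u\}) - f(S_\tau)} \geq \tau/k$. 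Insertions stop when $|\tS_\tau|$ hits its storage cap $L = \Theta(k/\varepsilon)$. The final output is chosen, across all thresholds, by brute-force search over all $k$-subsets of the stored sets; this is the step that makes the algorithm super-polynomial. Since $\Theta(\varepsilon^{-1}\log k)$ thresholds each store $\Theta(k/\varepsilon)$ elements, total storage is $O(k\varepsilon^{-2}\log k)$ as required.

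\textbf{Case analysis for the correct $\tau$.} In the \emph{full case} $|\tS_\tau| = L$, telescoping the insertion rule over the elements added to $\tS_\tau$ gives $\Exp{f(S_\tau)} \geq p \tau L / k$, and since any $k$-subset of $\tS_\tau$ is a valid output, the best one beats the expected value of a random $k$-subset of $\tS_\tau$, which can be pushed close to $p\tau$ by tuning. In the \emph{non-full case} $|\tS_\tau| < L$, every $o \in \mathrm{OPT} \setminus \tS_\tau$ was rejected, so coupling the random set $S_\tau^{(t)}$ at $o$'s arrival as a subset of the final $S_\tau$ and invoking submodularity yields $\Exp{f(S_\tau \cup \{o\}) - f(S_\tau)} < \tau/k$; summing over $\mathrm{OPT}$ and combining with the folklore inequality $\Exp{f(S_\tau \cup \mathrm{OPT})} \geq (1-p) f(\mathrm{OPT})$ (valid because each element lies in $S_\tau$ with probability at most $p$) produces a lower bound $\Exp{f(S_\tau)} \gtrsim (1-p) f(\mathrm{OPT}) - \tau$, after accounting for elements in $\mathrm{OPT} \cap \tS_\tau$.

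\textbf{Tuning for $3+\varepsilon$.} Setting $\tau \approx \alpha f(\mathrm{OPT})$, the two bounds become (roughly) $p \alpha$ and $1 - p - \alpha$ times $f(\mathrm{OPT})$; balancing them without the offline search gives $p(1-p)/(1+p)$, whose maximum is $3-2\sqrt{2}$, recovering only the $3+2\sqrt{2}$ barrier. To break this barrier I would leverage the super-polynomial post-processing: enlarging the cap $L$ and picking the best $k$-subset of $\tS_\tau$ amplifies the full-case bound, which in turn permits taking $p$ arbitrarily close to $1$ while keeping $\alpha$ close to $0$, driving the combined ratio to $3+\varepsilon$. The geometric grid on $\tau$ contributes the extra $(1+\varepsilon)$ factor absorbed into $\varepsilon$.

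\textbf{Main obstacle.} The technical crux is the non-full case, and specifically handling the elements $o \in \mathrm{OPT} \cap \tS_\tau$. These elements passed the threshold at \emph{their} arrival time, but by the time the analysis is carried out they sit inside $S_\tau$ with probability $p$, making their current marginal to $S_\tau$ potentially small or negative. The cleanest route is to split $\Exp{f(S_\tau \cup \mathrm{OPT}) - f(S_\tau)}$ by conditioning on membership in $S_\tau$ and using the product structure of the random subset to factor out a $(1-p)$ term; carefully accounting for the condition that drove each such element into $\tS_\tau$ is what couples the two cases and ultimately yields a single clean lower bound on the value extracted by the offline search, independent of which case the correct-threshold instance lies in.
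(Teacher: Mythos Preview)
Your overall framework—the expectation-based acceptance rule, the full/non-full case split for the correct threshold, the geometric grid of thresholds—is essentially the paper's approach (the paper phrases it via the multilinear extension $F$ at the fractional point $x = p\cdot\onevect{\tS_\tau}$, which is exactly your $\Exp{f(S_\tau)}$), and your identification of the obstacle (the elements of $OPT \cap \tS_\tau$) is exactly right. The gap is in how you resolve that obstacle and, consequently, in how you tune.

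You propose to use the brute-force post-processing to \emph{amplify the full-case bound}, enlarge the cap $L$, and send $p \to 1$ with $\alpha \to 0$. This does not work. First, as $p \to 1$ the inequality $\Exp{f(S_\tau \cup OPT)} \geq (1-p)f(OPT)$ becomes vacuous, so your non-full bound $(1-p-\alpha)f(OPT)$ collapses to something non-positive. Second, in the full case, once $pL > k$ the random set $S_\tau$ is typically infeasible, and there is no generic way to extract a $k$-subset of $\tS_\tau$ with value at least $\Exp{f(S_\tau)}$; lossless rounding (pipage/swap) requires $\|p \cdot \onevect{\tS_\tau}\|_1 \leq k$, which forces $L \leq k/p$. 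So enlarging $L$ while pushing $p$ upward gives you nothing you can cash in.

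The paper does the opposite. It takes $p \to 0$ (specifically $p = \varepsilon/8$) with cap $\lceil k/p \rceil$, so the fractional solution is always feasible and the full case yields a rounded set $S_1$ with $\Exp{f(S_1)} \geq F(\hat{x}) \geq c\tau$. The brute-force search is not used to boost the full case; it produces a \emph{second, independent candidate} $S_2$ whose role is entirely in the non-full case: since $OPT \cap \supp(\hat{x})$ is a feasible subset of the stored elements, brute force guarantees $f(S_2) \geq f(OPT \cap \supp(\hat{x}))$. Separately, the non-full analysis of $F(\hat{x})$ (via the Lov\'{a}sz extension, after stripping the $\supp(\hat{x})\setminus OPT$ coordinates at a $(1-p)$ cost) gives a lower bound involving $f(OPT \setminus \supp(\hat{x}))$. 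Taking a convex combination of the $S_1$-bound and the $S_2$-bound and invoking submodularity in the form $f(OPT \cap \supp(\hat{x})) + f(OPT \setminus \supp(\hat{x})) \geq f(OPT)$ is precisely what dissolves the $OPT\cap\tS_\tau$ obstacle and yields the ratio $\frac{1-p}{2+(1-p)^2} \to \tfrac{1}{3}$ as $p\to 0$. Your ``conditioning on membership and factoring out $(1-p)$'' sketch does not supply this combination step, and your tuning direction for $p$ is reversed.
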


The time complexity of the algorithm whose existence is guaranteed by Theorem~\ref{trm:non-polynomial} is of the form $h(k, \varepsilon) \cdot \poly(|\groundset|)$ for some non-polynomial function $h$, which makes this algorithm useful in practice only when $k$ and $\varepsilon^{-1}$ are small enough. For larger values of $k$ and $\varepsilon^{-1}$, the guarantee of Theorem~\ref{trm:non-polynomial} is interesting only from an information theoretic point of view. To counteract this, the following theorem describes a different version of our algorithm that is more appropriate for practical uses. This version achieves a polynomial time complexity at the cost of guaranteeing a weaker approximation ratio.

\begin{theorem}
\label{trm:polynomial}
	There exists a \emph{polynomial time} semi-streaming $4.282$-approximation algorithm for maximizing a non-negative submodular function
	subject to cardinality constraint. This algorithm stores at most $O(k \log k)$ elements.
\end{theorem}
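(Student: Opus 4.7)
The plan is to take the algorithm of Theorem~\ref{trm:non-polynomial} and alter the distribution over which expectations are evaluated so that every per-element operation runs in polynomial time, accepting a weaker approximation ratio in exchange. The super-polynomial cost in Theorem~\ref{trm:non-polynomial} is presumably incurred when evaluating expectations of $f$ over a rich distribution on subsets of the algorithm's current storage that may enumerate exponentially many sets. For the polynomial variant we replace it by a simpler, parameterized distribution $\mathcal{D}$ for which a single expectation $\Exp{f(R \cup \{u\}) - f(R)}$ with $R \sim \mathcal{D}$ can be computed in closed form via a short recursion or approximated to inverse-polynomial accuracy by averaging a polynomial number of samples.

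The algorithmic outer loop is the standard Sieve-Streaming one: maintain in parallel $O(\log k)$ inner copies, one per geometrically-spaced guess of $\mathrm{OPT}$, seeding the grid from the running maximum singleton value in the manner of Badanidiyuru et al.~\cite{DBLP:conf/kdd/BadanidiyuruMKK14}. This immediately yields the $O(k \log k)$ storage bound, since each copy stores at most $k$ elements and only $O(\log k)$ copies are simultaneously active. Inside a copy, when an element $u$ arrives and the currently stored set is $T$, we compute $\Exp{f(R \cup \{u\}) - f(R)}$ with $R \sim \mathcal{D}$ over subsets of $T$, and accept $u$ if this value exceeds the threshold associated with the current guess of $\mathrm{OPT}$. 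Because every acceptance decision depends only on an expectation, the decisions remain deterministic, preserving the feature of the Theorem~\ref{trm:non-polynomial} framework that sidesteps the ``both analysis cases happen with non-zero probability'' failure mode described in the introduction.

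The analysis then follows the two-case template that drives Theorem~\ref{trm:non-polynomial}. In case one, the stored set reaches cardinality $k$: each accepted element contributed at least the threshold to its expected marginal, and a submodularity-based inequality lower-bounding $\Exp{f(R)}$ in terms of $f(T)$ converts this into a lower bound on the algorithm's value relative to the guess of $\mathrm{OPT}$. In case two, the stored set never fills, so each $o \in \mathrm{OPT} \setminus T$ failed its threshold test at the moment of arrival; summing these failed inequalities and applying submodularity bounds $f(\mathrm{OPT})$ by the algorithm's value plus $k$ times the threshold. Balancing the two bounds over the threshold, over the free parameter of $\mathcal{D}$, and over the geometric-grid slack on the guess of $\mathrm{OPT}$ yields the claimed constant $4.282$.

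The main obstacle, and the source of the loss relative to $3+\varepsilon$, is case one: with a polynomial-time-friendly distribution $\mathcal{D}$ the inequality relating $\Exp{f(R)}$ to $f(T)$ is weaker than the analogous inequality available from the richer distribution used in Theorem~\ref{trm:non-polynomial}, and this weakening forces the final balanced constant to $4.282$ rather than $3$. A secondary technical subtlety is to ensure that any Monte-Carlo estimation error in evaluating the expectations does not propagate harmfully into the approximation bound; this can be handled by taking sufficiently many samples and absorbing the resulting $o(1)$ loss into the already-present geometric-grid slack.
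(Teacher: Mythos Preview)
Your proposal has a genuine structural gap: it misidentifies both the source of the super-polynomial cost and the mechanism behind the jump from $3+\varepsilon$ to $4.282$.

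In the paper, the distribution used for the expected marginals is the \emph{same} in both theorems: it is always the multilinear extension, i.e., independent inclusion of each stored element with probability $p$. The super-polynomial cost in Theorem~\ref{trm:non-polynomial} does not come from a ``rich distribution'' being enumerated; evaluating $F$ is handled in both versions by sampling (Appendix~\ref{apx:sampling}), and this contributes only a negligible $o(1)$ loss. The actual exponential step is a \emph{post-processing} step you have omitted entirely: after the stream, the algorithm runs an offline submodular maximizer on the support of the fractional solution to produce a second candidate $S_2$, and returns the better of $S_1$ (the rounded fractional solution) and $S_2$. With unbounded time this offline step is brute-force ($\alpha=1$), yielding $3+\varepsilon$; in polynomial time one plugs in the algorithm of Buchbinder et al.\ with $\alpha\approx 0.46$, and it is \emph{this} degradation in $\alpha$ that produces $4.282$ via Lemma~\ref{lem:bound-solution}.

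The reason $S_2$ is indispensable is exactly your case-two analysis, which as written only works for monotone $f$. For non-monotone $f$ you cannot conclude $f(OPT)\le f(T)+k\cdot(\text{threshold})$ from the rejected-element inequalities, because $f(T\cup OPT)\ge f(OPT)$ fails. The paper's fix is to store a $p$-fraction of each accepted element (so the support has size $O(k/p)$, not $k$), use Corollary~\ref{cor:discardElements} and the Lov\'asz extension to get a lower bound on $F(\hat{x}+\mathbf{1}_{OPT\setminus\supp(\hat{x})})$ that involves the term $f(OPT\setminus\supp(\hat{x}))$, and then pair this with $f(OPT\cap\supp(\hat{x}))\le \Exp{f(S_2)}/\alpha$ and submodularity to recover $f(OPT)$. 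Without the $S_2$ branch and the fractional storage, your case-two argument does not close for non-monotone objectives, and no choice of a ``simpler $\mathcal{D}$'' repairs it.
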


The algorithm guaranteed by Theorem~\ref{trm:polynomial} uses as a black box an offline algorithm for the problem of maximizing a non-negative submodular function subject to cardinality constraint, and naturally, its approximation ratio depends on the approximation ratio of this offline algorithm. The approximation ratio given by Theorem~\ref{trm:polynomial} was calculated based on the guarantee of the offline algorithm suggested by~\cite{DBLP:conf/soda/BuchbinderFNS14}, and will improve if better offline algorithms are found for the problem. However, even based on the existing offline algorithm, Theorem~\ref{trm:polynomial} significantly improves over the state-of-the-art $5.828$-approximation.

It is also worth mentioning that a simple adaptation of a result due to Buchbinder et al.~\cite{to_appear:BuchbinderFS19} shows that no data stream algorithm can achieve $(2 - \varepsilon)$-approximation (for any positive $\varepsilon$) for the problem we consider, unless it uses $\Omega(|\groundset|)$ memory. Moreover, this is true even if the algorithm is allowed to use unbounded computational power. For completeness, we include the proof of this result in Appendix~\ref{app:inapproximability}.

\subsection{Additional Related Work}

The problem of maximizing a non-negative monotone submodular function subject to a cardinality or a matroid constraint was studied (in the offline model) already in the $1970$'s. In $1978$, Nemhauser et al.~\cite{DBLP:journals/mp/NemhauserWF78} and Fisher et al.~\cite{Fisher1978} showed that a natural greedy algorithm achieves an approximation ratio of $\nicefrac{e}{e - 1} \approx 1.58$ for this problem when the constraint is a cardinality constraint and an approximation ratio of $2$ for matroid constraints. The $\nicefrac{e}{e - 1}$ approximation ratio for cardinality constraints was shown to be optimal already on the same year by Nemhauser and Wolsey~\cite{DBLP:journals/mor/NemhauserW78}, but the best possible approximation ratio for matroid constraints was open for a long time. Only a decade ago, Calinescu et al.~\cite{DBLP:journals/siamcomp/CalinescuCPV11} managed to show that a more involved algorithm, known as ``continuous greedy'', can achieve $\nicefrac{e}{e - 1}$-approximation for this type of constraints, which is tight since matroid constriants generalize cardinality constraints. Unlike the natural greedy algorithm, continuous greedy is a randomized algorithm, which raised an interesting question regarding the best possible approximation ratio for matroid constraints that can be achieved by a deterministic algorithm. Very recently, Buchbinder et al.~\cite{DBLP:conf/soda/BuchbinderF019} made a slight step towards answering this question. Specifically, they described a deterministic algorithm for maximizing a monotone submodular function subject to a matroid constraint whose approximation ratio is $1.997$. This algorithm shows that the $2$ approximation of the greedy algorithm is not the right answer for the above mentioned question.

Many works have studied also the offline problem of maximizing a non-negative (not necessarily monotone) submodular function subject to a cardinality or a matroid constraint~\cite{to_appear:BuchbinderF19,DBLP:conf/soda/BuchbinderFNS14,DBLP:journals/siamcomp/ChekuriVZ14,DBLP:conf/focs/EneN16,DBLP:journals/talg/Feldman17,DBLP:conf/focs/FeldmanNS11}. The most recent of these works achieves an approximation ratio of $2.598$ for both cardinality and matroid constraints~\cite{to_appear:BuchbinderF19}. In contrast, it is known that no polynomial time algorithm can achieve an approximation ratio of $2.037$ for cardinality constraints or $2.093$ for matroid constraints, respectively~\cite{DBLP:conf/soda/GharanV11}.

The study of data stream algorithms for submodular maximization problems is related to the study of online algorithms for such problems. A partial list of works on algorithms of the last kind includes~\cite{DBLP:conf/esa/AzarGR11,to_appear:BuchbinderFG19,to_appear:BuchbinderFS19,DBLP:journals/talg/ChanHJKT18,DBLP:conf/soda/KapralovPV13,DBLP:journals/siamcomp/KorulaMZ18}.
\section{Preliminaries}
In this section we introduce two standard extensions of submodular functions. The first of these extensions is known as the \textit{multilinear extension}. To define this extension, we first need to define the random set $\RFunc(x)$. For every vector $x \in \cube{0}{1}{\groundset}$, $\RFunc(x)$ is defined as a random subset of $\groundset$ that includes every element $u \in \groundset$ with probability $x_u$, independently. The multilinear extension $F$ of $f$ is now defined for every vector $x\in\cube{0}{1}{\groundset}$ by
\begin{align*}
	F(x)=\Exp{f\big(\RFunc(x)\big)}
	=\sum_{A\subseteq\groundset}{f(A)\cdot\pr{\RFunc(x)=A}}
	=\sum_{A\subseteq\groundset}{\left(f(A)\cdot\prod_{u\in{A}}{x_u}\cdot\prod_{u\notin{A}}{(1-x_u)}\right)}\enspace.
\end{align*}
One can observe from the definition that $F$ is indeed a multilinear function of the coordinates of $x$, as suggested by its name.

In the analysis of our algorithm we need an upper bound on the possible increase in the value of $F(x)$ when some of the indices of $x$ are zeroed. Corollary~\ref{cor:discardElements} provides such an upper bound. To prove it, we first need the following known lemma by Buchbinder et al.~\cite{DBLP:conf/soda/BuchbinderFNS14}.
\begin{lemma}[Lemma 2.2 from~\cite{DBLP:conf/soda/BuchbinderFNS14}]
\label{lem:atMostP}
	Let $\Func{f}{2^{\groundset}}{\nonnegative}$ be a non-negative submodular function. Denote by $A(p)$ a random subset of $A$
	where each element appears with probability at most $p$ (not necessarily independently). Then,
	$\Exp{f(A(p))}\geq(1-p) \cdot f(\varnothing)$.
\end{lemma}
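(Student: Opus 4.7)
My plan is to prove this by induction on the size of the ground set $|A|$. The base case $|A| = 0$ is immediate: in that case $A(p) = \varnothing$ deterministically, and $f(\varnothing) \geq (1-p) f(\varnothing)$ follows from non-negativity of $f$.

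For the inductive step, I would fix an arbitrary element $u \in A$, denote $q_u := \pr{u \in A(p)} \leq p$, and introduce the auxiliary random set $B := A(p) \setminus \{u\}$. Viewed as a random subset of the smaller ground set $A \setminus \{u\}$, it has marginals $\pr{v \in B} \leq \pr{v \in A(p)} \leq p$ for every $v \in A \setminus \{u\}$. Applying the inductive hypothesis to $B$ relative to the ground set $A \setminus \{u\}$ then yields $\Exp{f(B)} \geq (1-p) f(\varnothing)$.

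Next I would relate $\Exp{f(A(p))}$ to $\Exp{f(B)}$ via submodularity. On the event $u \in A(p)$ we have $A(p) = B \cup \{u\}$, and since $B \subseteq A \setminus \{u\}$, the marginal contribution of $u$ to $B$ is at least its marginal contribution to the larger set $A \setminus \{u\}$, i.e., $f(B \cup \{u\}) - f(B) \geq f(A) - f(A \setminus \{u\})$. Combined with the trivial equality $f(A(p)) = f(B)$ on the complementary event and taking expectations, this gives
\[
\Exp{f(A(p))} \geq \Exp{f(B)} + q_u \bigl(f(A) - f(A \setminus \{u\})\bigr) \geq (1-p) f(\varnothing) + q_u \bigl(f(A) - f(A \setminus \{u\})\bigr).
\]
When the marginal $f(A) - f(A \setminus \{u\})$ happens to be non-negative, the inductive step closes immediately.

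The main obstacle is the case in which $f(A) < f(A \setminus \{u\})$ for \emph{every} $u \in A$, so that no single-element choice suffices. To handle this, I would combine the one-element inequality above with the submodular ``telescoping'' bound $\sum_{u \in A}\bigl(f(A) - f(A \setminus \{u\})\bigr) \leq f(A) - f(\varnothing)$, which is obtained by removing elements of $A$ one at a time and invoking submodularity on each marginal. This bound, together with non-negativity $f(A) \geq 0$, caps the total deficit stemming from ``harmful'' elements, and the factor $q_u \leq p$ then translates the cumulative slack into at most $p \cdot f(\varnothing)$, matching the target. The delicate part, which I expect to be where the proof concentrates its technical work, is in choosing $u$ (or an averaging/weighting across the choices of $u$) so that this accounting is tight; a naive estimate on the negative portion of $\sum_u(f(A) - f(A \setminus \{u\}))$ via the telescoping identity alone does not immediately suffice, and some extra care exploiting the submodular structure of the elements with negative removal marginals is needed.
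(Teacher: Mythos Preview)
First, note that the paper does not prove this lemma: it is quoted verbatim as Lemma~2.2 of Buchbinder et al.\ and used as a black box, so there is no ``paper's own proof'' to compare against. I therefore evaluate your attempt on its own merits.

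Your inductive setup and the inequality $\Exp{f(A(p))}\ge \Exp{f(B)}+q_u\bigl(f(A)-f(A\setminus\{u\})\bigr)$ are both correct, but the gap you acknowledge in the all-negative case is genuine and cannot be closed along the lines you sketch. Take $A=\{a,b\}$ with $f(\varnothing)=f(\{a\})=f(\{b\})=1$ and $f(\{a,b\})=0$; this function is non-negative and submodular, and $f(A)-f(A\setminus\{u\})=-1$ for both elements. Your displayed bound then yields at most $(1-p)-q_u$, and with $q_a=q_b=p$ (e.g.\ $A(p)=\{a,b\}$ with probability $p$ and $A(p)=\varnothing$ otherwise) this is only $1-2p$, strictly below the target $1-p$. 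Your telescoping inequality $\sum_u\bigl(f(A)-f(A\setminus\{u\})\bigr)\le f(A)-f(\varnothing)$ is satisfied here ($-2\le -1$), yet every choice of $u$ and every convex combination of the per-$u$ bounds still falls short; the ``extra care'' you allude to does not exist within this framework, because lower-bounding the marginal of $u$ by its value at the top set $A$ simply discards too much.

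The argument in the cited reference makes two different choices. First, it picks $u$ to be an element with $q_u=\Pr[u\in A(p)]$ equal to $p$; this is without loss of generality, since if $\max_v q_v<p$ one may replace $p$ by $\max_v q_v$ and prove the stronger inequality. Second, and crucially, it applies the inductive hypothesis to the contracted function $g(S)=f(S\cup\{u\})$ rather than to $f$, obtaining $\Exp{f(B\cup\{u\})}\ge(1-p)\,f(\{u\})$. Submodularity is then used in the opposite direction from yours, to lower-bound $f(B)$ (not the marginal) via $f(B)\ge f(B\cup\{u\})+f(\varnothing)-f(\{u\})$, which gives
\[
\Exp{f(A(p))}\;\ge\;\Exp{f(B\cup\{u\})}+(1-q_u)\bigl(f(\varnothing)-f(\{u\})\bigr)\;\ge\;(1-p)f(\{u\})+(1-p)\bigl(f(\varnothing)-f(\{u\})\bigr)\;=\;(1-p)\,f(\varnothing).
\]
In short, routing the induction through $f(B\cup\{u\})$ and comparing against the singleton marginal $f(\{u\})-f(\varnothing)$ is exactly what makes the bound lossless; your route through $f(B)$ and the top-of-lattice marginal $f(A)-f(A\setminus\{u\})$ cannot be.
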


In the statement of Corollary~\ref{cor:discardElements}, and in the rest of the paper, we denote by $\supp(x)$ the support of vector $x$, i.e., the set $\curly{u\in\groundset\mid x_u>0}$.

\begin{corollary}
\label{cor:discardElements}
	Let $\Func{f}{2^{\groundset}}{\nonnegative}$ be a non-negative submodular function whose multilinear extension is $F$, let $p$ be a number in the range $\range{0}{1}$ and let $x,y\in\cube{0}{1}{\groundset}$ be two vectors such that
\begin{compactitem}
	\item $\supp(x) \cap \supp(y) = \varnothing$,
	\item and	$y_u\leq p$ for every $u\in\groundset$.
\end{compactitem}
Then, $F(x+y)\geq(1-p)\cdot F(x)$.
\end{corollary}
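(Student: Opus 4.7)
The plan is to exploit the disjoint supports of $x$ and $y$ to decompose $\RFunc(x+y)$ into two independent pieces, and then to apply Lemma~\ref{lem:atMostP} to a ``shifted'' submodular function in order to handle the second piece. Since $\supp(x)\cap\supp(y)=\varnothing$, the random sets $\RFunc(x)$ and $\RFunc(y)$ are disjoint and independent, and $\RFunc(x+y)$ is distributed identically to $\RFunc(x)\cup\RFunc(y)$. Hence
\[
F(x+y)=\Exp{f\bigl(\RFunc(x)\cup\RFunc(y)\bigr)}=\Exp{\Exp{f\bigl(A\cup\RFunc(y)\bigr)\;\big|\;\RFunc(x)=A}}\enspace.
\]

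First, I would fix a realization $A\subseteq\supp(x)$ of $\RFunc(x)$ and define, on the ground set $T:=\supp(y)$, the set function $h_A\colon 2^T\to\nonnegative$ by $h_A(B):=f(A\cup B)$. Routine verification shows $h_A$ is non-negative and submodular (the latter is just submodularity of $f$ applied to supersets of $A$). Note also that $h_A(\varnothing)=f(A)$.

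Next, I would apply Lemma~\ref{lem:atMostP} to $h_A$, using the random set $\RFunc(y)\subseteq T$. Since $y_u=0$ for $u\notin T$, the set $\RFunc(y)$ is indeed contained in $T$, and each $u\in T$ belongs to it with probability exactly $y_u\leq p$. The lemma therefore yields
\[
\Exp{f\bigl(A\cup\RFunc(y)\bigr)}=\Exp{h_A\bigl(\RFunc(y)\bigr)}\geq(1-p)\cdot h_A(\varnothing)=(1-p)\cdot f(A)\enspace.
\]

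Finally, I would take expectation over the choice of $A=\RFunc(x)$, obtaining
\[
F(x+y)=\Exp{f\bigl(\RFunc(x)\cup\RFunc(y)\bigr)}\geq(1-p)\cdot\Exp{f\bigl(\RFunc(x)\bigr)}=(1-p)\cdot F(x)\enspace,
\]
which is the desired inequality. The only slightly subtle step is recognizing which submodular function to feed into Lemma~\ref{lem:atMostP}: the lemma produces a lower bound of $(1-p)f(\varnothing)$, and the ``shift'' $h_A(B)=f(A\cup B)$ is precisely what turns that bound into the $(1-p)f(A)$ that we need. Nothing else in the argument is substantive; the disjoint-support hypothesis is used only to justify the independent decomposition of $\RFunc(x+y)$, and the bound $y_u\leq p$ is used only to satisfy the hypothesis of Lemma~\ref{lem:atMostP}.
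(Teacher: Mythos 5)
Your proof is correct and is essentially the paper's own argument: both rely on the disjoint-support decomposition $\RFunc(x+y)\sim\RFunc(x)\cup\RFunc(y)$ and on Lemma~\ref{lem:atMostP} applied to a shifted function. The only cosmetic difference is that you condition on the realization $A$ of $\RFunc(x)$ and apply the lemma to each $h_A(B)=f(A\cup B)$ before averaging, whereas the paper applies the lemma once to the averaged function $G_x(S)=\Exp{f(\RFunc(x)\cup S)}$.
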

\begin{proof}
	Let us define the function $G_x(S)=\Exp{f(\RFunc(x)\cup S)}$. It is not difficult to verify that $G_x$ is non-negative and submodular, and that $G_x(\varnothing)=F(x)$. Additionally, since $\supp(x) \cap \supp(y) = \varnothing$, $\RFunc(x+y)$ has the same distribution as $\RFunc(x)\cup\RFunc(y)$, and therefore,
	\begin{align*}
		F(x+y)={}&\Exp{f\big(\RFunc(x + y))} = \Exp{f\big(\RFunc(x)\cup\RFunc(y)\big)}\\={}&\Exp{G_x(\RFunc(y))}\geq(1-p)\cdot{G_x(\varnothing)}=(1-p)\cdot{F(x)}\enspace,
	\end{align*}
	where the inequality follows from Lemma~\ref{lem:atMostP}.
\end{proof}

The other standard extension of submodular functions that we need is the \textit{Lov\'{a}sz extension}. The Lov\'{a}sz extension $\hat{f}$ of $f$ is defined for every vector $x\in\cube{0}{1}{\groundset}$ by
\begin{align*}
	\hat{f} (x) = \int_0^1{f\left(T_{\lambda}(x)\right)d\lambda}\enspace,
\end{align*}
where $T_{\lambda}(x)=\curly{u\in\groundset\mid{x_u}\geq\lambda}$
is the subset of $\groundset$ containing every element $u \in \groundset$ whose corresponding coordinate $x_u$ in the vector $x$
is at least $\lambda$. In this paper we use the Lov\'{a}sz extension only to lower bound values of the multilinear extension via the following known lemma.
\begin{lemma}[Lemma~A.4 from~\cite{DBLP:journals/siamcomp/Vondrak13}] \label{lem:lovasz-multilinear}
Let $F$ and $\hat{f}$ be the multilinear and Lov\'{a}sz extensions of a submodular function $f\colon 2^\groundset \to \realnum$, respectively. Then, $F(x)\geq\hat{f}(x)$ for every vector $x\in\cube{0}{1}{\groundset}$.
\end{lemma}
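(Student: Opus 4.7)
The plan is to prove the inequality by induction on $|\groundset|$, peeling off one coordinate at a time via the multilinearity of $F$. The base case $|\groundset| = 0$ is immediate since both extensions reduce to the constant $f(\varnothing)$.

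For the inductive step, I would fix any element $u \in \groundset$ and write $F(x) = x_u \cdot F(x^{(1)}) + (1 - x_u) \cdot F(x^{(0)})$ using multilinearity of $F$, where $x^{(c)}$ denotes $x$ with coordinate $u$ replaced by $c$. The quantities $F(x^{(1)})$ and $F(x^{(0)})$ are precisely the multilinear extensions (on the ground set $\groundset \setminus \{u\}$) of the two non-negative submodular functions $S \mapsto f(S \cup \{u\})$ and $S \mapsto f(S)$, so the induction hypothesis applies to both. This reduces the claim to an inequality purely between one-dimensional integrals,
\[
x_u \int_0^1 f\bigl(T_\lambda(x_{-u}) \cup \{u\}\bigr)\, d\lambda + (1-x_u) \int_0^1 f\bigl(T_\lambda(x_{-u})\bigr)\, d\lambda \geq \int_0^1 f\bigl(T_\lambda(x)\bigr)\, d\lambda\enspace,
\]
where $x_{-u}$ denotes the restriction of $x$ to $\groundset \setminus \{u\}$.

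To handle the right-hand side, I would split the integral at $\lambda = x_u$ using the fact that $u \in T_\lambda(x)$ if and only if $\lambda \leq x_u$ (so $T_\lambda(x) = T_\lambda(x_{-u}) \cup \{u\}$ for $\lambda \leq x_u$ and $T_\lambda(x) = T_\lambda(x_{-u})$ otherwise). Setting $\Delta(\lambda) := f(T_\lambda(x_{-u}) \cup \{u\}) - f(T_\lambda(x_{-u}))$ and rearranging, the required inequality collapses to
\[
x_u \int_{x_u}^1 \Delta(\lambda)\, d\lambda \geq (1 - x_u) \int_0^{x_u} \Delta(\lambda)\, d\lambda\enspace.
\]

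The main obstacle, and the sole place where submodularity of $f$ enters the argument, is establishing that $\Delta(\lambda)$ is monotone non-decreasing in $\lambda$. This follows because as $\lambda$ grows the set $T_\lambda(x_{-u})$ can only shrink, and submodularity guarantees that the marginal contribution of $u$ is at least as large when added to a smaller set. Once monotonicity of $\Delta$ is in hand, the required inequality follows from the standard averaging bound, since both $\frac{1}{x_u} \int_0^{x_u} \Delta(\lambda)\, d\lambda$ and $\frac{1}{1 - x_u} \int_{x_u}^1 \Delta(\lambda)\, d\lambda$ sandwich $\Delta(x_u)$ from below and above, respectively. (The degenerate cases $x_u \in \{0,1\}$ are handled separately, as one of the two sub-integrals is then vacuous and the inductive claim specializes directly.)
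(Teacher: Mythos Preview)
The paper does not prove this lemma; it is quoted verbatim as Lemma~A.4 from Vondr\'ak's paper and used as a black box, so there is no in-paper proof to compare against.

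Your inductive argument is correct. One small slip: you call the two restricted functions $S \mapsto f(S \cup \{u\})$ and $S \mapsto f(S)$ ``non-negative submodular,'' but the lemma as stated places no sign assumption on $f$, and neither does your argument need one---the induction hypothesis requires only submodularity, which both restrictions inherit. Otherwise the reduction to the one-dimensional inequality is accurate, and the monotonicity of $\Delta(\lambda)$ together with the averaging bound $\frac{1}{x_u}\int_0^{x_u}\Delta \le \Delta(x_u) \le \frac{1}{1-x_u}\int_{x_u}^1 \Delta$ closes it cleanly.

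For context, the argument in Vondr\'ak's appendix is organized differently: rather than inducting on coordinates, it writes $\hat f(x)$ as a convex combination $\sum_i \lambda_i f(S_i)$ over a chain $S_0 \subseteq S_1 \subseteq \cdots$ with $\sum_i \lambda_i \onevect{S_i} = x$, and then uses that $F$ is concave along non-negative directions (a consequence of $\partial^2_{uv}F \le 0$ for submodular $f$) to obtain $F(x) = F\bigl(\sum_i \lambda_i \onevect{S_i}\bigr) \ge \sum_i \lambda_i F(\onevect{S_i}) = \hat f(x)$. Your approach trades that global concavity observation for an elementary coordinate-by-coordinate reduction; both are short, and yours has the mild advantage of being entirely self-contained without appealing to second-derivative properties of $F$.
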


We conclude this section by describing some additional notation that we use in the rest of the paper.  Given a set $A \subseteq \groundset$ and element $u \in \groundset$, we denote by \onevect{A} and \onevect{u} the characteristic vectors of $A$ and $\curly{u}$, respectively. Additionally, given two vectors $x,y \in [0, 1]^\groundset$, we use $x \vee y$ and $x \wedge y$ to denote the coordinate-wise maximum and minimum of $x$ and $y$, respectively (i.e., for every element $u \in \groundset$, $(x \vee y)_u = \max\{x_u, y_u\}$ and $(x \wedge y)_u = \min\{x_u, y_u\}$). We also use the shorthand $\partial_uF(x)$ for the first partial derivative
$\frac{\partial{F(x)}}{\partial{x_u}}$ of the multilinear extension $F$. Note that, since $F$ is multilinear,
\[
	\partial_uF(x) = F(x\vee\onevect{u})-F\big(x\wedge\onevect{\groundset\setminus\curly{u}}\big)
	\enspace.
\]
Finally, we denote by $OPT$ an arbitrary optimal solution for our problem, i.e., a subset of $\groundset$ of size at most $k$ maximizing $f$ among all such subsets.
\section{Our Algorithm} \label{sec:algorithm}
In this section, we introduce a simplified version of the algorithm we use to prove Theorems~\ref{trm:non-polynomial} and~\ref{trm:polynomial}. This simplified version (given as Algorithm~\ref{alg:main_algorithm}) captures our main new ideas, but avoids some technical issues that can be solved using standard techniques. In particular, Algorithm~\ref{alg:main_algorithm} assumes access to an estimate $\tau$ of $f(OPT)$ obeying $(1 - O(\varepsilon)) \cdot f(OPT) \leq \tau \leq f(OPT)$. Such an estimate can be produced, at the cost of increasing the space complexity of the algorithm by a factor of $O(\varepsilon^{-1} \log k)$, using a technique introduced by~\cite{DBLP:conf/kdd/BadanidiyuruMKK14}, and we defer the details to Appendix~\ref{apx:estimatingTau}.

Algorithm~\ref{alg:main_algorithm} gets two constant parameters $p \in (0, 1)$ and $c > 0$. The algorithm maintains a fractional solution $x \in [0, 1]^\groundset$. This fractional solution starts empty, and the algorithm adds to it fractions of elements as they arrive. Specifically, when an element $u$ arrives, the algorithm considers its marginal contribution with respect to the current fractional solution $x$. If this marginal contribution exceeds the threshold of $c\tau / k$, then the algorithms tries to add to $x$ a $p$-fraction of $u$, but might end up adding a smaller fraction of $u$ if adding a full $p$-fraction of $u$ to $x$ will make $x$ an infeasible solution, i.e., make $\|x\|_1 > k$ (note that $\|x\|_1$ is the sum of the coordinates of $x$).

After viewing all the elements, Algorithm~\ref{alg:main_algorithm} uses the fractional solution $x$ to generate two sets $S_1$ and $S_2$ that are feasible (integral) solutions. The set $S_1$ is generated by rounding the fractional solution $x$. Two rounding procedures, named Pipage Rounding and Swap Rounding, were suggested for this task in the literature~\cite{DBLP:journals/siamcomp/CalinescuCPV11,DBLP:conf/focs/ChekuriVZ10}. Both procedures run in polynomial time and guarantee that the output set $S_1$ of the rounding is always feasible, and that its expected value with respect to $f$ is at least the value $F(x)$ of the fractional solution $x$. The set $S_2$ is generated by finding a subset of the support of the vector $x$ that (approximately) maximizes $f$ among all such subsets of size at most $k$. One can observe that to get $S_2$ one must (approximately) solve the offline version of the problem of maximizing a non-negative submodular function subject to a cardinality constraint. In the pseudocode of Algorithm~\ref{alg:main_algorithm} we denote by $\alpha$ the inverse of the approximation ratio of the algorithm used to solve this problem and produce $S_2$. The value that we can assume for $\alpha$ depends on whether we want Algorithm~\ref{alg:main_algorithm} to run in polynomial time. If that is not required, then $\alpha$ can be assumed to be $1$ because the offline problem can be optimally solved using a brute force search. In contrast, $\alpha$ must have a smaller value if Algorithm~\ref{alg:main_algorithm} should run in polynomial time, and we discuss at a later point the value that can be assumed for $\alpha$ in this case. After computing the two feasible solutions $S_1$ and $S_2$, Algorithm~\ref{alg:main_algorithm} simply returns the better one of them.

\IncMargin{1em}
\begin{algorithm}[ht]
\label{alg:main_algorithm}
\caption{\texttt{Multilinear Threshold} $(p,c)$}
	\DontPrintSemicolon
	$x\gets\onevect{\varnothing}$.\;
	\For {each arriving element $u$} {
		\lIf {$\partial_uF(x)\geq\frac{c\tau}{k}$} {
			$x\gets{x}+\min\curly{p,k-\onenorm{x}}\cdot\onevect{u}$.
		}
	}
	Round the vector $x$ to yield a feasible solution $S_1$ such that $\Exp{f(S_1)} \geq {F(x)}$.\\
	Find a feasible solution $S_2\subseteq\supp(x)$ such that $\Exp{f(S_2)} \geq \alpha \cdot \arg\max_{S \subseteq \supp(x), |S| \leq k} f(S)$.\\
	\Return the better solution among $S_1$ and $S_2$.
\end{algorithm}\DecMargin{1em}

Let us denote by $\hat{x}$ the final value of the fractional solution $x$ (i.e., its value when the stream ends). We begin the analysis of Algorithm~\ref{alg:main_algorithm} with the following useful observation.
\begin{observation} \label{obs:x_type}
If $\|\hat{x}\|_1 < k$, then $\hat{x}_u = p$ for every $u \in \supp(\hat{x})$. Otherwise, this is still true for every element $u \in \supp(\hat{x})$ except for maybe a single element. 
\end{observation}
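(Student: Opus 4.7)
My plan is to track the trajectory of the partial norm $\|x\|_1$ as elements stream in, and pinpoint every moment at which the algorithm inserts a fraction strictly smaller than $p$. The crucial structural fact is that every element is examined at most once, and when the update rule fires for an element $u$ the inserted amount equals exactly $\min\{p, k - \|x\|_1\}$, with $\|x\|_1$ evaluated right before the insertion. Thus the inserted fraction can be smaller than $p$ only when $k - \|x\|_1 < p$, and in that case the insertion drives $\|x\|_1$ up to exactly $k$.

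Next I would use that $\|x\|_1$ is monotonically non-decreasing throughout the execution (fractions are only added, never removed) to conclude that once $\|x\|_1 = k$, this value is pinned at $k$ for the remainder of the stream. Hence, for any element arriving after this event, the quantity $\min\{p, k - \|x\|_1\}$ is $0$, so no subsequent element joins $\supp(x)$, regardless of whether it passes the threshold test. In particular, over the whole stream at most one insertion can occur with a sub-$p$ fraction: the one that first raises $\|x\|_1$ to $k$, if such a moment ever occurs.

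These two facts give the observation directly. If $\|\hat{x}\|_1 < k$, then $\|x\|_1$ never reached $k$, so every insertion used a full $p$-fraction, and $\hat{x}_u = p$ holds for every $u \in \supp(\hat{x})$. If $\|\hat{x}\|_1 = k$, the same conclusion applies to every element of $\supp(\hat{x})$ apart from the unique ``capping'' element that first pushed $\|x\|_1$ to $k$. I do not foresee a genuine obstacle here; the argument is pure bookkeeping on the update rule. The only minor subtlety worth stating is that a zero-fraction assignment leaves the support unchanged, so elements arriving after $\|x\|_1$ hits $k$ are irrelevant to the analysis of $\supp(\hat{x})$.
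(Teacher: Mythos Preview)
Your proposal is correct and follows essentially the same approach as the paper's proof: both argue that a sub-$p$ insertion can only occur when it drives $\|x\|_1$ exactly to $k$, after which the algorithm stops modifying $x$, so at most one such insertion is possible. Your version spells out a few details more explicitly (monotonicity of $\|x\|_1$, the zero-fraction case for post-capping arrivals), but the underlying argument is the same bookkeeping on the update rule.
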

\begin{proof}
For every element $u$ added to the support of $x$ by Algorithm~\ref{alg:main_algorithm}, the algorithm sets $x_u$ to $p$ unless this will make $\|x\|_1$ exceed $k$, in which case the algorithm set $x_u$ to be the value that will make $\|x\|_1$ equal to $k$. Thus, after a single coordinate of $x$ is set to a value other than $p$ (or the initial $0$), $\|x\|_1$ becomes $k$ and the Algorithm~\ref{alg:main_algorithm} stops changing $x$.
\end{proof}

Using the last observation we can now bound the space complexity of Algorithm~\ref{alg:main_algorithm}, and show (in particular) that it is a semi-streaming algorithm for a constant $p$.
\begin{observation}
Assuming it takes $O(1)$ space to store an element of $\groundset$ and a value returned by $F$, Algorithm~\ref{alg:main_algorithm} can be implemented so that it stores at most $O(k/p)$ elements and its space complexity is $\tilde{O}(k/p)$, excluding the space complexity required by the algorithm for computing $S_2$.
\end{observation}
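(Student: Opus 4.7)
The plan is to use Observation~\ref{obs:x_type} as the sole structural input and reduce everything else to routine bookkeeping. Since $\|x\|_1 \leq k$ is maintained throughout the stream and Observation~\ref{obs:x_type} forces every coordinate in $\supp(x)$ to equal $p$ with at most one exception, we obtain $|\supp(x)| \leq \lceil k/p \rceil + 1 = O(k/p)$ at every point during the execution. Because an element of $\groundset$ takes $O(1)$ space by assumption, the list of elements in $\supp(x)$ fits in $O(k/p)$ space, immediately giving the ``stores at most $O(k/p)$ elements'' part of the claim.

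For the $\tilde{O}(k/p)$ space bound, I would next describe an efficient in-memory representation of $x$: maintain only the list $\supp(x)$ together with one optional field recording the identity of the exceptional element and its coordinate value, with all other coordinates in $\supp(x)$ implicitly known to equal $p$. For each arriving element $u$, computing $\partial_u F(x)$ can be done from the $f$-oracle using the multilinearity formula $\partial_u F(x) = F(x \vee \onevect{u}) - F(x \wedge \onevect{\groundset \setminus \{u\}})$ noted earlier: it suffices to evaluate $F$ at two points, each of which has support contained in $\supp(x) \cup \{u\}$, and this can be done either by enumeration over subsets of this set or by Monte Carlo sampling of $\RFunc(\cdot)$. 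In either case the working storage is a single subset/sample drawn from a universe of size $O(k/p)$, plus $O(1)$-space accumulators for the $F$-values returned by the oracle. After the stream ends, the rounding procedure producing $S_1$ (Pipage or Swap Rounding) again operates on a vector of support size $O(k/p)$ and runs within the same space budget, while the space used to construct $S_2$ is explicitly excluded from the claim.

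The main ``obstacle,'' such as it is, is simply verifying that the evaluation of $F$ and the randomized rounding used to produce $S_1$ can in fact be implemented within $O(k/p)$ working space using only an $f$-oracle; this is essentially immediate once one observes that only coordinates of $x$ inside $\supp(x)$ ever matter, so every intermediate subset or sample lives in a universe of size $O(k/p)$. The polylogarithmic slack hidden inside $\tilde{O}(\cdot)$ absorbs the bit-lengths of element indices and of the single exceptional coordinate value.
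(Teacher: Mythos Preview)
Your proposal is correct and follows essentially the same approach as the paper's proof: bound $|\supp(x)|$ via Observation~\ref{obs:x_type}, then represent $x$ by listing its support (all entries implicitly equal to $p$) plus one exceptional coordinate, with the $\tilde{O}$ absorbing the $O(\log |\groundset|)$ cost of storing element indices. One minor note: the observation's hypothesis that ``a value returned by $F$'' takes $O(1)$ space is tantamount to assuming direct oracle access to $F$, so your discussion of evaluating $F$ from an $f$-oracle via enumeration or sampling is unnecessary here (though not incorrect, and indeed the paper addresses exactly this issue separately in Appendix~\ref{apx:sampling}).
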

\begin{proof}
To calculate the sets $S_1$ and $S_2$, Algorithm~\ref{alg:main_algorithm} needs access only to the elements of $\groundset$ that appear in the support of $x$. Thus, the number of elements it needs to store is $O(|\supp(x)|) = O(k/p)$, where the equality follows from Observation~\ref{obs:x_type}.

Since each one of the sets $S_1$ and $S_2$ contains at most $k$ elements, they require $O(k)$ space. In addition, Algorithm~\ref{alg:main_algorithm} needs to store the vector $x$. It is possible to store the coordinates of $x$ taking the value of $p$ by storing their indices, which requires $O(\log |\groundset|)$ space for each coordinate and $O(|\supp(x)|) \cdot O(\log |\groundset|) = \tilde{O}(k/p)$ space in total. In addition to these coordinates, the vector $x$ might include a single non-zero coordinate taking the value of $k - p\lfloor k/p \rfloor$, storing the index and value of this coordinate require $O(\log |\groundset| + \log k) = \tilde{O}(1)$ space.
\end{proof}

We now divert our attention to analyzing the approximation ratio of Algorithm~\ref{alg:main_algorithm}. The first step in this analysis is lower bounding the value of $F(x)$, which we do by considering two cases, one when $\onenorm{\hat{x}}=k$, and the other when $\onenorm{\hat{x}}<k$. The following lemma bounds the value of $F(\hat{x})$ in the first of these cases.
\begin{lemma}
\label{lem:Fx-exactlyK}
	If $\onenorm{\hat{x}}=k$, then $F(\hat{x})\geq c\tau$.
\end{lemma}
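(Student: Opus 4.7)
The plan is to track the value of $F(x)$ throughout the run of the algorithm, showing that each fractional addition increases $F$ by at least a proportional amount of the threshold $c\tau/k$. Since the stream visits each element exactly once, whenever the algorithm updates $x \gets x + \delta_u \cdot \onevect{u}$ (with $\delta_u = \min\{p, k - \onenorm{x}\}$), the element $u$ is outside $\supp(x)$ before the update. Hence $x_u = 0$ just before the update, and by multilinearity of $F$ in the coordinate $x_u$, the change equals exactly
\[
F(x + \delta_u \cdot \onevect{u}) - F(x) = \delta_u \cdot \partial_u F(x)\enspace,
\]
where $\partial_u F(x)$ is evaluated at the state $x$ just before the update.

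Next I would use the fact that the algorithm performs this update only when the threshold condition $\partial_u F(x) \geq c\tau/k$ holds. Therefore, each update contributes at least $\delta_u \cdot c\tau/k$ to $F$. Let $u_1, u_2, \dots, u_m$ denote the elements processed in this way, with corresponding increments $\delta_{u_1}, \dots, \delta_{u_m}$. Telescoping from the initial state $x = \onevect{\varnothing}$ to the final state $\hat{x}$, I get
\[
F(\hat{x}) - F(\onevect{\varnothing}) = \sum_{i=1}^{m} \delta_{u_i} \cdot \partial_{u_i} F(x^{(i-1)}) \geq \frac{c\tau}{k}\sum_{i=1}^{m} \delta_{u_i} = \frac{c\tau}{k}\cdot\onenorm{\hat{x}}\enspace,
\]
where $x^{(i-1)}$ denotes the value of $x$ right before the $i$-th update. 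Since $f$ is non-negative, $F(\onevect{\varnothing}) = f(\varnothing) \geq 0$, so $F(\hat{x}) \geq (c\tau/k) \cdot \onenorm{\hat{x}}$. Plugging in the assumption $\onenorm{\hat{x}} = k$ yields the claimed bound $F(\hat{x}) \geq c\tau$.

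I do not anticipate a genuine obstacle here; the only point that deserves care is the justification that each increment of $F$ equals $\delta_u \cdot \partial_u F(x)$ \emph{exactly} (not just as a lower bound), which hinges on two facts: multilinearity of $F$, and the fact that the updated coordinate was previously $0$ because the element just arrived from the stream and has not been touched before. Once these are in place, the argument is a straightforward telescoping sum that uses only the threshold rule and the non-negativity of $f$; no appeal to submodularity beyond what is already encoded in $F$ being the multilinear extension is required.
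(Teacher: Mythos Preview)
Your proposal is correct and follows essentially the same approach as the paper: both arguments telescope $F(\hat{x}) - F(\onevect{\varnothing})$ over the sequence of accepted elements, use multilinearity to write each increment as $\delta_{u_i}\cdot\partial_{u_i}F(x^{(i-1)})$, apply the threshold condition to lower bound each derivative by $c\tau/k$, and conclude via non-negativity of $f$ and the hypothesis $\onenorm{\hat{x}}=k$. The only cosmetic difference is that the paper denotes the pre-update state as $\hat{x}\wedge\onevect{\{u_1,\dots,u_{i-1}\}}$ where you write $x^{(i-1)}$.
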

\begin{proof}
	Denote by $u_1,u_2,\dots,u_\ell$ the elements that Algorithm~\ref{alg:main_algorithm} selects, in the order of their arrival. Using this notation, the value of $F(\hat{x})$ can be written as follows.
	\begin{align*}
		F(\hat{x})&=F(\onevect{\varnothing})+\sum_{i=1}^{\ell}{\Big(F\big(\hat{x}\wedge\onevect{\curly{u_1,u_2,\dotsc,u_i}}\big)-F\big(\hat{x}\wedge\onevect{\curly{u_1,u_2,\dotsc,u_{i-1}}}\big)\Big)}\\
		&=F(\onevect{\varnothing})+\sum_{i=1}^{\ell}{\Big(\hat{x}_{u_i}\cdot\partial_{u_i}F\big(\hat{x}\wedge\onevect{\curly{u_1,u_2,\dotsc,u_{i-1}}}\big)\Big)}\\
		&\geq{F(\onevect{\varnothing})}+\frac{c\tau}{k}\cdot\sum_{i=1}^{\ell}{\hat{x}_{u_i}}=F(\onevect{\varnothing})+\frac{c\tau}{k}\cdot\onenorm{\hat{x}}\geq c\tau\enspace,
	\end{align*}
where the second equality follows from the multilinearity of $F$, and the first inequality holds since Algorithm~\ref{alg:main_algorithm} selects an element $u_i$ only when $\partial_{u_i}F\big(\hat{x}\wedge\onevect{\curly{u_1,u_2,\dotsc,u_{i-1}}}\big)\geq\frac{c\tau}{k}$. The last inequality holds since $f$ (and thus, also $F$) is non-negative and $\|x\|_1 = k$ by the assumption of the lemma.
\end{proof}

We now consider the case in which $\onenorm{\hat{x}}<k$. Recall that our objective is to lower bound $F(\hat{x})$ in this case as well. Towards this goal, we bound the expression $F(\hat{x}+\onevect{OPT\setminus\supp(\hat{x})})$ from below and above in the next two lemmata.

\begin{lemma}
\label{lem:belowK_lower}
	If $\onenorm{\hat{x}}<k$, then $F\big(\hat{x}+\onevect{OPT\setminus\supp(\hat{x})}\big)\geq
	(1-p)\cdot\big[p\cdot f(OPT)+(1-p)\cdot f\big(OPT\setminus\supp(\hat{x})\big)\big]$.
\end{lemma}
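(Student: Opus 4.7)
The plan is to decompose $\hat{x}+\mathbf{1}_{OPT\setminus\supp(\hat{x})}$ into two pieces: one supported on $OPT$ (call it $w$) and one supported on $\supp(\hat{x})\setminus OPT$, and then handle the two pieces with the two tools the excerpt has already laid out (Corollary~\ref{cor:discardElements} and Lemma~\ref{lem:lovasz-multilinear}). Concretely, since $\|\hat{x}\|_1<k$, Observation~\ref{obs:x_type} tells us that $\hat{x}_u=p$ for every $u\in\supp(\hat{x})$, so we can write
\[
	\hat{x}+\mathbf{1}_{OPT\setminus\supp(\hat{x})}
	= w + p\cdot\mathbf{1}_{\supp(\hat{x})\setminus OPT}
	\qquad\text{where } w \;=\; \mathbf{1}_{OPT\setminus\supp(\hat{x})} + p\cdot\mathbf{1}_{OPT\cap\supp(\hat{x})}.
\]

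The first step is to peel off the $p\cdot\mathbf{1}_{\supp(\hat{x})\setminus OPT}$ part. Its support is disjoint from $\supp(w)\subseteq OPT$, and all of its coordinates equal $p$, so Corollary~\ref{cor:discardElements} applies directly and yields $F(\hat{x}+\mathbf{1}_{OPT\setminus\supp(\hat{x})})\geq (1-p)\cdot F(w)$. This is where the leading factor of $(1-p)$ in the claim comes from.

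The second step is to lower-bound $F(w)$ by its Lov\'asz extension via Lemma~\ref{lem:lovasz-multilinear} and then compute $\hat{f}(w)$ explicitly. The coordinates of $w$ take only two positive values: they equal $1$ on $OPT\setminus\supp(\hat{x})$ and $p$ on $OPT\cap\supp(\hat{x})$. Hence the level sets are
\[
	T_\lambda(w)=\begin{cases} OPT & \text{if } 0<\lambda\leq p,\\ OPT\setminus\supp(\hat{x}) & \text{if } p<\lambda\leq 1,\end{cases}
\]
so $\hat{f}(w)=\int_0^1 f(T_\lambda(w))\,d\lambda = p\cdot f(OPT)+(1-p)\cdot f(OPT\setminus\supp(\hat{x}))$.

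Combining the two steps gives the lemma. There is no real obstacle here beyond spotting the right decomposition: the only subtle point is making sure that the part being peeled off has coordinates at most $p$ (which is exactly why Observation~\ref{obs:x_type}, and hence the hypothesis $\|\hat{x}\|_1<k$, is needed) and that its support is disjoint from the rest (which follows automatically from how we split according to membership in $OPT$).
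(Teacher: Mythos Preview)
Your proof is correct and follows essentially the same approach as the paper: decompose $\hat{x}+\mathbf{1}_{OPT\setminus\supp(\hat{x})}$ using Observation~\ref{obs:x_type}, apply Corollary~\ref{cor:discardElements} to strip off the $p\cdot\mathbf{1}_{\supp(\hat{x})\setminus OPT}$ piece, and then lower-bound the remaining $F(w)$ via the Lov\'asz extension (Lemma~\ref{lem:lovasz-multilinear}) and an explicit level-set computation. The decomposition, the tools used, and the order of their application all match the paper's proof.
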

\begin{proof}
	Since $\onenorm{\hat{x}}<k$, Observation~\ref{obs:x_type} guarantees that $x_u = p$ for every $u\in\supp(\hat{x})$, Thus $\hat{x}=p\cdot\onevect{OPT\cap\supp(\hat{x})}+p\cdot\onevect{\supp(\hat{x})\setminus OPT}$, and therefore,
	\begin{align*}
		F\big(\hat{x}+\onevect{OPT\setminus\supp(\hat{x})}\big)&=
		F\big(p\cdot\onevect{OPT\cap\supp(\hat{x})}
		+p\cdot\onevect{\supp(\hat{x})\setminus OPT}
		+\onevect{OPT\setminus\supp(\hat{x})}\big)\\
		&\geq (1-p)\cdot F\big(p\cdot\onevect{OPT\cap\supp(\hat{x})}+\onevect{OPT\setminus\supp(\hat{x})}\big)\\
		&\geq (1-p)\cdot \hat{f}\big(p\cdot\onevect{OPT\cap\supp(\hat{x})}+\onevect{OPT\setminus\supp(\hat{x})}\big)\\
		&=(1-p)\cdot\Big[p\cdot f(OPT)+(1-p)\cdot f\big(OPT\setminus\supp(\hat{x})\big)\Big]\enspace,
	\end{align*}
	where the first inequality follows from Corollary~\ref{cor:discardElements}, the second inequality hold
	since the Lov\'{a}sz extension lower bounds the multilinear extension (Lemma~\ref{lem:lovasz-multilinear}), and the last equality follows from the definition of the Lov\'{a}sz extension.
\end{proof}

\begin{lemma}
\label{lem:belowK_upper}
	If $\onenorm{\hat{x}}<k$, then $F\big(\hat{x}+\onevect{OPT\setminus\supp(\hat{x})}\big)\leq F(\hat{x})+c\tau$.
\end{lemma}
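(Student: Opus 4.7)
The plan is to express $F\big(\hat{x} + \onevect{OPT \setminus \supp(\hat{x})}\big) - F(\hat{x})$ as a telescoping sum of partial derivatives and to bound each term using the algorithm's threshold rule, combined with the standard monotonicity property of partial derivatives of $F$ that follows from submodularity of $f$.

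First I would enumerate $OPT \setminus \supp(\hat{x})$ as $\curly{v_1, \ldots, v_m}$ in an arbitrary order and define $y^{(i)} = \hat{x} + \onevect{\curly{v_1, \ldots, v_i}}$ for $0 \leq i \leq m$. Since $v_i \notin \supp(\hat{x})$ and $v_i \notin \curly{v_1, \ldots, v_{i-1}}$, the $v_i$-coordinate of $y^{(i-1)}$ is $0$, so the multilinearity of $F$ gives $F(y^{(i)}) - F(y^{(i-1)}) = \partial_{v_i} F(y^{(i-1)})$. Telescoping then yields $F\big(\hat{x} + \onevect{OPT \setminus \supp(\hat{x})}\big) - F(\hat{x}) = \sum_{i=1}^m \partial_{v_i} F(y^{(i-1)})$.

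Next I would bound each summand individually. For each $v_i$, let $x^{(v_i)}$ denote the value of $x$ when the algorithm processed $v_i$. The hypothesis $\onenorm{\hat{x}} < k$ guarantees that $\onenorm{x} < k$ throughout the stream, so the step size $\min\curly{p, k - \onenorm{x}}$ is always strictly positive; hence the only way $v_i$ can fail to enter $\supp(\hat{x})$ is that its threshold test failed upon arrival, i.e., $\partial_{v_i} F(x^{(v_i)}) < c\tau/k$. Moreover, $x$ only grows over time, so $x^{(v_i)} \leq \hat{x} \leq y^{(i-1)}$ coordinate-wise, with all three vectors agreeing (at $0$) on the $v_i$-coordinate. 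Because $f$ is submodular, $\partial_{v_i} F$ is non-increasing in every coordinate other than $v_i$, and therefore $\partial_{v_i} F(y^{(i-1)}) \leq \partial_{v_i} F(x^{(v_i)}) < c\tau/k$. Summing the resulting bound over the $m \leq |OPT| \leq k$ indices gives the desired $c\tau$.

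There is no real obstacle here. The one mild point to be careful about is the translation of $v_i \notin \supp(\hat{x})$ into a statement about the threshold test, which is precisely where the assumption $\onenorm{\hat{x}} < k$ enters: without it, an element could have been skipped merely because the budget was already exhausted rather than because its marginal fell below the threshold.
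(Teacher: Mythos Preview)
Your proof is correct and follows essentially the same approach as the paper's: both use that every $u \in OPT \setminus \supp(\hat{x})$ failed the threshold test (which, as you note, is where the hypothesis $\onenorm{\hat{x}} < k$ is used), then invoke the monotonicity of $\partial_u F$ in the remaining coordinates to push the bound from the arrival-time vector up to $\hat{x}$ (or, in your version, to $y^{(i-1)}$), and finally sum over at most $k$ terms. The only cosmetic difference is that the paper bounds $\sum_u \partial_u F(\hat{x})$ and appeals to submodularity once more for the inequality $F(\hat{x}+\onevect{OPT\setminus\supp(\hat{x})}) \leq F(\hat{x}) + \sum_u \partial_u F(\hat{x})$, whereas you unfold that step as an explicit telescope; the content is the same.
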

\begin{proof}
	The elements in $OPT\setminus\supp(\hat{x})$ were rejected by Algorithm~\ref{alg:main_algorithm}, which means that their marginal contribution with respect to the fractional solution $x$ at the time of their arrival was smaller than $c\tau/k$. Since the fractional solution $x$ only increases during the execution of the algorithm, the submodularity of $f$ guarantees that this is true also with respect to $\hat{x}$. More formally, we get
\[
	\partial_uF(\hat{x})<\frac{c\tau}{k}
	\quad \forall\; u\in  OPT \setminus \supp(\hat{x})
	\enspace.
\]
Using the submodularity of $f$ again, this implies
	\[
		F\big(\hat{x}+\onevect{OPT\setminus\supp(\hat{x})}\big)
			\leq F(\hat{x})+\mspace{-40mu}\sum_{u\in OPT\setminus\supp(\hat{x})}
			{\mspace{-40mu}\partial_uF(\hat{x})}
			\leq F(\hat{x})+|OPT \setminus \supp(\hat{x})| \cdot \frac{c\tau}{k}
			\leq F(\hat{x})+c\tau\enspace.		\qedhere
	\]
\end{proof}
Combining the last two lemmata immediately yields the promised lower bound on $F(\hat{x})$.
\begin{corollary}
\label{cor:belowK}
	If $\onenorm{\hat{x}}<k$, then $F(\hat{x})\geq
	(1-p)\cdot\Big[p\cdot f(OPT)+(1-p)\cdot f\big(OPT\setminus\supp(\hat{x})\big)\Big]-c\tau$.
\end{corollary}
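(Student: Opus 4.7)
The plan is essentially mechanical: the corollary is a direct combination of the two preceding lemmata. First I would note that both Lemma~\ref{lem:belowK_lower} and Lemma~\ref{lem:belowK_upper} assume the same hypothesis $\onenorm{\hat{x}}<k$, so under this hypothesis both inequalities are available simultaneously and refer to the same quantity $F\big(\hat{x}+\onevect{OPT\setminus\supp(\hat{x})}\big)$.

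The main step is to chain the two bounds through this common quantity. From Lemma~\ref{lem:belowK_upper} we obtain
\[
F(\hat{x}) + c\tau \;\geq\; F\big(\hat{x}+\onevect{OPT\setminus\supp(\hat{x})}\big),
\]
and from Lemma~\ref{lem:belowK_lower} the right-hand side is at least
\[
(1-p)\cdot\Big[p\cdot f(OPT)+(1-p)\cdot f\big(OPT\setminus\supp(\hat{x})\big)\Big].
\]
Subtracting $c\tau$ from both sides of the resulting chain yields exactly the claimed bound on $F(\hat{x})$.

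There is no real obstacle here; the lemmata were set up precisely so that the lower bound on $F\big(\hat{x}+\onevect{OPT\setminus\supp(\hat{x})}\big)$ from Lemma~\ref{lem:belowK_lower} can be transferred to $F(\hat{x})$ at the additive cost of $c\tau$ coming from Lemma~\ref{lem:belowK_upper}. The only thing to be careful about is to state the hypothesis $\onenorm{\hat{x}}<k$ explicitly so that both lemmata apply, and to make sure the direction of the inequalities is not mixed up when rearranging.
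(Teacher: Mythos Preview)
Your proposal is correct and matches the paper's approach exactly: the paper simply states that the corollary follows immediately by combining Lemmata~\ref{lem:belowK_lower} and~\ref{lem:belowK_upper}, which is precisely the chaining you describe.
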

Now that we have lower bounds on $F(\hat{x})$ for both cases, we can use them to get an expression for the approximation ratio of Algorithm~\ref{alg:main_algorithm}.

\begin{lemma}
\label{lem:bound-solution}
If $\tau \leq f(OPT)$, then
	$\Exp{\max\{f(S_1), f(S_2)\}}\geq\tau\cdot\min\curly{c,\frac{\alpha(1-p-c)}{\alpha+(1-p)^2}}$. In particular, for $c = \frac{\alpha(1-p)}{2\alpha+(1-p)^2}$ we get $\Exp{\max\{f(S_1), f(S_2)\}}\geq \frac{\alpha\tau(1-p)}{2\alpha+(1-p)^2}$
\end{lemma}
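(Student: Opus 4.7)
The plan is to split into the two cases suggested by Observation 2.3, get a bound on $\Exp{\max\{f(S_1),f(S_2)\}}$ in each case, and then combine. Throughout, the starting point is that the rounding guarantees $\Exp{f(S_1)}\geq F(\hat x)$, and that since $OPT\cap\supp(\hat x)$ is itself a candidate for $S_2$, the offline routine yields $\Exp{f(S_2)}\geq\alpha\cdot f(OPT\cap\supp(\hat x))$.

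In the case $\onenorm{\hat x}=k$, Lemma~\ref{lem:Fx-exactlyK} immediately gives $\Exp{f(S_1)}\geq F(\hat x)\geq c\tau$, which dominates the first entry of the min. The case $\onenorm{\hat x}<k$ is where the work happens. Here Corollary~\ref{cor:belowK} gives
\[
	\Exp{f(S_1)}\geq F(\hat x)\geq (1-p)\bigl[p\cdot f(OPT)+(1-p)\cdot f(OPT\setminus\supp(\hat x))\bigr]-c\tau,
\]
while the $S_2$ bound should be rewritten in terms of $f(OPT)$ and $f(OPT\setminus\supp(\hat x))$ by invoking the submodular inequality $f(A)+f(B)\geq f(A\cup B)+f(A\cap B)$ with $A=OPT\cap\supp(\hat x)$ and $B=OPT\setminus\supp(\hat x)$, which (using non-negativity of $f$) yields $f(OPT\cap\supp(\hat x))\geq f(OPT)-f(OPT\setminus\supp(\hat x))$, hence $\Exp{f(S_2)}\geq\alpha\bigl[f(OPT)-f(OPT\setminus\supp(\hat x))\bigr]$.

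The key step is then to take a convex combination of these two lower bounds, which is valid because $\Exp{\max\{f(S_1),f(S_2)\}}\geq\beta\Exp{f(S_1)}+(1-\beta)\Exp{f(S_2)}$ for any $\beta\in[0,1]$. The coefficient of the unknown quantity $f(OPT\setminus\supp(\hat x))$ is $\beta(1-p)^2-(1-\beta)\alpha$, so I choose $\beta=\alpha/(\alpha+(1-p)^2)$ to kill that term. A short calculation shows the coefficient of $f(OPT)$ collapses to $\alpha(1-p)/(\alpha+(1-p)^2)$, leaving
\[
	\Exp{\max\{f(S_1),f(S_2)\}}\geq\tfrac{\alpha}{\alpha+(1-p)^2}\bigl[(1-p)\cdot f(OPT)-c\tau\bigr]\geq\tau\cdot\tfrac{\alpha(1-p-c)}{\alpha+(1-p)^2},
\]
where the final inequality uses $f(OPT)\geq\tau$. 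Combining with the $c\tau$ bound from the other case proves the main claim.

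For the ``in particular'' statement, the plan is simply to verify that the two entries of the min coincide at $c=\alpha(1-p)/(2\alpha+(1-p)^2)$: plugging this $c$ into $\alpha(1-p-c)/(\alpha+(1-p)^2)$, the factor $1-p-c$ simplifies to $(1-p)(\alpha+(1-p)^2)/(2\alpha+(1-p)^2)$, and after the $\alpha+(1-p)^2$ cancels one gets exactly $c$ back. The main obstacle, modest as it is, is picking the right convex-combination weight $\beta$; once the submodular union bound is used to eliminate $f(OPT\cap\supp(\hat x))$ in favor of $f(OPT)-f(OPT\setminus\supp(\hat x))$, the choice of $\beta$ is forced by the requirement of cancelling $f(OPT\setminus\supp(\hat x))$, and the rest is routine algebra.
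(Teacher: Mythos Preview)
Your proposal is correct and follows essentially the same approach as the paper: the same case split on $\onenorm{\hat x}$, the same convex-combination weight $\beta=\alpha/(\alpha+(1-p)^2)$, and the same submodularity inequality $f(OPT\cap\supp(\hat x))+f(OPT\setminus\supp(\hat x))\geq f(OPT)$. The only cosmetic difference is that you apply the submodular bound to $\Exp{f(S_2)}$ \emph{before} forming the convex combination (so as to cancel $f(OPT\setminus\supp(\hat x))$), whereas the paper forms the combination first and then applies the same inequality to the sum $f(OPT\cap\supp(\hat x))+f(OPT\setminus\supp(\hat x))$; the resulting bound and the verification of the ``in particular'' clause are identical.
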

\begin{proof}
To see why the second part of the lemma follows from the first part, note that $c = \frac{\alpha(1-p)}{2\alpha+(1-p)^2}$ implies
\[
	\frac{\alpha(1-p-c)}{\alpha+(1-p)^2}
	=
	\frac{\alpha(1 - p)\left[1 - \frac{\alpha}{2\alpha + (1 - p)^2}\right]}{\alpha+(1-p)^2}
	=
	\frac{\alpha(1 - p)\left[\frac{2\alpha+(1-p)^2}{\alpha+(1-p)^2} - \frac{\alpha}{\alpha+(1-p)^2}\right]}{2\alpha+(1-p)^2}
	=
	\frac{\alpha(1 - p)}{2\alpha+(1-p)^2}
	\enspace.
\]

Given the above, we concentrate in the rest of the proof on proving the first part of the lemma. If $\|\hat{x}\| = k$, then by the definition of $S_1$ and Lemma~\ref{lem:Fx-exactlyK}
\[
	\Exp{\max\{f(S_1), f(S_2)\}}
	\geq
	\Exp{f(S_1)}
	\geq
	F(\hat{x})
	\geq
	c\tau
	\enspace.
\]
Consider now the case in which $\onenorm{\hat{x}}<k$. Note that $OPT \cap \supp(\hat{x})$ is a subset of the support of $\hat{x}$ of size at most $k$, and thus, by the definition of $S_2$, $\Exp{f(S_2)} \geq \alpha \cdot f(OPT \cap \supp(\hat{x}))$. Combining this with the lower bound given by Corollary~\ref{cor:belowK} for $\Exp{f(S_1)}$ when $\|x\|_1 < k$, we get
	\begin{align*}
		\Exp{\max\curly{f(S_1),f(S_2)}}\mspace{-81mu}&\mspace{81mu}
		\geq
		\max\curly{\Exp{f(S_1)},\Exp{f(S_2)}}\\
		\geq{}&
		\max\curly{(1-p)\cdot\big[p\cdot f(OPT)+(1-p)\cdot f(OPT\setminus\supp(\hat{x}))\big]-c\tau,\alpha\cdot{f(OPT\cap\supp(\hat{x}))}}\\
		\geq{} &
		\frac{\alpha}{\alpha+(1-p)^2}\cdot
		\bracks{(1-p)\cdot\bracks{p\cdot f(OPT)+(1-p)\cdot f(OPT\setminus\supp(\hat{x}))}-c\tau}\\
		&+\frac{(1-p)^2}{\alpha+(1-p)^2}\cdot\alpha\cdot f(OPT\cap\supp(\hat{x}))
		\enspace,
	\end{align*}
	where the last inequality holds since a maximum over two expressions is lower bounded by any convex combination of the two expressions in it. Rearranging the rightmost side of the last inequality yields
\[
	\frac{\alpha(1-p)^2}{\alpha+(1-p)^2}\cdot\Big[f(OPT\cap\supp(\hat{x}))+f(OPT\setminus\supp(\hat{x}))\Big]
		+\frac{\alpha\bracks{p(1-p)\cdot f(OPT)-c\tau}}{\alpha+(1-p)^2}\enspace.
\]
By assumption, $f(OPT) \geq \tau$. Additionally, by the submodularity and non-negativity of $f$, $f(OPT\cap\supp(\hat{x}))+f(OPT\setminus\supp(\hat{x})) \geq f(OPT) \geq \tau$. Combining all the above, we get
\[
	\Exp{\max\curly{f(S_1),f(S_2)}}
	\geq{} 
	\frac{\alpha(1-p)^2}{\alpha+(1-p)^2}\cdot\tau +\frac{\alpha\bracks{p(1-p)\cdot \tau-c\tau}}{\alpha+(1-p)^2}
	=
	\tau\cdot\frac{\alpha(1-p-c)}{\alpha+(1-p)^2}\enspace,
\]
	which completes the proof for the case of $\onenorm{\hat{x}}<k$.
\end{proof}

The following proposition summarizes the results we have proved so far.
\begin{proposition}
\label{prp:results}
As long as the algorithm used to compute $S_2$ runs in $\poly(\varepsilon) \cdot \tilde{O}(k)$ space, Algorithm~\ref{alg:main_algorithm} is a semi-streaming algorithm storing $O\left(k/p\right)$ elements. Moreover, for an appropriate choice of the parameter $c$, the output set produced by Algorithm~\ref{alg:main_algorithm} has an expected value of at least $\frac{\alpha\tau(1-p)}{2\alpha+(1-p)^2}$ whenever $\tau \leq f(OPT)$.
\end{proposition}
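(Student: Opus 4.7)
\medskip

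\noindent\textbf{Proof plan for Proposition~\ref{prp:results}.} The proposition bundles together the space bound and the approximation guarantee that were established piecewise earlier in the section, so my plan is simply to glue those results together while being careful about where each assumption is used.

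First I would address the space bound. The second observation in the section already shows that Algorithm~\ref{alg:main_algorithm} can be implemented using $\tilde{O}(k/p)$ space while storing at most $O(k/p)$ elements, \emph{excluding} whatever the subroutine for computing $S_2$ needs. Under the hypothesis of the proposition, that subroutine costs only $\poly(\varepsilon) \cdot \tilde{O}(k)$ space. Since $p$ is a constant (to be chosen), adding this contribution keeps the total space inside $\tilde{O}(k)$, which is the semi-streaming regime, and the bound of $O(k/p)$ on the number of stored elements is unchanged. Thus the first half of the proposition is essentially a restatement of the earlier observation.

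Next I would handle the approximation guarantee. Here the key input is Lemma~\ref{lem:bound-solution}, which already exhibits a specific ``good'' choice of $c$: setting $c = \frac{\alpha(1-p)}{2\alpha+(1-p)^2}$ yields
\[
\Exp{\max\{f(S_1),f(S_2)\}} \;\geq\; \frac{\alpha\tau(1-p)}{2\alpha+(1-p)^2}
\]
whenever $\tau \leq f(OPT)$. Since the output of Algorithm~\ref{alg:main_algorithm} is precisely $\max\{f(S_1), f(S_2)\}$, picking this value of $c$ gives exactly the bound promised by the proposition. The only thing to verify is that this choice of $c$ is legal, i.e.\ lies in the range allowed by the algorithm's parameter constraints ($c > 0$), which is immediate since $\alpha \in (0,1]$ and $p \in (0,1)$ force the numerator and denominator to be positive.

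There is no genuinely hard step here; the proposition is a summary statement. The only thing I would be careful about is making clear that the two halves of the claim are consistent with a single parameter setting, namely that after fixing $c = \frac{\alpha(1-p)}{2\alpha+(1-p)^2}$ (which depends only on $\alpha$ and $p$, not on the stream), the space analysis from the earlier observation still applies verbatim because that analysis does not use $c$ at all. I would therefore present the proof as two short paragraphs invoking the earlier observation and Lemma~\ref{lem:bound-solution} respectively, and conclude.
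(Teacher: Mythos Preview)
Your proposal is correct and matches the paper's own treatment: the proposition is stated in the paper as a summary (``The following proposition summarizes the results we have proved so far'') with no separate proof, and your plan of invoking the space-complexity observation together with Lemma~\ref{lem:bound-solution} for the specific choice $c = \frac{\alpha(1-p)}{2\alpha+(1-p)^2}$ is exactly what that summary encapsulates.
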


One consequence of Proposition~\ref{prp:results} is the following theorem. This theorem is similar to Theorem~\ref{trm:non-polynomial}, but it still assumes access to an estimate $\tau$ of $f(OPT)$. In Appendix~\ref{apx:estimatingTau} we show how to remove this assumption, using the technique of~\cite{DBLP:conf/kdd/BadanidiyuruMKK14}, which yields Theorem~\ref{trm:non-polynomial}.
\begin{theorem}
\label{trm:non-polynomial-knowing-tau}
	For every constant $\varepsilon \in (0, 1]$, there exists a semi-streaming algorithm that assumes access to an estimate $\tau$ of $f(OPT)$ obeying $(1-\varepsilon/8) \cdot f(OPT)\leq\tau\leq{f(OPT)}$ and provides $(3 +\varepsilon)$-approximation for the problem of maximizing a non-negative submodular function subject to cardinality constraint. This algorithm stores at most $O(k\varepsilon^{-1})$ elements.
\end{theorem}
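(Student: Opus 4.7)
The plan is to invoke Proposition~\ref{prp:results} applied to Algorithm~\ref{alg:main_algorithm} with $\alpha = 1$ and $p$ set to a small constant multiple of $\varepsilon$. Because the theorem allows super-polynomial running time, we may take $S_2$ to be an exact maximizer of $f$ over all subsets of $\supp(\hat{x})$ of size at most $k$, computed by brute-force enumeration. By Observation~\ref{obs:x_type} the support satisfies $|\supp(\hat{x})|=O(k/p)$, so this enumeration uses only $O(k/p)$ extra space, and the overall procedure is still a semi-streaming algorithm whose element count $O(k/p)$ matches the claim of Proposition~\ref{prp:results}.

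With $\alpha = 1$ the proposition guarantees
\[
	\Exp{\max\{f(S_1),f(S_2)\}}\geq\frac{\tau(1-p)}{2+(1-p)^2}\enspace,
\]
so, combining with $\tau\geq(1-\varepsilon/8)\cdot f(OPT)$, it suffices to choose $p$ of order $\Theta(\varepsilon)$ satisfying the one-variable inequality
\[
	\frac{(1-\varepsilon/8)(1-p)}{2+(1-p)^2}\geq\frac{1}{3+\varepsilon}\enspace.
\]
To establish this I would observe that the function $g(p)=(1-p)/(2+(1-p)^2)$ takes the value $1/3$ at $p=0$ with derivative $g'(0)=-1/9$, so the ``bare'' ratio already matches $3$ and degrades only linearly in $p$. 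Consequently, taking $p$ to be a sufficiently small constant times $\varepsilon$, e.g.\ $p = \varepsilon/2$, should make both the loss from $p>0$ and the loss from the $(1-\varepsilon/8)$ factor fit inside the $\varepsilon$-slack on the right-hand side. Substituting $p = \Theta(\varepsilon)$ into the memory bound $O(k/p)$ of Proposition~\ref{prp:results} then gives the advertised space complexity $O(k\varepsilon^{-1})$.

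The main obstacle I anticipate is arithmetic rather than conceptual: I need to pin down one explicit constant for $p=\Theta(\varepsilon)$ so that the displayed inequality holds uniformly for all $\varepsilon \in (0,1]$, not merely in the limit $\varepsilon \to 0$. This can be handled either by expanding both sides as polynomials in $\varepsilon$ and comparing coefficients, or by bounding the second derivative of $g$ on $[0,1/2]$ to control higher-order terms; both routes should yield a concrete admissible constant after a few lines of algebra.
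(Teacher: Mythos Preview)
Your approach is exactly the paper's: invoke Proposition~\ref{prp:results} with $\alpha=1$, pick $p=\Theta(\varepsilon)$, compute $S_2$ by brute force over $\supp(\hat{x})$, and verify the displayed one-variable inequality. The only caveat is that your tentative choice $p=\varepsilon/2$ is too large---at $\varepsilon=1$ the left side becomes $\tfrac{(7/8)(1/2)}{2+1/4}=\tfrac{7}{36}<\tfrac{1}{4}$---whereas the paper takes $p=\varepsilon/8$, which dovetails with the hypothesis $\tau\geq(1-\varepsilon/8)f(OPT)$ to give the clean chain
\[
\frac{(1-\varepsilon/8)^2}{2+(1-\varepsilon/8)^2}\;\geq\;\frac{1-\varepsilon/4}{3}\;\geq\;\frac{1}{3+\varepsilon}
\]
valid for all $\varepsilon\in(0,1]$ without any Taylor expansion or second-derivative bounds.
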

\begin{proof}
In the theorem that we want to prove there is no restriction on the time complexity of the algorithm. Without such a restriction, Algorithm~\ref{alg:main_algorithm} can be implemented with $\alpha = 1$ (as discussed above). Consider the algorithm obtained from Algorithm~\ref{alg:main_algorithm} by setting $\alpha = 1$, $p = \varepsilon/8$ and $c$ as necessary to make Proposition~\ref{prp:results} hold. We show that this algorithm obeys all the requirements of the theorem. First, by Proposition~\ref{prp:results}, it stores at most $O(k/p) = O(k\varepsilon^{-1})$ elements, and it is a semi-streaming algorithm since the algorithm for calculating $S_2$ simply iterates over all subsets of the stored elements of size at most $k$, which requires no more than $O(k\varepsilon^{-1})$ space. Second, the expected value of the output set of this algorithm is at least
\[
	\frac{\alpha\tau(1-p)}{2\alpha+(1-p)^2}
	\geq
	\frac{(1 - \varepsilon/8) \cdot f(OPT) \cdot (1-\varepsilon/8)}{2+(1-\varepsilon/8)^2}
	\geq
	\frac{(1 - \varepsilon/4) \cdot f(OPT)}{3}
	\geq
	\frac{f(OPT)}{3 + \varepsilon}
	\enspace.
	\qedhere
\]
\end{proof}

Our next objective is to use Proposition~\ref{prp:results} to get also a guarantee for a polynomial time algorithm. To get a polynomial time implementation of Algorithm~\ref{alg:main_algorithm}, one has to handle two issues. The first issue is that one must use a polynomial time algorithm for calculating $S_2$, which leads to $\alpha < 1$. The second issue is related to the way the algorithm access the objective function. It is standard in the literature about submodular maximization to assume that algorithms have access to the objective function $f$ through a value oracle, which is an oracle that given a set $S \subseteq \groundset$ returns $f(S)$. For non-polynomial time algorithms, one can use this oracle to evaluate $F$ because the value of $F$ with respect to any given vector can be calculated using an exponential number of value oracle queries to $f$. However, for polynomial time algorithms there is no known way to do that. Thus, it is not clear how to implement Algorithm~\ref{alg:main_algorithm} using only value oracle access to $f$. We note, however, that it is easy to implement Algorithm~\ref{alg:main_algorithm}  using a value oracle access to $F$.\footnote{Observe that an algorithm that has value oracle access to $F$ can also evaluate derivatives of $F$ via the equality $\partial_uF(x) = F(x\vee\onevect{u})-F(x\wedge\onevect{\groundset\setminus\curly{u}})$, which holds for every vector $x \in [0, 1]^\groundset$ and element $u \in \groundset$.}

Keeping the two above issues in mind, we get the following theorem. This theorem is similar to Theorem~\ref{trm:polynomial}, but assumes access to an estimate $\tau$ of $f(OPT)$ and value oracle access to $F$. In Appendix~\ref{apx:estimatingTau} we explain how to remove the need to access $\tau$, and in  Appendix~\ref{apx:sampling} we explain how the value oracle access to $F$ can be replaced with a value oracle access to $f$ using standard sampling techniques, which completes the proof of Theorem~\ref{trm:polynomial}.
\begin{theorem}
\label{trm:polynomial-knowing-tau-oracle}
	There exists a semi-streaming $4.2819$-approximation algorithm for maximizing a non-negative submodular function $f$ subject to cardinality constraint that assumes
\begin{compactitem}
	\item value oracle access to the multilinear extension $F$ of $f$, and
	\item access to an estimate $\tau$ of $f(OPT)$ such that $\frac{9999}{10000} \cdot f(OPT)\leq\tau\leq{f(OPT)}$.
\end{compactitem}
The algorithm stores at most $O(k)$ elements.
\end{theorem}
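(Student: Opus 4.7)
The plan is to instantiate Algorithm~\ref{alg:main_algorithm} with a concrete polynomial-time offline subroutine for producing $S_2$, to tune the parameter $p$ optimally, and then to invoke Proposition~\ref{prp:results}. For the offline subroutine I would use the polynomial-time randomized algorithm of~\cite{DBLP:conf/soda/BuchbinderFNS14} for maximizing a non-negative submodular function subject to a cardinality constraint; this algorithm produces an $S_2$ with $\Exp{f(S_2)} \geq \alpha \cdot \max_{S \subseteq \supp(\hat{x}),\, |S| \leq k} f(S)$ for a specific constant $\alpha < 1/2$, runs in polynomial time, and requires only value oracle access to $f$ on integral sets, which is provided by the assumed value oracle access to $F$.

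Next I would tune $p$ so as to maximize the guarantee $\frac{\alpha(1-p)}{2\alpha + (1-p)^2}$ furnished by Proposition~\ref{prp:results}. Writing $q = 1 - p$ and differentiating $\alpha q / (2\alpha + q^2)$ with respect to $q$ yields the optimizer $q = \sqrt{2\alpha}$ (which lies in $(0,1)$ since $\alpha < 1/2$), at which the expression equals $\sqrt{2\alpha}/4$. With this choice, $p = 1 - \sqrt{2\alpha}$ is a positive constant in $(0,1)$, so by Observation~\ref{obs:x_type} the support of $\hat{x}$ has size $O(k/p) = O(k)$, Algorithm~\ref{alg:main_algorithm} stores $O(k)$ elements, and the offline subroutine applied to $\supp(\hat{x})$ runs in polynomial time and space. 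Together this yields a polynomial-time semi-streaming algorithm of the stated space complexity.

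Finally, combining the above with $\tau \geq \frac{9999}{10000} \cdot f(OPT)$, Proposition~\ref{prp:results} gives
\[
	\Exp{\max\{f(S_1), f(S_2)\}}
	\geq \frac{\sqrt{2\alpha}}{4} \cdot \tau
	\geq \frac{\sqrt{2\alpha}}{4} \cdot \frac{9999}{10000} \cdot f(OPT),
\]
so the approximation ratio is at most $\frac{4}{\sqrt{2\alpha}} \cdot \frac{10000}{9999}$. The main remaining obstacle I anticipate is purely numerical: I must substitute the exact value of $\alpha$ guaranteed by~\cite{DBLP:conf/soda/BuchbinderFNS14} and check that the resulting expression is at most $4.2819$; everything structural follows from Proposition~\ref{prp:results} and Observation~\ref{obs:x_type}. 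A minor side task is confirming that the offline subroutine can be directly executed on the restricted ground set $\supp(\hat{x})$ while querying $f$ only through the value oracle for $F$, but this is immediate because $F(\onevect{S}) = f(S)$ for every set $S \subseteq \groundset$.
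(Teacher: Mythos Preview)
There is a genuine gap in your plan, and it is not merely numerical. You treat the offline approximation factor $\alpha$ as a fixed constant determined by the generic guarantee of~\cite{DBLP:conf/soda/BuchbinderFNS14} for maximizing a non-negative submodular function subject to a cardinality constraint, and then optimize $p$ independently. But the best generic guarantee of~\cite{DBLP:conf/soda/BuchbinderFNS14} is $\alpha = 1/e \approx 0.368$ (and even the state-of-the-art offline algorithm of~\cite{to_appear:BuchbinderF19} only gives $\alpha \approx 1/2.598 \approx 0.385$). Plugging either value into your optimized bound $\tfrac{4}{\sqrt{2\alpha}}\cdot\tfrac{10000}{9999}$ yields approximation ratios of roughly $4.66$ and $4.56$, respectively---both strictly above $4.2819$. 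So the argument as written cannot reach the claimed ratio.

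What the paper exploits, and what your proposal omits, is that the offline instance handed to the subroutine is \emph{not} a generic instance: its ground set is $\supp(\hat{x})$, which by Observation~\ref{obs:x_type} has size at most $\lceil k/p\rceil$. For instances whose ground set has size at most $k/p$, \cite{DBLP:conf/soda/BuchbinderFNS14} proves the sharper approximation ratio $1 + \tfrac{1}{2\sqrt{p - p^2}} + \varepsilon$ (for $p\le 1/2$), so that $\alpha$ is actually a function of $p$, namely $\alpha(p) \approx \bigl[1 + \tfrac{1}{2\sqrt{p - p^2}}\bigr]^{-1}$. The correct optimization therefore couples $p$ and $\alpha(p)$ in the expression $\tfrac{\alpha(p)(1-p)}{2\alpha(p)+(1-p)^2}$; the paper chooses $p = 0.24$, which gives $\alpha \approx 0.4607$ and the stated ratio $4.2819$. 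Your clean optimum $q=\sqrt{2\alpha}$ no longer applies once $\alpha$ varies with $p$, and without this coupling the target ratio is unattainable.
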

\begin{proof}
We begin the proof by determining the value of $\alpha$ that we may assume in a polynomial time implementation of Algorithm~\ref{alg:main_algorithm}. In general, the state-of-the-art algorithm for the offline problem of maximizing a non-negative submodular function subject to a cardinality constraint is an algorithm of~\cite{to_appear:BuchbinderF19} achieving $2.598$-approximation for the problem. However, the instance of this problem that Algorithm~\ref{alg:main_algorithm} solves has additional structure. Specifically, the size of the ground set of this instance is $|\supp(x)| \leq \lceil k/p \rceil$. Furthermore, by guessing a single element of this ground set that does not belong to the optimal solution, we can assume that the algorithm needs to solve an offline instance in which the size of the ground set is upper bounded by $k/p$. For such instances, Buchbinder et al.~\cite{DBLP:conf/soda/BuchbinderFNS14} described an algorithm that for every constant $\varepsilon \in (0, 1]$ achieves in polynomial time an approximation ratio of
\[
	1 + \frac{\{\text{size of ground set of offline instance}\}}{2\sqrt{\left(\{\text{size of ground set of offline instance}\} - k\right)k}} + \varepsilon \enspace,
\]
when the size of the ground set of the offline instance is at least $2k$ and an approximation ratio of $2$ otherwise (this is not the way the approximation ratio of Buchbinder et al.~\cite{DBLP:conf/soda/BuchbinderFNS14} is stated in the original paper, but it follows from their proof). Plugging into this approximation ratio the upper bound we have on the size of the ground set of the offline instance, we get that for $p \leq 1/2$ the algorithm of~\cite{DBLP:conf/soda/BuchbinderFNS14} achieves an approximation ratio of at most
\[
	\max\left\{1 + \frac{k/p}{2\sqrt{(k/p - k)k}} + \varepsilon, 2\right\}
	=
	\max\left\{1 + \frac{1}{2\sqrt{p - p^2}} + \varepsilon, 2\right\}
	=
	1 + \frac{1}{2\sqrt{p - p^2}} + \varepsilon
	\enspace.
\]
Thus, for $p \leq 1/2$ we may assume in a polynomial time implementation of Algorithm~\ref{alg:main_algorithm} that $\alpha = [1 + 1/(2\sqrt{p - p^2})]^{-1} + \varepsilon$ for any constant $\varepsilon \in (0, 1]$.

Consider now the algorithm obtained from Algorithm~\ref{alg:main_algorithm} by setting $\alpha = 0.460675$, $p = 0.24$ and $c$ as necessary to make Proposition~\ref{prp:results} hold. One can verify that $\alpha > [1 + 1/(2\sqrt{p - p^2})]^{-1}$, and thus, the algorithm obtained in this way can be implemented in polynomial time according to the above discussion (assuming value oracle access to $F$). We also would like to show that the algorithm we obtained obeys the other requirements of the theorem we want to prove. First, by Proposition~\ref{prp:results}, it stores at most $O(k/p) = O(k)$ elements, and it is a semi-streaming algorithm since the algorithm of Buchbinder et al.~\cite{to_appear:BuchbinderF19} can be implemented to use only $\poly(\varepsilon^{-1}) \cdot \tilde{O}(k/p)$ space when run on a ground set of size $\lceil k/p \rceil$. Second, the expected value of the output set of this algorithm is at least
\begin{align*}
	\frac{\alpha\tau(1-p)}{2\alpha+(1-p)^2}
	\geq{} &
	\frac{0.460675 \cdot 0.9999 \cdot f(OPT) \cdot (1 - 0.24)}{2 \cdot 0.460675 + (1 - 0.24)^2}\\
	\geq{} &
	\frac{0.35007 \cdot f(OPT)}{1.49895}
	\geq
	0.233543 \cdot f(OPT)
	\enspace.
\end{align*}
The theorem now follows since $1 / 0.233543 \leq 4.2819$.
\end{proof}
\bibliography{SubmodularCardinality}
\bibliographystyle{plain}
\appendix
\section{Details about the Error in a Previous Work} \label{app:error}

As mentioned above, Chekuri et al.~\cite{DBLP:conf/icalp/ChekuriGQ15} described a semi-streaming algorithm for the problem of maximizing a non-negative (not necessarily monotone) submodular function subject to a cardinality constraint, and claimed an approximation ratio of $4.718$ for this algorithm. However, an error was later found in the proof of this result~\cite{communication:Chekuri18} (the error does not affect the other results of~\cite{DBLP:conf/icalp/ChekuriGQ15}). For completeness, we briefly describe in this appendix the error found.

In the proof presented by Chekuri et al.~\cite{DBLP:conf/icalp/ChekuriGQ15}, the output set of their algorithm is denoted by $\tilde{S}$. As is standard in the analysis of algorithms based on Sieve-Streaming, the analysis distinguishes between two cases: one case in which $\tilde{S} = k$, and a second case in which $\tilde{S} < k$. To argue about the second case, the analysis then implicitly uses the inequality
\begin{equation} \label{eq:needed_inequality}
	\Exp{f(\tilde{S} \cup OPT) \mid |S| < k}
	\geq
	(1 - \max_{u \in \groundset} \Pr[u \in \tilde{S}]) \cdot f(OPT)
	\enspace.
\end{equation}
It is claimed by~\cite{DBLP:conf/icalp/ChekuriGQ15} that this inequality follows from a lemma due to~\cite{DBLP:conf/soda/BuchbinderFNS14}. However, the lemma of~\cite{DBLP:conf/soda/BuchbinderFNS14} can yield only the inequalities
\[
	\Exp{f(\tilde{S} \cup OPT)}
	\geq
	(1 - \max_{u \in \groundset} \Pr[u \in \tilde{S}]) \cdot f(OPT)
\]
and
\[
	\Exp{f(\tilde{S} \cup OPT) \mid |S| < k}
	\geq
	(1 - \max_{u \in \groundset} \Pr[u \in \tilde{S} \mid |S| < k]) \cdot f(OPT)
	\enspace,
\]
which are similar to~\eqref{eq:needed_inequality}, but do not imply it.
\section{Inapproximability} \label{app:inapproximability}
In this appendix, we prove an inapproximability result for the problem of maximizing a non-negative submodular function subject to cardinality constraint in the data stream model. This result is given by the next theorem. The proof of the theorem is an adaptation of a proof given by Buchbinder et al.~\cite{to_appear:BuchbinderFS19} for a similar result applying to an online variant of the same problem.

\begin{theorem}
For every constant $\varepsilon > 0$, no data stream algorithm for maximizing a non-negative submodular function subject to cardinality constraint is $(2 - \varepsilon)$-competitive, unless it uses $\Omega(\power{\groundset})$ memory.
\end{theorem}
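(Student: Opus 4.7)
My plan is to adapt the online lower bound of Buchbinder et al.\ to the streaming model via an adversarial construction based on the cut function of a random complete bipartite graph.

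Concretely, take $n=|\groundset|$ to be even and set the cardinality bound $k=n/2$. Draw a uniformly random balanced partition $\groundset=A\cup B$ with $|A|=|B|=k$, and define
\[
  f(S)=|S\cap A|\cdot|B\setminus S|+|A\setminus S|\cdot|S\cap B|,
\]
the cut function of the complete bipartite graph on parts $A$ and $B$. This is well-known to be non-negative and submodular. Writing $a=|S\cap A|$ and $b=|S\cap B|$, a direct computation gives $f(S)=(a+b)k-2ab$, so for every $S$ with $|S|\leq k$ we have $f(S)\leq|S|\cdot k\leq k^2$, with equality iff $S$ lies entirely on one side of the partition; hence $f(OPT)=k^2$, attained at $OPT=A$ and at $OPT=B$. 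Moreover, any size-$k$ set $\tilde S$ chosen independently of the partition satisfies $\mathbb{E}[f(\tilde S)]=k^2/2+O(k)$ by a straightforward hypergeometric calculation, so oblivious choices already match the $2$-approximation barrier.

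The crux is to show that an algorithm with $m=o(n)$ memory cannot meaningfully beat this barrier on the random instance above. I would invoke Yao's minimax principle to reduce to deterministic algorithms. The key observation is that $f(S)$ depends on the random partition only through the pair $(|S\cap A|,|S\cap B|)$, so each oracle query to $f$ carries at most $O(\log n)$ bits about the partition, and the algorithm's end memory state---of size $m$---has mutual information at most $m$ with $(A,B)$. A rate-distortion style counting argument then relates this mutual information to the size of the bias the algorithm can introduce in $|\tilde S\cap A|$ away from its oblivious mean $k/2$: with $m$ units of memory the algorithm can shift $|\tilde S\cap A|$ by at most $O(m)$, and hence $\mathbb{E}[|\tilde S\cap A|\cdot|\tilde S\cap B|]\geq k^2/4-O(m^2)$. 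Plugging this into the identity $f(\tilde S)=|\tilde S|k-2|\tilde S\cap A|\cdot|\tilde S\cap B|$ and taking expectation yields $\mathbb{E}[f(\tilde S)]\leq k^2/2+O(m^2)$, so requiring $(2-\varepsilon)$-approximation forces $m=\Omega(\sqrt{\varepsilon}\cdot k)=\Omega(|\groundset|)$ for any constant $\varepsilon>0$.

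The main obstacle is establishing the bound that $m$ units of memory translate to at most $O(m)$ bias in $|\tilde S\cap A|$. In the online setting of Buchbinder et al., the analogous statement follows from the irrevocability of online decisions combined with a potential-function argument; in the streaming model, by contrast, the algorithm may store elements and assemble its output only at the end, so one must exploit the fact that to output an element known to lie on a specific side of the partition the algorithm must either explicitly store that element or encode its index together with its partition bit, and each such record costs $\Omega(1)$ units of memory. Once this bound is in place, all remaining steps---checking submodularity and non-negativity of $f$, computing $f(OPT)$, and the final algebraic manipulation---are routine calculations along the lines sketched above.
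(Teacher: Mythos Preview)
Your construction has a parametrization issue that makes the bound either trivial or inapplicable. You fix $k=n/2=|\groundset|/2$. But for the bipartite cut function, any set $\tilde S$ with $f(\tilde S)\geq f(OPT)/(2-\varepsilon)=k^2/(2-\varepsilon)$ must satisfy $|\tilde S|\cdot k\geq f(\tilde S)$, hence $|\tilde S|\geq k/(2-\varepsilon)=\Omega(|\groundset|)$. The algorithm therefore needs $\Omega(|\groundset|)$ memory just to write down its output, and the conclusion follows with no information-theoretic work at all. That is not what the theorem is meant to capture: the content is that the $\Omega(|\groundset|)$ barrier persists even when $k$ is tiny relative to $|\groundset|$, so that it genuinely separates $\Omega(|\groundset|)$ from the semi-streaming budget $\tilde O(k)$. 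Your instance cannot be repaired by shrinking $k$: if $k=o(n)$ while $|A|=|B|=n/2$, then $f(OPT)=kn/2$, whereas a size-$k$ set chosen obliviously already has expected value $kn/2-\Theta(k^2)=(1-o(1))\cdot f(OPT)$, so the factor-$2$ gap disappears.

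The paper avoids this by decoupling $k$ from $|\groundset|$. Its ground set consists of $k-1$ ``valuable'' elements $u_i$, an arbitrarily large number $h$ of ``filler'' elements $v_j$, and one special element $w$ that arrives last. The function is $f(S)=|S|$ if $w\notin S$ and $f(S)=k+|S\cap\{u_i\}|$ if $w\in S$; before $w$ arrives every oracle query returns just the cardinality of the queried set, so the $u_i$'s and $v_j$'s are literally indistinguishable. A random ordering of the first $k+h-1$ elements then forces $\mathbb{E}[|M\cap\{u_i\}|]\leq k\cdot\mathbb{E}[|M|]/(k+h-1)$ for the set $M$ held when $w$ arrives, and the ratio $(2k-1)/(k+\mathbb{E}[|M\cap\{u_i\}|])$ falls out in two lines. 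One takes $k$ a large constant and $h\to\infty$; no mutual-information or rate-distortion argument is needed.

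There is also a gap in your information-theoretic sketch itself. You bound the mutual information between the memory state and $(A,B)$ by $m$, but the algorithm's output is not a function of its memory state alone: after the stream ends it may continue to query $f$ on subsets of the stored elements and thereby learn, for free, exactly which stored elements lie in $A$. Since your $f$ distinguishes same-side from opposite-side pairs, an algorithm that stores $\Theta(k)$ elements can post-process to output a pure subset of $A$ and achieve the optimum---consistent with the theorem, but not with the bound ``$m$ bits of memory $\Rightarrow$ $O(m)$ bias'' that you are trying to establish. You flag this step as ``the main obstacle''; in fact it does not hold in the form you need.
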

\begin{proof}
Let $k \geq 1$ and $h \geq 1$ be two integers to be chosen later, and consider the non-negative submodular function $\Func{f}{2^\groundset}{\realnum^+}$, where $\groundset=\curly{u_i}_{i=1}^{k-1}\cup\curly{v_i}_{i=1}^{h}\cup\curly{w}$,
	defined as follows.
	\[
		f(S)=
		\begin{cases}
			\power{S} &\text{if }w\notin S \enspace,\\
			k+\power{S\cap\curly{u_i}_{i=1}^{k-1}} &\text{if }w\in S\enspace.
		\end{cases}
	\]
It is clear that $f$ is non-negative. One can also verify that the marginal value of each element in $\groundset$ is non-increasing, and hence, $f$ is submodular. 

Let $ALG$ be an arbitrary data stream algorithm for the problem of maximizing a non-negative submodular function subject to a cardinality constraint, and let us consider what happens when we give this algorithm the above function $f$ as input, the last element of $\groundset$ to arrive is the element $w$ and we ask the algorithm to pick a set of size at most $k$. One can observe that, before the arrival of $w$, $ALG$ has no way to distinguish between the other elements of $\groundset$. Thus, if we denote by $M$ the set of elements stored by $ALG$ immediately before the arrival of $w$ and assume that the elements of $\groundset \setminus \{w\}$ arrive at a random order, then every element of $\groundset \setminus \{w\}$ belongs to $M$ with the same probability of $\Exp{|M|} / |\groundset \setminus \{w\}|$. Hence, there must exist some arrival order for the elements of $\groundset \setminus \{w\}$ guaranteeing that
\[
	\Exp{|M \cap \curly{u_i}_{i=1}^{k-1}|}
	=
	\sum_{i = 1}^{k - 1} \Pr[u_i \in M]
	\leq
	\frac{k \cdot \Exp{|M|}}{|\groundset \setminus \{w\}|}
	\enspace.
\]

Note now that the above implies that the expected value of the output set produced by $ALG$ given the above arrival order is at most
\[
	k + \frac{k \cdot \Exp{|M|}}{|\groundset \setminus \{w\}|}
	\enspace.
\]
In contrast, the optimal solution is the set $\curly{u_i}_{i=1}^{k-1}\cup\curly{w}$, whose value is $2k-1$. Therefore, the competitive ratio of $ALG$ is at least
\[
	\frac{2k - 1}{k + k \cdot \Exp{|M|} / |\groundset \setminus \{w\}|}
	=
	\frac{2 - 1/k}{1 + \Exp{|M|} / |\groundset \setminus \{w\}|}
	\geq
	2 - \frac{1}{k} - \frac{2 \cdot \Exp{|M|}}{|\groundset \setminus \{w\}|}
	\enspace.
\]
To prove the theorem we need to show that, when the memory used by $ALG$ is $o(|\groundset|)$, we can choose large enough values for $k$ and $h$ that will guarantee that the rightmost side of the last inequality is at least $2 - \varepsilon$. We do so by showing that the two terms $1/k$ and $2 \cdot \Exp{|M|} / |\groundset \setminus \{w\}|$ can both be upper bounded by $\varepsilon / 2$ when the integers $k$ and $h$ are large enough, respectively. For the term $1/k$ this is clearly the case when $k$ is larger than $2 / \varepsilon$. For the term $2 \cdot \Exp{|M|} / |\groundset \setminus \{w\}|$ this is true because increasing $h$ can make $\groundset$ as large as want, and thus, can make the ratio $\Exp{|M|} / |\groundset \setminus \{w\}|$ as small as necessary due to our assumption that the memory used by $ALG$ (which includes $M$) is $o(|\groundset|)$.
\end{proof}
\section{Estimating the Optimal Value}\label{apx:estimatingTau}
In this appendix, we explain how one can drop the assumption from Section~\ref{sec:algorithm} that the algorithm has access to an estimate $\tau$ of $f(OPT)$. This leads to versions of Theorems~\ref{trm:non-polynomial-knowing-tau} and~\ref{trm:polynomial-knowing-tau-oracle} without this assumption. Specifically, we prove the following two theorems. The first of these theorems is one of the results of this paper. In Appendix~\ref{apx:sampling} we explain how the proof of the second of these theorems can be modified to derive the other result of the paper (Theorem~\ref{trm:polynomial}).

\begin{reptheorem}{trm:non-polynomial}
	For every constant $\varepsilon \in (0, 1]$, there exists a semi-streaming $(3 +\varepsilon)$-approximation algorithm for maximizing a non-negative submodular function
	subject to cardinality constraint. The algorithm stores at most $O(k\varepsilon^{-2} \log k)$ elements.
\end{reptheorem}

\begin{theorem} \label{trm:polynomial-oracle-access}
	There exists a \emph{polynomial time} semi-streaming algorithm for maximizing a non-negative submodular function $f$
	subject to cardinality constraint that assumes value oracle access to the multilinear extension $F$ of $f$ and has an approximation ratio of at most $4.2819$. This algorithm stores at most $O(k \log k)$ elements.
\end{theorem}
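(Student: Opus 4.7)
The plan is to derive Theorem~\ref{trm:polynomial-oracle-access} from Theorem~\ref{trm:polynomial-knowing-tau-oracle} by eliminating the assumed access to an estimate $\tau$ of $f(OPT)$, via the ``lazy parallel guessing'' technique of Badanidiyuru et al.~\cite{DBLP:conf/kdd/BadanidiyuruMKK14}. I would run in parallel one independent copy of the algorithm of Theorem~\ref{trm:polynomial-knowing-tau-oracle} per guess $\tau_i := (10000/9999)^i$ that currently lies in a suitable window around the running quantity $M := \max_{u \text{ seen so far}} f(\{u\})$, and at the end return the best of the produced solutions. The choice of window is justified by the fact that for every non-negative submodular $f$ and feasible $OPT$, subadditivity (which follows from submodularity plus non-negativity, since $f(A\cup B) \leq f(A)+f(B)-f(A\cap B) \leq f(A)+f(B)$) gives $M \leq f(OPT) \leq kM$ at every point of the stream. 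Consequently, at every moment there is a guess satisfying $(9999/10000)\cdot f(OPT) \leq \tau_i \leq f(OPT)$, which is precisely the hypothesis of Theorem~\ref{trm:polynomial-knowing-tau-oracle}.

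Since $M$ is revealed only on the fly, each guess has to be instantiated mid-stream once $M$ has grown large enough. Concretely, I would activate $\tau_i$ at the first moment $t_i$ at which $c\tau_i/k \leq M_{t_i}$ (where $c$ is the threshold constant of Algorithm~\ref{alg:main_algorithm}) and discard it once $\tau_i$ falls below $(9999/10000)\cdot M$; for every active guess an independent copy of Algorithm~\ref{alg:main_algorithm} is maintained, fed only the elements arriving from time $t_i$ onward. The critical observation that makes starting an instance mid-stream harmless is that any element $u$ arriving strictly before $t_i$ satisfies $f(\{u\}) \leq M_{t_i-1} < c\tau_i/k$, so by submodularity $\partial_u F(x) \leq f(\{u\}) < c\tau_i/k$ at every stage of the instance; such a $u$ would have been rejected by the threshold test anyway, so missing it does not change the final fractional solution $\hat{x}$ of the instance. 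In particular, the inequality $\partial_u F(\hat{x}) < c\tau_i/k$ that is used for every $u \in OPT \setminus \supp(\hat{x})$ in the proof of Lemma~\ref{lem:belowK_upper} still holds (either because $u$ was seen and rejected, or because $u$ was never seen and hence $f(\{u\}) < c\tau_i/k$); Lemmas~\ref{lem:Fx-exactlyK}, \ref{lem:belowK_lower} and~\ref{lem:belowK_upper}, and therefore the guarantee of Theorem~\ref{trm:polynomial-knowing-tau-oracle}, transfer to each mid-stream-activated instance unchanged.

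For the resource bounds and the final ratio: the number of active guesses at any given time is $O(\log(k/c)/\log(10000/9999)) = O(\log k)$ (with $c$ a constant), and each active instance stores $O(k/p) = O(k)$ elements for the choice $p = 0.24$ from Theorem~\ref{trm:polynomial-knowing-tau-oracle}, yielding $O(k\log k)$ stored elements in total. The guess $\tau^* := (10000/9999)^{\lfloor \log_{10000/9999} f(OPT)\rfloor}$ satisfies $(9999/10000)\cdot f(OPT) \leq \tau^* \leq f(OPT) \leq kM \leq (k/c)M$, so it is active at the end of the stream, and by Theorem~\ref{trm:polynomial-knowing-tau-oracle} its instance outputs a $4.2819$-approximate solution in expectation; returning the best solution across all active instances at the end can only help. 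The main obstacle of the whole argument is, as indicated, justifying the mid-stream start of an instance; once that is handled via the singleton-value comparison above, the rest is bookkeeping on the range and number of guesses maintained.
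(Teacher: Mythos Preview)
Your proposal is correct and follows essentially the same route as the paper: Algorithm~\ref{alg:estimatingTau} there is precisely the lazy parallel-copy scheme you describe, with activation window $[m/(1+\varepsilon'),\, mk/c]$ around the running maximum, and your ``critical observation'' that pre-activation elements would have failed the threshold test anyway is exactly the content of the paper's Lemma~\ref{lem:apxtau-before-tau}. One cosmetic difference worth noting is that the paper initializes its running maximum $m$ to $f(\varnothing)$ rather than $0$, which cleanly covers the degenerate case $f(\varnothing) > \max_u f(\{u\})$ (where your subadditivity bound $f(OPT) \le kM$ can fail because $OPT=\varnothing$); this is trivially patched either by doing the same or by always including $\varnothing$ among the candidate outputs.
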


The algorithm we use to prove Theorems~\ref{trm:non-polynomial} and~\ref{trm:polynomial-oracle-access} is Algorithm~\ref{alg:estimatingTau}. It gets the same two parameters $p$ and $c$ as Algorithm~\ref{alg:main_algorithm} plus an additional parameter $\varepsilon' \in (0, 1)$ controlling the quality guarantee of the output. The algorithm is based on a technique originally due to Badanidiyuru et al.~\cite{DBLP:conf/kdd/BadanidiyuruMKK14}. Throughout its execution, Algorithm~\ref{alg:estimatingTau} tracks in $m$ the maximum value of any singleton seen so far (or the value of the empty set if it is larger). The algorithm also maintains a set $T=\curly{(1+\varepsilon')^i\mid{m/ (1 + \varepsilon') \leq(1+\varepsilon')^i\leq{mk/c}}}$ of values that are either possible estimates for $OPT$ at the current point or might become such estimates in the future (of course, $T$ includes only a subset of the possible estimates). For every estimate $\tau$ in $T$, the algorithm maintains a fractional solution $x_\tau$. We note that the set of fractional solutions maintained is updated every time that $T$ is updated (which happens after every update of $m$). Specifically, whenever a new value $\tau$ is added to $T$, the algorithm instantiate a new vector $x_\tau$, and whenever a value $\tau$ is dropped from $T$, the algorithm deletes $x_\tau$.

While a value $\tau$ remains in $T$, Algorithm~\ref{alg:estimatingTau} maintains the fractional solution $x_\tau$ in exactly the same way that Algorithm~\ref{alg:main_algorithm} maintains its fractional solution given the value $\tau$ as an estimate for $f(OPT)$. Moreover, we show below that if $\tau$ remains in $T$ when the algorithm terminates, then the value of $x_\tau$ when the algorithm terminates is equal to the value of the vector $x$ when Algorithm~\ref{alg:main_algorithm} terminates after executing with $\tau$ as the estimate for $f(OPT)$. Thus, one can view Algorithm~\ref{alg:estimatingTau} as parallel execution of Algorithm~\ref{alg:main_algorithm} for many estimates of $f(OPT)$ at the same time. After viewing the last element, Algorithm~\ref{alg:estimatingTau} calculates for every $\tau \in T$ an output set $\hat{S}_\tau$ based on the fractional solution $\hat{x}_\tau$ in the same way Algorithm~\ref{alg:main_algorithm} does that, and then outputs the best output set computed for any $\tau \in T$.

\IncMargin{1em}
\begin{algorithm}[th]
\label{alg:estimatingTau}
\caption{Multilinear Threshold with No Access to $\tau$ $(p, c, \varepsilon')$}
	\DontPrintSemicolon
	Let $m\gets f(\varnothing)$ and $T\gets\curly{(1+\varepsilon')^h\mid{m / (1 + \varepsilon') \leq(1+\varepsilon')^h\leq{mk/c}}}$.\\
	\For {each arriving element $u$} {
		\If {$m<f(\curly{u})$} {
			Update $m\gets{f(\curly{u})}$ and $T\gets\curly{(1+\varepsilon')^h\mid{m/ (1 + \varepsilon') \leq(1+\varepsilon')^h\leq{mk/c}}}$. \label{line:T_update_tau}\\
			Delete $x_{\tau}$ for every value $\tau$ removed from $T$ in Line~\ref{line:T_update_tau}.\\
			Initialize $x_{\tau}\gets\onevect{\varnothing}$ for every value $\tau$ added to $T$ in Line~\ref{line:T_update_tau}.
		}
		\For {every $\tau\in{T}$} {
			\lIf {$\partial_uF(x_{\tau})\geq\frac{c\tau}{k}$} {
				$x_{\tau}\gets{x_{\tau}}+\min\curly{p,k-\onenorm{x_{\tau}}}\cdot\onevect{u}$.
			}
		}
	}
	\For {every $\tau\in{T}$} {
		Round the vector $x_{\tau}$ to yield a feasible solution $S^\tau_1$ such that $\Exp{f(S^\tau_1)}\geq{F(x_{\tau})}$.\\
		Find a feasible solution $S^\tau_2\subseteq\supp(x_{\tau})$ such that $\Exp{f(S^\tau_2)} \geq \alpha \cdot \arg \max_{S \subseteq \supp(x_{\tau}), |S| \leq k} f(S)$.\\
		Let $\hat{S}_\tau$ be the better solution among $S^\tau_1$ and $S^\tau_2$.
	}
	\Return{the best solution among $\{\hat{S}_\tau\}_{\tau \in T}$, or the empty set if $T = \varnothing$}.
\end{algorithm}\DecMargin{1em}

We begin the analysis of Algorithm~\ref{alg:estimatingTau} by bounding its space complexity.
\begin{observation} \label{obs:space_tau}
Assuming it takes $O(1)$ space to store an element of $\groundset$ and a value returned by $F$, Algorithm~\ref{alg:estimatingTau} can be implemented so that it stores at most $O(kp^{-1}(\varepsilon')^{-1}(\ln k - \ln c))$ elements and its space complexity is $\tilde{O}(kp^{-1}(\varepsilon')^{-1}(- \ln c))$, excluding the space complexity of the algorithm used to find the sets $S^\tau_2$.
\end{observation}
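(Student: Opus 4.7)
The plan is to bound separately the number $|T|$ of active thresholds maintained at any point in time, and the support size $|\supp(x_\tau)|$ of each fractional vector, and then combine these bounds.

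First, I would bound $|T|$. By construction, $T$ consists of integer powers of $1+\varepsilon'$ lying in the interval $[m/(1+\varepsilon'),\, mk/c]$; the ratio of these endpoints is $(1+\varepsilon')k/c$, so
\[
|T| \;\leq\; \log_{1+\varepsilon'}\!\bigl((1+\varepsilon')k/c\bigr) + 1 \;=\; O\!\bigl((\varepsilon')^{-1}(\ln k - \ln c)\bigr),
\]
using $\ln(1+\varepsilon') = \Theta(\varepsilon')$ for $\varepsilon' \in (0,1)$.

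Next, I would bound $|\supp(x_\tau)|$ for each $\tau \in T$. The key point is that once $\tau$ enters $T$, the vector $x_\tau$ is (re)initialized to $\onevect{\varnothing}$ and from that moment onward is updated by precisely the same rule that Algorithm~\ref{alg:main_algorithm} applies to $x$ (with threshold $c\tau/k$). Since the proof of Observation~\ref{obs:x_type} depends only on this update rule, the same conclusion applies verbatim to each $x_\tau$: every coordinate of $x_\tau$ in its support equals $p$, with at most one possible exception that appears only once $\onenorm{x_\tau}=k$. In particular, $|\supp(x_\tau)| \leq \lceil k/p \rceil + 1 = O(k/p)$.

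Combining these bounds yields the element count: summing $O(k/p)$ stored elements over the $|T|$ vectors gives
\[
|T| \cdot O(k/p) \;=\; O\!\bigl(kp^{-1}(\varepsilon')^{-1}(\ln k - \ln c)\bigr),
\]
which is the first claim. For the space complexity, I would store each element of $\supp(x_\tau)$ by its index in $O(\log|\groundset|)=\tilde{O}(1)$ bits, store the at-most-one exceptional coordinate as an index paired with its fractional value in $\tilde{O}(1)$ bits, and store each $\tau \in T$ as the integer exponent of $1+\varepsilon'$ that defines it, again in $\tilde{O}(1)$ bits. Multiplying by the total element count then yields $\tilde{O}\bigl(kp^{-1}(\varepsilon')^{-1}\log(k/c)\bigr)$, which, after absorbing the $\log k$ factor into $\tilde{O}(\cdot)$, collapses to the stated $\tilde{O}\bigl(kp^{-1}(\varepsilon')^{-1}(-\ln c)\bigr)$.

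The main obstacle here is purely bookkeeping: verifying that Observation~\ref{obs:x_type} transfers cleanly to each $x_\tau$ despite the fact that $\tau$ may enter $T$ partway through the stream (so that $x_\tau$ effectively ignores some prefix of the input), and being slightly careful about which logarithmic factors get absorbed into $\tilde{O}(\cdot)$ in the final step. There is no substantial conceptual difficulty beyond these.
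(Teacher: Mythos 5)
Your proposal is correct and follows essentially the same route as the paper: bound $|T| = O((\varepsilon')^{-1}(\ln k - \ln c))$ via the ratio of the interval endpoints, observe that each $x_\tau$ is maintained exactly as in Algorithm~\ref{alg:main_algorithm} so it costs $O(k/p)$ elements and $\tilde{O}(k/p)$ space, and multiply. The only cosmetic difference is that the paper notes $T$ need not be stored explicitly (it is recomputable from $m$), whereas you store each threshold as its exponent; both cost only $\tilde{O}(1)$ per value, so the bounds are unaffected.
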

\begin{proof}
The number of estimates in $T$ is upper bounded at all times by
\[
	1 + \log_{1 + \varepsilon'} \left(\frac{km/c}{m/(1 + \varepsilon')}\right)
	=
	1 + \frac{1 + \ln k - \ln c}{\ln(1 + \varepsilon')}
	\leq
	1 + \frac{1 + \ln k - \ln c}{2\varepsilon'/3}
	=
	O((\varepsilon')^{-1}(\ln k - \ln c))
	\enspace.
\]
Algorithm~\ref{alg:estimatingTau} maintains for every $\tau \in T$ the same information maintained by Algorithm~\ref{alg:main_algorithm}, which requires $O(k/p)$ elements and $\tilde{O}(k/p)$ space for every $\tau \in T$, or equivalently, $O(kp^{-1}(\varepsilon')^{-1}(\ln k - \ln c))$ elements and $\tilde{O}(kp^{-1}(\varepsilon')^{-1}(- \ln c))$ space for all the values in $T$ together. In addition to this information, the algorithm also has to store $m$, which requires constant space. We note that there is no need to explicitly store $T$ because the estimates added to it or removed from it in every update of $m$ can be easily determined using the old and new values of $m$.
\end{proof}

Our next objective is to show that the approximation guarantee of Algorithm~\ref{alg:main_algorithm} extends to Algorithm~\ref{alg:estimatingTau}. Let $\hat{m}$ and $\hat{T}$ be the final values of $m$ and $T$, respectively. We assume in this section that $c$ is set to the value given in Lemma~\ref{lem:bound-solution}---the value $\frac{\alpha(1 - p)}{2\alpha + (1 - p)^2}$. Recall that this is the value required to make Proposition~\ref{prp:results} hold. Using this choice of $c$ allows us to easily handle the rare case in which $\hat{T}$ is empty.

\begin{observation}
\label{obs:T-not-empty}
$c \in (0, 1/2]$, and thus, $\hat{T}$ is not empty unless $\hat{m} = 0$.
\end{observation}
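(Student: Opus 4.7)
The plan is to prove the two claims separately and note that the second follows from the first. For the bound $c \in (0, 1/2]$, I would substitute the explicit value $c = \frac{\alpha(1-p)}{2\alpha+(1-p)^2}$ supplied by Lemma~\ref{lem:bound-solution}, and use only $\alpha \in (0,1]$ and $p \in (0,1)$. Positivity of $c$ is immediate from positivity of numerator and denominator. The upper bound $c \leq 1/2$ reduces by cross-multiplication to $2\alpha(1-p) \leq 2\alpha + (1-p)^2$, i.e., $0 \leq 2\alpha p + (1-p)^2$, which is trivially true. (In fact the inequality is strict, but the stated form $c \leq 1/2$ is all that is needed.)

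For the second claim, I would observe that $\hat{T}$ is precisely the set of integer powers of $(1+\varepsilon')$ lying in the closed interval $[\hat{m}/(1+\varepsilon'),\, \hat{m}k/c]$, and that when $\hat{m} > 0$ this interval is nondegenerate. The key step is to exhibit an explicit power of $(1+\varepsilon')$ in the interval: taking $h := \lfloor \log_{1+\varepsilon'}(\hat{m}) \rfloor$, one has $(1+\varepsilon')^h \leq \hat{m}$ and $(1+\varepsilon')^h \geq \hat{m}/(1+\varepsilon')$, so this $(1+\varepsilon')^h$ lies in $[\hat{m}/(1+\varepsilon'),\, \hat{m}]$. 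Hence it already belongs to $\hat{T}$ provided that $\hat{m} \leq \hat{m}k/c$, equivalently $c \leq k$. Since $k \geq 1$ (the cardinality constraint is nontrivial) and $c \leq 1/2$ by the first part, we indeed have $c \leq 1/2 \leq k$, so $\hat{T} \neq \varnothing$ whenever $\hat{m} > 0$.

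I do not anticipate any real obstacle: the observation is a sanity check on the parameter choices. The only small point worth flagging is that we need $k \geq 1$ to chain $c \leq 1/2$ into $c \leq k$; this is implicit throughout the paper (the storage bound $O(k/p)$ from Observation~\ref{obs:x_type} and the whole semi-streaming framework presuppose it). Once this is noted, both parts follow from elementary algebra.
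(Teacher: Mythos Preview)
Your proposal is correct and follows essentially the same approach as the paper. The paper's own proof is even terser: it bounds $c \leq \alpha/(2\alpha) = 1/2$ directly by noting that the numerator $\alpha(1-p)$ is at most $\alpha$ and the denominator $2\alpha+(1-p)^2$ is at least $2\alpha$, and it leaves the ``$\hat{T}$ is not empty'' implication entirely implicit; your cross-multiplication argument and your explicit exhibition of a power $(1+\varepsilon')^h \in \hat{T}$ are more detailed but not materially different.
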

\begin{proof}
Since $\alpha \in (0, 1]$ and $p \in (0, 1)$,
\[
	c
	=
	\frac{\alpha(1 - p)}{2\alpha + (1 - p)^2}
	\in
	\left(0, \frac{\alpha}{2\alpha}\right]
	=
	(0, 1/2]
	\enspace.
	\qedhere
\]
\end{proof}

The last observation immediately implies that when $\hat{T}$ is empty, all the singletons have zero values. Thus, both $OPT$ and the empty set have zero values, which makes the output of Algorithm~\ref{alg:estimatingTau} optimal in this case. Hence, we can assume from now on that $\hat{T} \neq \varnothing$. The following lemma shows that $\hat{T}$ contains a good estimate for $f(OPT)$ in this case.
\begin{lemma}
\label{lem:apxtau-tau-in-T}
	The set $\hat{T}$ contains a value $\hat{\tau}$ such that $(1-\varepsilon') \cdot f(OPT)\leq\hat{\tau}\leq f(OPT)$.
\end{lemma}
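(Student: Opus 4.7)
The plan is to take $\hat{\tau}$ to be the largest power of $(1+\varepsilon')$ that does not exceed $f(OPT)$, and then verify both that it approximates $f(OPT)$ well and that it lies in the interval $[\hat{m}/(1+\varepsilon'),\, \hat{m}k/c]$ defining $\hat{T}$. To make the second verification work, the key preliminary step is to sandwich $f(OPT)$ between $\hat{m}$ and $k\hat{m}$, after which the approximation quality and membership follow by routine estimates.

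First, I would establish that $\hat{m} \leq f(OPT) \leq k \hat{m}$. The lower bound is immediate: $\varnothing$ is feasible and each singleton $\{u\}$ is feasible (since $k \geq 1$), so by optimality $f(OPT) \geq \max\{f(\varnothing), \max_u f(\{u\})\} = \hat{m}$. The upper bound uses subadditivity of non-negative submodular functions: iterating $f(A \cup B) \leq f(A) + f(B) - f(A \cap B) \leq f(A) + f(B)$ along the elements of $OPT$ yields $f(OPT) \leq \sum_{u \in OPT} f(\{u\}) \leq |OPT| \cdot \hat{m} \leq k \hat{m}$.

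Next, set $\hat{\tau} = (1+\varepsilon')^h$ with $h = \lfloor \log_{1+\varepsilon'} f(OPT) \rfloor$. By definition of the floor, $\hat{\tau} \leq f(OPT) < (1+\varepsilon')\hat{\tau}$, so on the one hand $\hat{\tau} \leq f(OPT)$, and on the other hand $\hat{\tau} > f(OPT)/(1+\varepsilon') \geq (1-\varepsilon') \cdot f(OPT)$, using $(1-\varepsilon')(1+\varepsilon') = 1-(\varepsilon')^2 \leq 1$. This gives the desired approximation sandwich. To show $\hat{\tau} \in \hat{T}$, combine the two cases of the sandwich bound on $f(OPT)$ with $c \leq 1/2 \leq 1$ from Observation~\ref{obs:T-not-empty}: the lower endpoint holds because $\hat{\tau} > f(OPT)/(1+\varepsilon') \geq \hat{m}/(1+\varepsilon')$, and the upper endpoint holds because $\hat{\tau} \leq f(OPT) \leq k \hat{m} \leq \hat{m}k/c$.

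The only non-mechanical step is the subadditivity bound $f(OPT) \leq k\hat{m}$, which is the main (mild) obstacle; without the non-negativity of $f$ one could not simply drop the $f(A \cap B)$ term. Everything else is direct manipulation of the definition of $\hat{T}$, so no genuine difficulty is expected.
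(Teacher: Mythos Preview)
Your proposal is correct and follows essentially the same route as the paper: sandwich $f(OPT)$ between $\hat m$ and $k\hat m$, then take $\hat\tau$ to be the largest power of $(1+\varepsilon')$ not exceeding $f(OPT)$ and check both endpoints of the defining interval of $\hat T$. The only cosmetic difference is that the paper derives $f(OPT)\le k\hat m$ via the marginal decomposition $f(OPT)\le f(\varnothing)+\sum_{u\in OPT}[f(\{u\})-f(\varnothing)]\le\max\{f(\varnothing),\sum_{u\in OPT}f(\{u\})\}$, which also covers the degenerate case $OPT=\varnothing$ directly, whereas your subadditivity iteration implicitly assumes $OPT\neq\varnothing$; this is harmless here since $\hat m>0$ forces $f(OPT)\ge\hat m>0$ and the bound holds trivially when $OPT=\varnothing$.
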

\begin{proof}
Observe that $\hat{m}=\max\curly{f(\varnothing), \max_{u\in\groundset}\curly{f\big(\curly{u}\big)}}$. Thus, by the submodularity of $f$,
\[
	f(OPT)
	\leq
	f(\varnothing)
	+
	\sum_{u \in OPT} \mspace{-9mu} \bracks{f(\{u\}) - f(\varnothing)}
	\leq
	\max\curly{f(\varnothing), \sum_{u \in OPT} \mspace{-9mu} f\big(\{u\}\big)}
	\leq
	k\hat{m}
	\leq
	\frac{k\hat{m}}{c}
	\enspace.
\]
In contrast, by the definition of $OPT$,
\[
	f(OPT)
	\geq
	\max\curly{f(\varnothing), \max_{u\in\groundset}\curly{f\big(\curly{u}\big)}}
	=
	\hat{m}
	\enspace.
\]
Since $\hat{T}$ contains all the values of the form $(1 + \varepsilon')^i$ in the range $[\hat{m} / (1 + \varepsilon'), k\hat{m}/c]$, the above inequalities imply that it contains in particular the largest value of this form that is still not larger than $f(OPT)$. Let us denote this value by $\hat{\tau}$. By definition, $\hat{\tau} \leq f(OPT)$. Additionally,
\[
	\hat{\tau} \cdot (1 + \varepsilon') \geq f(OPT)
	\Rightarrow
	\hat{\tau} \geq \frac{f(OPT)}{1 + \varepsilon'} \geq (1 - \varepsilon') \cdot f(OPT)
	\enspace.
	\qedhere
\]
\end{proof}

Let us now concentrate on the value $\hat{\tau}$ whose existence is guaranteed by Lemma~\ref{lem:apxtau-tau-in-T}, and let $\bar{x}$ denote the fractional solution maintained by Algorithm~\ref{alg:main_algorithm} when it gets $\hat{\tau}$ as the estimate for $f(OPT)$. Additionally, let us denote by $u_1,u_2,\dots,u_n$ the elements of $\groundset$ in the order of their arrival, and let $u_j$ be the element whose arrival caused the addition of $\hat{\tau}$ to $T$, i.e., $u_j$ is the first element satisfying $\hat{\tau}\leq (k/c) \cdot f(\curly{u_j})$ (if $\hat{\tau} \in T$ from the very beginning, then we define $j = 0$). The following lemma shows that, prior to the arrival of $u_j$, the fractional solution $\bar{x}$ of Algorithm~\ref{alg:main_algorithm} was empty.
\begin{lemma}
\label{lem:apxtau-before-tau}
	For every integer $1 \leq t \leq j - 1$, $\partial_{u_t}F\big(\bar{x}\wedge\onevect{\curly{u_1,u_2,\dotsc,u_{t-1}}}\big)<\frac{c\hat{\tau}}{k}$, and thus, no fraction of $u_t$ was added to $\bar{x}$.
\end{lemma}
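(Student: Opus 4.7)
The plan is to prove the lemma by a straightforward induction on $t$, showing simultaneously that (a) no fraction of $u_t$ is added to $\bar{x}$ and (b) after processing $u_t$ the running fractional solution of Algorithm~\ref{alg:main_algorithm} is still $\onevect{\varnothing}$. This joint inductive statement is what makes the marginal in the lemma easy to evaluate.

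The two ingredients I would combine in the inductive step are the following. First, by the definition of $u_j$ (the first element of the stream satisfying $\hat{\tau}\leq(k/c)\cdot f(\{u_j\})$), every $t\in\{1,\dots,j-1\}$ must satisfy
\[
    f(\{u_t\}) < \frac{c\hat{\tau}}{k}.
\]
Second, by submodularity of $f$ applied to the sets $\varnothing \subseteq S$ for any $S$ not containing $u_t$, we have $f(S\cup\{u_t\})-f(S) \leq f(\{u_t\})-f(\varnothing) \leq f(\{u_t\})$, where the last inequality uses non-negativity of $f$. Rewriting the partial derivative $\partial_{u_t}F(y)$ (for any $y$ with $y_{u_t}=0$) as an expectation over $\RFunc(y)$ of marginals of the above form yields
\[
    \partial_{u_t}F(y) \leq f(\{u_t\}) - f(\varnothing) \leq f(\{u_t\}) < \frac{c\hat{\tau}}{k}.
\]

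With these pieces, the induction is immediate. For the base case $t=1$, $\bar{x}$ prior to the arrival of $u_1$ equals $\onevect{\varnothing}$, so the threshold check for $u_1$ fails by the bound above and $\bar{x}$ is left unchanged. For the inductive step, the hypothesis that $\bar{x}=\onevect{\varnothing}$ just before $u_t$ arrives gives $\partial_{u_t}F(\bar{x})=\partial_{u_t}F(\onevect{\varnothing})<c\hat{\tau}/k$, so the threshold check fails for $u_t$ as well and $\bar{x}$ remains $\onevect{\varnothing}$. Since Algorithm~\ref{alg:main_algorithm} only ever modifies coordinate $\bar{x}_{u_s}$ at the moment $u_s$ arrives, the values $\bar{x}_{u_1},\dots,\bar{x}_{u_{j-1}}$ stay at zero for the remainder of the stream; consequently, the vector $\bar{x}\wedge\onevect{\{u_1,\dots,u_{t-1}\}}$ in the lemma statement is exactly $\onevect{\varnothing}$, and the displayed inequality is precisely the bound proved above.

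The ``hard part'' is quite mild here: it is essentially recognizing that one can dispense with any argument about accumulated fractional mass and reduce the whole analysis to evaluating $\partial_{u_t}F$ at $\onevect{\varnothing}$, which boils down to the singleton bound $f(\{u_t\})-f(\varnothing)$. The only detail requiring care is that the bound on $\partial_{u_t}F$ from submodularity uses $f(\varnothing)\geq 0$ (guaranteed by non-negativity of $f$), so that the inequality $\partial_{u_t}F(\onevect{\varnothing})\leq f(\{u_t\})$ holds without any residual $-f(\varnothing)$ term blocking the comparison with $c\hat{\tau}/k$.
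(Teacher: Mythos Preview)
Your proof is correct and follows essentially the same approach as the paper: both arguments hinge on the chain $\partial_{u_t}F(\cdot)\leq f(\{u_t\})-f(\varnothing)\leq f(\{u_t\})<c\hat{\tau}/k$, obtained from submodularity, non-negativity, and the definition of $u_j$. The only difference is cosmetic: the paper applies the submodularity bound directly to $\bar{x}\wedge\onevect{\{u_1,\dots,u_{t-1}\}}$ without first identifying it as $\onevect{\varnothing}$, whereas you add an (unnecessary but harmless) inductive layer to establish that the vector is actually zero before evaluating the derivative there---your own second paragraph already proves the inequality for \emph{any} $y$ with $y_{u_t}=0$, so the induction is redundant.
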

\begin{proof}
	If $j = 0$, then the lemma is trivial. Otherwise, by the definition of $u_j$ as the first element obeying $\hat{\tau}\leq (k/c) \cdot f(\curly{u_j})$,
	\[
		\frac{c\hat{\tau}}{k}
		>
		f(\{u_t\})
		\geq
		f(\{u_t\}) - f(\varnothing)
		=
		\partial_{u_t} F(\onevect{\varnothing})
		\geq
		\partial_{u_t}F\big(\bar{x}\wedge\onevect{\curly{u_1,u_2,\dotsc,u_{t-1}}}\big)
		\enspace,
	\]
	where the second inequality follows from the non-negativity of $f$ and the last from its submodularity.
\end{proof}

According to the above discussion, from the moment $\hat{\tau}$ gets into $T$, Algorithm~\ref{alg:estimatingTau} updates the fractional solution $x_{\hat{\tau}}$ in the same way that Algorithm~\ref{alg:main_algorithm} updates $\bar{x}$ (note that, once $\hat{\tau}$ gets into $T$, it remains there for good since $\hat{\tau} \in \hat{T}$). Together with the previous lemma which shows that $\bar{x}$ is empty just like $x_{\hat{\tau}}$ at the moment $\hat{\tau}$ gets into $T$---which is also the moment of the arrival of $u_j$, this implies that the final value of $x_{\hat{\tau}}$ is equal to the final value of $\bar{x}$. Since the set $\hat{S}_{\hat{\tau}}$ is computed based on the final value of $x_{\hat{\tau}}$ in the same way that the output of Algorithm~\ref{alg:main_algorithm} is computed based on the final value of $\bar{x}$, we get the following corollary.

\begin{corollary} \label{cor:reduction1-2}
If it is guaranteed that the approximation ratio of Algorithm~\ref{alg:main_algorithm} is at least $\beta$ when $(1 - \varepsilon') \cdot f(OPT) \leq \tau \leq f(OPT)$ for some choice of the parameters $p$ and $c$, then the approximation ratio of Algorithm~\ref{alg:estimatingTau} is at most $\beta$ as well for this choice of $p$ and $c$.
\end{corollary}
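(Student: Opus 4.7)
The plan is to reduce the analysis of Algorithm~\ref{alg:estimatingTau} to the hypothesis on Algorithm~\ref{alg:main_algorithm} by identifying a single estimate $\hat{\tau}\in\hat{T}$ for which the output $\hat{S}_{\hat{\tau}}$ produced by Algorithm~\ref{alg:estimatingTau} has the same distribution as the output of Algorithm~\ref{alg:main_algorithm} fed with $\hat{\tau}$. Since Algorithm~\ref{alg:estimatingTau} returns the best of $\{\hat{S}_\tau\}_{\tau\in\hat{T}}$, a single good estimate suffices. First I would dispose of the degenerate case $\hat{T}=\varnothing$: by Observation~\ref{obs:T-not-empty} this forces $\hat{m}=0$, hence $f(\varnothing)=f(\{u\})=0$ for every $u\in\groundset$, and submodularity plus non-negativity yield $f(OPT)\leq f(\varnothing)+\sum_{u\in OPT}[f(\{u\})-f(\varnothing)]=0$, so the empty set returned by the algorithm meets the requirement trivially.

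Assuming $\hat{T}\neq\varnothing$, I would take $\hat{\tau}$ to be the estimate furnished by Lemma~\ref{lem:apxtau-tau-in-T}, so that $(1-\varepsilon')\cdot f(OPT)\leq\hat{\tau}\leq f(OPT)$, and then couple the vector $x_{\hat{\tau}}$ maintained by Algorithm~\ref{alg:estimatingTau} with the vector $\bar{x}$ that Algorithm~\ref{alg:main_algorithm} would maintain on the same stream given the estimate $\hat{\tau}$. Let $u_j$ be the first element whose arrival triggers the insertion of $\hat{\tau}$ into $T$ (taking $j=0$ if $\hat{\tau}\in T$ from the start). Before $u_j$, Lemma~\ref{lem:apxtau-before-tau} guarantees that Algorithm~\ref{alg:main_algorithm} leaves $\bar{x}$ equal to $\onevect{\varnothing}$, which is precisely the value at which Algorithm~\ref{alg:estimatingTau} initialises $x_{\hat{\tau}}$ upon the insertion of $\hat{\tau}$. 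From $u_j$ onward, $\hat{\tau}$ stays in $T$: $m$ is monotonically non-decreasing and is bounded above by $\hat{m}\leq\hat{\tau}(1+\varepsilon')$, so $m$ never leaves the admissibility window $[\hat{\tau}c/k,\hat{\tau}(1+\varepsilon')]$ of $\hat{\tau}$. During this second phase the update rule Algorithm~\ref{alg:estimatingTau} applies to $x_{\hat{\tau}}$ is verbatim the rule Algorithm~\ref{alg:main_algorithm} applies to $\bar{x}$, so a straightforward induction on the remaining arrivals yields $x_{\hat{\tau}}=\bar{x}$ at the end of the stream.

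Once this equality is in hand, the rounding and offline-maximisation subroutines defining $\hat{S}_{\hat{\tau}}$ in Algorithm~\ref{alg:estimatingTau} match the corresponding steps in Algorithm~\ref{alg:main_algorithm}, so $\Exp{f(\hat{S}_{\hat{\tau}})}$ is at least the expected value of the output of Algorithm~\ref{alg:main_algorithm} run with estimate $\hat{\tau}$, which by hypothesis is at least $f(OPT)/\beta$; taking the maximum over $\tau\in\hat{T}$ preserves this lower bound, finishing the proof. The main obstacle is the clean articulation of the coupling, which rests on two points: Algorithm~\ref{alg:main_algorithm} truly does nothing during the prefix before $u_j$ (supplied by Lemma~\ref{lem:apxtau-before-tau}) and $\hat{\tau}$ is never removed from $T$ after being added (supplied by the monotonicity of $m$ together with $\hat{\tau}\in\hat{T}$). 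Everything else amounts to copying the pseudocode of one algorithm into the other.
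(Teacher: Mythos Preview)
Your proposal is correct and mirrors the paper's own argument almost exactly: the paper also disposes of the empty-$\hat T$ case via Observation~\ref{obs:T-not-empty}, picks $\hat\tau$ from Lemma~\ref{lem:apxtau-tau-in-T}, uses Lemma~\ref{lem:apxtau-before-tau} to show $\bar x$ is empty before $u_j$, and then observes that $\hat\tau$ never leaves $T$ (the paper simply cites $\hat\tau\in\hat T$, which is equivalent to your window argument since $\hat m\le\hat\tau(1+\varepsilon')$ follows directly from membership in $\hat T$). The only cosmetic difference is that you spell out the coupling a bit more explicitly than the paper's one-paragraph summary preceding the corollary.
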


We are now ready to prove Theorems~\ref{trm:non-polynomial} and~\ref{trm:polynomial-oracle-access}.

\begin{proof}[Proof of Theorem~\ref{trm:non-polynomial}]
The proof of Theorem~\ref{trm:non-polynomial-knowing-tau} shows that Algorithm~\ref{alg:main_algorithm} achieves an approximation ratio of $3 + \varepsilon$ when it has access to a value $\tau$ obeying $(1-\varepsilon/8) \cdot f(OPT)\leq\tau\leq{f(OPT)}$ and its parameters are set to $\alpha = 1$, $p = \varepsilon / 8$ and $c = \frac{\alpha(1 - p)}{2\alpha + (1 - p)^2}$. According to Corollary~\ref{cor:reduction1-2}, this implies that setting the parameters $\alpha$, $p$ and $c$ of Algorithm~\ref{alg:estimatingTau} in the same way and setting $\varepsilon'$ to $\varepsilon / 8$, we get an algorithm whose approximation ratio is at most $3 + \varepsilon$ and does not assume access to an estimate of $f(OPT)$.

It remains to bound the space requirements of the algorithm obtained in this way. Observe that
\[
	c
	=
	\frac{\alpha(1 - p)}{2\alpha + (1 - p)^2}
	=
	\frac{1 - \varepsilon/8}{2 + (1 - \varepsilon/8)^2}
	\geq
	\frac{1/2}{3}
	=
	\frac{1}{6}
	\enspace.
\]
Plugging this bound and the equality $p = \varepsilon / 8$ into the guarantee of Observation~\ref{obs:space_tau}, we get that the algorithm we obtained stores at most $O(k\varepsilon^{-2} \log k)$ elements, and uses $\tilde{O}(k\varepsilon^{-2})$ space (since the algorithm for calculating $S_2$ uses only $O(k \varepsilon^{-1})$ space as explained in the proof of Theorem~\ref{trm:non-polynomial-knowing-tau}), which implies that it is a semi-streaming algorithm.
\end{proof}

\begin{proof}[Proof of Theorem~\ref{trm:polynomial-oracle-access}]
The proof of Theorem~\ref{trm:polynomial-knowing-tau-oracle} shows that Algorithm~\ref{alg:main_algorithm} runs in polynomial time and achieves an approximation ratio of $4.2819$ when it has access to a value $\tau$ obeying $(1-10^{-4}) \cdot f(OPT)\leq\tau\leq{f(OPT)}$, it has value oracle access to $F$ and its parameters are set to $\alpha = 0.460675$, $p = 0.24$ and $c = \frac{\alpha(1 - p)}{2\alpha + (1 - p)^2}$. Since Algorithm~\ref{alg:estimatingTau} requires only a polynomial amount of time on top of the time required by $|T|$ instances of Algorithm~\ref{alg:main_algorithm}, this implies that Algorithm~\ref{alg:estimatingTau} can also be implemented to run in polynomial time (given value oracle access to $F$) when the parameters $p$, $c$ and $\alpha$ are set as above and $\varepsilon'$ is set to $10^{-4}$. Moreover, Corollary~\ref{cor:reduction1-2} guarantees that, after setting the parameters in this way, the approximation ratio of Algorithm~\ref{alg:estimatingTau} is at most $4.2819$.

It remains to bound the space required by Algorithm~\ref{alg:estimatingTau} when the parameters are set as above. Observe that
\[
	c
	=
	\frac{\alpha(1 - p)}{2\alpha + (1 - p)^2}
	=
	\frac{0.460675 \cdot (1 - 0.24)}{2 \cdot 0.460675 + (1 - 0.24)^2}
	=
	\frac{0.3501092}{1.49895}
	\geq
	0.233
	\enspace.
\]
Plugging this bound and the equalities $p = 0.24$ and $\varepsilon' = 10^{-4}$ into the guarantee of Observation~\ref{obs:space_tau}, we get that Algorithm~\ref{alg:estimatingTau} with the above parameter values stores at most $O(k \log k)$ elements, and uses $\tilde{O}(k)$ space in addition to the $\poly(1/\varepsilon') \cdot \tilde{O}(k)$ space used by the algorithm for calculating $S^\tau_2$ (as explained in the proof of Theorem~\ref{trm:polynomial-knowing-tau-oracle}), which implies that it is a semi-streaming algorithm.
\end{proof}
\section{Approximating the Multilinear Extension} \label{apx:sampling}

In this appendix we prove Theorem~\ref{trm:polynomial} by presenting a polynomial time variant of Algorithm~\ref{alg:estimatingTau}, given below as Algorithm~\ref{alg:sampling}. We observe that the only difference between Algorithms~\ref{alg:estimatingTau} and~\ref{alg:sampling} is that the latter algorithm uses estimates for the partial derivatives of $F$ instead of the actual values of these derivatives (which are difficult to compute in polynomial time). These estimates are calculated in Line~\ref{line:estimates_calculation} of the algorithm.

\IncMargin{1em}
\begin{algorithm}[ht]
\label{alg:sampling}
\caption{\texttt{Multilinear Threshold with No Oracle Access to $F$} $(p,c,\varepsilon')$}
	
	\DontPrintSemicolon
	Let $m\gets f(\varnothing)$ and $T\gets\curly{(1+\varepsilon')^h\mid{m/(1 + \varepsilon')\leq(1+\varepsilon')^h\leq{mk/c}}}$.\\
	\For {each arriving element $u_i$} {
		\If {$m<f(\curly{u_i})$} {
			Update $m\gets{f(\curly{u_i})}$ and $T\gets\curly{(1+\varepsilon')^h\mid{m/(1 + \varepsilon')\leq(1+\varepsilon')^h\leq{mk/c}}}$. \label{line:T_update_F}\\
			Delete $x_\tau$ for every value $\tau$ removed from $T$ in Line~\ref{line:T_update_F}.\\
			Initialize $x_{\tau}\gets\onevect{\varnothing}$ for every value $\tau$ added to $T$ in Line~\ref{line:T_update_F}.
		}
		\For {every $\tau\in{T}$} {
			Let $\bar\partial^\tau_{u_i}F(x_\tau)$ be an approximation of
			\[
			\partial_{u_i}F(x_\tau)
			={} F(x_\tau\vee\onevect{u_i})
			-F(x_\tau\wedge\onevect{\groundset\setminus{u_i}})
			={} \Exp{f\big(\RFunc\left(x_\tau\right)\cup\curly{u_i}\big)
			-f\big(\RFunc(x_\tau)\big)}
			\]
			obtained by averaging $\ell=\ceil{\frac{4800(p^{-1} + 1)^2k^2}{[\varepsilon'(1-\varepsilon')]^2} \cdot \ln\left(80i^2(\varepsilon')^{-1}\right)}$ samples. \label{line:estimates_calculation}\\
			\If {$\bar\partial^\tau_{u_i}F(x_\tau)\geq\frac{c\tau}{k}$} {
				$x_\tau\gets{x_\tau}+\min\curly{p,k-\onenorm{x_\tau}}\cdot\onevect{u_i}$.
			}
		}
	}
	\For {every $\tau\in{T}$} {
		Round the vector $x_\tau$ to yield a feasible solution $S^\tau_1$ such that $\Exp{f(S^\tau_1)}\geq{F(x_\tau)}$.\\
		Find a feasible solution $S^\tau_2\subseteq\supp(x_\tau)$ such that $\Exp{f(S^\tau_2)} \geq \alpha \cdot \arg \max_{S \subseteq \supp(x_\tau), |S| \leq k} f(S)$.\\
		Let $\hat{S}_\tau$ be the better solution among $S^\tau_1$ and $S^\tau_2$.
	}
	\Return{the best solution among $\{\hat{S}_\tau\}_{\tau \in T}$, or the empty set if $T = \varnothing$}.
\end{algorithm}\DecMargin{1em}

We begin the analysis of Algorithm~\ref{alg:sampling} by bounding its space complexity.

\begin{lemma}
\label{lem:apxF-space}
	Assuming it takes $O(1)$ space to store an element of $\groundset$ and a value returned by $f$, Algorithm~\ref{alg:sampling} can be implemented so that it stores at most $O(kp^{-1}(\varepsilon')^{-1}(\ln k - \ln c))$ elements and its space complexity is $\tilde{O}(kp^{-1}(\varepsilon')^{-1}(- \ln c) + \log \varepsilon^{-1} + \log (1 - \varepsilon)^{-1} + \log p^{-1})$, excluding the space complexity of the algorithm used to find the sets $S^\tau_2$.
\end{lemma}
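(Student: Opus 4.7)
The plan is to reduce the space analysis to that of Algorithm~\ref{alg:estimatingTau} (Observation~\ref{obs:space_tau}) and then account for the additional overhead introduced by the sampling procedure in Line~\ref{line:estimates_calculation}. Conceptually, Algorithm~\ref{alg:sampling} maintains exactly the same persistent data structures as Algorithm~\ref{alg:estimatingTau}---the scalar $m$, the collection $T$ (which is implicit and recomputable from $m$), and for each $\tau\in T$ a fractional solution $x_\tau$ supported on the stored elements---while the only novel ingredient is that each comparison against the threshold $c\tau/k$ is preceded by the computation of an empirical average of $\ell$ random samples. Thus Observation~\ref{obs:space_tau} already supplies the first, dominant part of the bound: $O(kp^{-1}(\varepsilon')^{-1}(\ln k-\ln c))$ stored elements and $\tilde O(kp^{-1}(\varepsilon')^{-1}(-\ln c))$ total space for the persistent state.

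The first thing I would argue is that sampling can be done in a streaming fashion, reusing memory across samples. For a fixed $\tau$ and arriving element $u_i$, the algorithm draws samples one at a time: generate the random set $\RFunc(x_\tau)$ by independently including each element of $\supp(x_\tau)$ with its prescribed probability; query $f$ on $\RFunc(x_\tau)\cup\{u_i\}$ and on $\RFunc(x_\tau)$; add the difference to a running sum; then discard the sample. Because $|\supp(x_\tau)|=O(k/p)$ by Observation~\ref{obs:x_type}, storing one sample needs $O(k/p)$ element slots, which is already dominated by the space for $x_\tau$ itself. At most one such sample has to be in memory at any time, so no new element slots are needed beyond those charged to the persistent state.

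The only genuinely new overhead is therefore a loop counter (up to $\ell$) and a running accumulator of magnitude at most $\ell$ times an $O(1)$-space value of $f$, both of which require $O(\log\ell)$ bits. Plugging in the definition
\[
\ell=\Big\lceil\tfrac{4800(p^{-1}+1)^2k^2}{[\varepsilon'(1-\varepsilon')]^2}\ln\!\big(80i^2(\varepsilon')^{-1}\big)\Big\rceil,
\]
one gets $\log\ell=O(\log k+\log p^{-1}+\log(\varepsilon')^{-1}+\log(1-\varepsilon')^{-1}+\log\log i+\log\log(\varepsilon')^{-1})$. Since $i\le|\groundset|$, the two $\log\log$ terms disappear into the $\tilde O$ (polylog in $|\groundset|$), and the remaining terms $\log p^{-1}$, $\log(\varepsilon')^{-1}$, $\log(1-\varepsilon')^{-1}$ are precisely the extra summands appearing in the lemma statement.

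Combining these two contributions yields the claimed bounds. The step that requires the most care---though it is essentially bookkeeping rather than a genuine obstacle---is verifying that the sample counter and accumulator truly need only $O(\log\ell)$ bits, which relies on the standing assumption that any single value of $f$ (and hence any sample $f(\RFunc(x_\tau)\cup\{u_i\})-f(\RFunc(x_\tau))$) occupies $O(1)$ space; once this is granted, the additive contribution of the sampling procedure is exactly $O(\log\ell)$ and the whole bound follows by adding it to the guarantee of Observation~\ref{obs:space_tau}.
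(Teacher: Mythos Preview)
Your proposal is correct and follows essentially the same approach as the paper: invoke Observation~\ref{obs:space_tau} for the persistent state shared with Algorithm~\ref{alg:estimatingTau}, and then argue that the only additional cost of the sampling step is $O(\log\ell)$ for the running sum and loop counter. The paper's proof is a bit terser (it does not spell out, as you do, that each random set $\RFunc(x_\tau)$ lives inside the already-stored support and so introduces no new element slots), but the structure and the computation of $\log\ell$ are the same. One tiny remark: the term $\log\log(\varepsilon')^{-1}$ does not vanish into the $\tilde O$ (which suppresses only polylogs in $k$ and $|\groundset|$); rather, it is dominated by the $\log(\varepsilon')^{-1}$ term you already keep, so the final bound is unaffected.
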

\begin{proof}
Observe that apart from the space used to calculate the estimates of the derivatives, Algorithms~\ref{alg:estimatingTau} and~\ref{alg:sampling} share the same space complexity. Thus, in this proof we only bound the space required for computing the estimates.
	
To calculate each estimate, Algorithm~\ref{alg:sampling} has to store the sum of $\ell$ samples. Since we assume that each sample can be stored in constant space, storing this sum requires
	\begin{align*}
		O(\log \ell)
		={} &
		O\left(\log \left(\ceil{4800\left[\varepsilon'(1-\varepsilon')pk^{-1}\right]^{-2}\ln\left(80i^2(\varepsilon')^{-1}\right)}\right)\right)\\
		={} &
		O\left(\log k + \log \varepsilon^{-1} + \log (1 - \varepsilon)^{-1} + \log p^{-1} + \log \log (i^2/\varepsilon')\right)\\
		={} &
		\tilde{O}(\log \varepsilon^{-1} + \log (1 - \varepsilon)^{-1} + \log p^{-1})
		\enspace,
	\end{align*}
	where the last equality holds since $i$ is upper bounded by $n$ and the $\tilde{O}$ notation suppresses terms that are poly-logarithmic in $n$ and $k$. Since Algorithm~\ref{alg:sampling} need to store only one estimate at each time point, its space complexity exceeds the space complexity of Algorithm~\ref{alg:estimatingTau} only by the above expression.
\end{proof}

Our next objective is to analyze the approximation guarantee of Algorithm~\ref{alg:sampling}.
We note that the proofs of Observation~\ref{obs:T-not-empty} and Lemmata~\ref{lem:apxtau-tau-in-T} and~\ref{lem:apxtau-before-tau} apply also to Algorithm~\ref{alg:sampling} without any change. Thus, we know that Algorithm~\ref{alg:sampling} outputs an optimal solution if the final set $T$ is empty (so we assume from now that it is not), and that there exists a value $\hat{\tau}$ and integer $1 \leq j \leq n$ such that
\begin{compactitem}
	\item $\hat{\tau}$ enters $T$ when $u_j$ arrives (unless $\hat{\tau}$ belongs to $T$ from the very beginning of the algorithm, in which case we define $j = 1$),
	\item once $\hat{\tau}$ enters into $T$, it remains there until the algorithm terminates,
	\item for every $1 \leq t < j$, $\partial_{u_t} F(\tx \wedge \onevect{u_1, u_2, \dotsc, u_{t - 1}}) < c\hat{\tau}/k$, where $\tx$ denotes the final value of fractional solution $x_{\hat\tau}$,
	\item and $(1-\varepsilon') \cdot f(OPT)\leq\hat{\tau}\leq f(OPT)$.
\end{compactitem}
It is important to observe also that the value of $x_{\hat\tau}$ when the element $u_i$ arrives is $\tx\wedge\onevect{\curly{u_1,u_2,\dots,u_{i-1}}}$. Let us denote now by $\cE$ the event that the estimate $\bar\partial^{\hat{\tau}}_{u_i}F\left(\tx\wedge\onevect{\curly{u_1,u_2,\dots,u_{i-1}}}\right)$ that is calculated by Algorithm~\ref{alg:sampling} differs from $\partial_{u_i}F\left(\tx\wedge\onevect{\curly{u_1,u_2,\dots,u_{i-1}}}\right)$ by at most $\frac{\varepsilon'(1-\varepsilon')}{20k}\cdot f(OPT)$ for every $j\leq i\leq n$ (i.e., $\power{\bar\partial^{\hat{\tau}}_{u_i}F\left(\tx\wedge\onevect{\curly{u_1,u_2,\dots,u_{i-1}}}\right)-\partial_{u_i}F\left(\tx\wedge\onevect{\curly{u_1,u_2,\dots,u_{i-1}}}\right)}\leq\frac{\varepsilon'(1-\varepsilon')}{20k}\cdot f(OPT)$ for every $1 \leq i \leq n$). Intuitively, $\cE$ is the event that all the estimates done by Algorithm~\ref{alg:sampling} with respect to $\hat{\tau}$ are quite accurate. In the next few claims we show that $\cE$ is a high probability event. We first need the following known Chernoff-like lemma.

\begin{lemma}[Lemma B.3 from~\cite{to_appear:BuchbinderFS19}]
\label{lem:apxF-chernoff}
	Let $X_1,X_2,\dots,X_\ell$ be independent random variables such that for each $i$, $X_i\in\range{-1}{1}$. Let $X=\frac{1}{\ell}\sum_{i=1}^\ell{X_i}$ and $\mu=\Exp{X}$. Then
	\[ \pr{X>\mu+\alpha}\leq e^{-\frac{\alpha^2\ell}{12}} \mspace{50mu}\text{and}\mspace{50mu} \pr{X<\mu-\alpha}\leq e^{-\frac{\alpha^2\ell}{8}} \]
	for every $\alpha>0$.
\end{lemma}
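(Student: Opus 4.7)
The plan is to apply the Chernoff exponential moment method: for $t > 0$, apply Markov's inequality to $e^{t \sum_i X_i}$ to obtain
\[
\Pr[X > \mu + \alpha] \leq e^{-t \ell (\mu + \alpha)} \prod_{i=1}^{\ell} \mathbb{E}\!\left[e^{t X_i}\right],
\]
and symmetrically to $e^{-t\sum_i X_i}$ for the lower tail. The remaining task is to bound the moment generating function of each centered summand, for which two standard routes are available.

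The quickest route is Hoeffding's lemma: since $X_i - \mathbb{E}[X_i]$ is supported on an interval of length at most $2$, one has $\mathbb{E}[e^{t(X_i - \mathbb{E}[X_i])}] \leq e^{t^2/2}$. Substituting and optimizing with $t = \alpha$ yields $\Pr[X > \mu + \alpha] \leq e^{-\ell \alpha^2 / 2}$, which is already stronger than both claimed inequalities, so the lemma would follow immediately.

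The particular asymmetric constants $12$ and $8$, however, strongly suggest that the intended proof reduces to the standard multiplicative Chernoff bounds for $[0,1]$-valued variables via the affine rescaling $Y_i = (X_i + 1)/2 \in [0, 1]$. Setting $Y = \frac{1}{\ell}\sum_i Y_i$ and $\bar\mu = \mathbb{E}[Y] = (\mu + 1)/2 \in [0, 1]$, the event $\{X > \mu + \alpha\}$ rewrites as $\{Y > (1+\delta)\bar\mu\}$ with $\delta = \alpha/(2\bar\mu)$. The textbook upper-tail bound $\Pr[Y > (1+\delta)\bar\mu] \leq e^{-\delta^2 \bar\mu \ell / 3}$ combined with $\bar\mu \leq 1$ produces $e^{-\alpha^2 \ell/(12 \bar\mu)} \leq e^{-\alpha^2 \ell / 12}$, and the correspondingly tighter lower-tail bound $\Pr[Y < (1-\delta)\bar\mu] \leq e^{-\delta^2 \bar\mu \ell / 2}$ similarly yields $e^{-\alpha^2 \ell / 8}$.

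I do not anticipate any genuine obstacle; the only minor subtlety is the regime $\delta > 1$, where the textbook multiplicative form does not directly apply in its standard statement. In that range the claimed inequality is extremely loose, so I would simply fall back on the Hoeffding-based estimate above, or alternatively invoke the uniform form $\Pr[Y \geq (1+\delta)\bar\mu] \leq e^{-\delta^2 \bar\mu \ell/(2+\delta)}$ and split into cases. Either way, the proof is a routine concentration argument and the main choice is cosmetic: picking whichever off-the-shelf bound most cleanly matches the stated constants.
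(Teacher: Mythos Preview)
The paper does not prove this lemma at all; it is quoted verbatim as Lemma~B.3 of~\cite{to_appear:BuchbinderFS19} and used as a black box, so there is no proof in the paper to compare against. Your argument is correct: the Hoeffding route already gives the stronger bound $e^{-\alpha^2\ell/2}$ on both tails, and your rescaling $Y_i=(X_i+1)/2$ together with the standard multiplicative Chernoff bounds $e^{-\delta^2\bar\mu\ell/3}$ and $e^{-\delta^2\bar\mu\ell/2}$ recovers exactly the stated constants $12$ and $8$ once one uses $\bar\mu\le 1$; your handling of the $\delta>1$ regime by falling back on Hoeffding is also fine.
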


In the next lemma we show that each estimate that Algorithm~\ref{alg:sampling} calculates is not likely to be too far away from the actual value of the derivative.
\begin{lemma}
\label{lem:apxF-prob-each-i}
	For every $j\leq i\leq n$, $\Pr\Big[\power{\bar\partial^{\hat{\tau}}_{u_i}F\left(\tx\wedge\onevect{\curly{u_1,u_2,\dots,u_{i-1}}}\right)-\partial_{u_i}F\left(\tx\wedge\onevect{\curly{u_1,u_2,\dots,u_{i-1}}}\right)}>\frac{\varepsilon'(1-\varepsilon')}{20k}\cdot f(OPT)\Big]\leq\frac{\varepsilon'}{40i^2}$.
\end{lemma}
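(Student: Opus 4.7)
The plan is to write $\bar\partial^{\hat\tau}_{u_i} F(x_{\hat\tau})$ as the empirical mean of $\ell$ i.i.d.\ samples and apply the Chernoff-like inequality of Lemma~\ref{lem:apxF-chernoff}. Since Lemma~\ref{lem:apxtau-before-tau} and the discussion after it show that the value of $x_{\hat\tau}$ at the time $u_i$ is processed equals $\tx \wedge \onevect{\curly{u_1, \dotsc, u_{i-1}}}$, each of the $\ell$ samples is an independent realization of the random variable $Y = f(R \cup \curly{u_i}) - f(R)$ with $R \sim \RFunc\big(\tx \wedge \onevect{\curly{u_1, \dotsc, u_{i-1}}}\big)$. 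The identity displayed in the pseudocode of Algorithm~\ref{alg:sampling} guarantees $\Exp{Y} = \partial_{u_i} F\big(\tx \wedge \onevect{\curly{u_1, \dotsc, u_{i-1}}}\big)$.

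The second step is to derive a pointwise bound $|Y| \leq B$ so that the rescaled samples $X_j := Y_j/B$ lie in $\range{-1}{1}$. The upper bound $Y \leq f(\curly{u_i}) - f(\varnothing) \leq f(OPT)$ follows from submodularity together with non-negativity of $f$ and the fact that $\curly{u_i}$ is a size-$1$ feasible solution, so $f(\curly{u_i}) \leq f(OPT)$. For the lower bound, $R \subseteq \supp(x_{\hat\tau})$ has at most $\ceil{k/p} + 1$ elements by Observation~\ref{obs:x_type}, and subadditivity of non-negative submodular functions (immediate from $f(A \cup B) + f(A \cap B) \leq f(A) + f(B)$ together with $f \geq 0$) gives $f(R) \leq \sum_{u \in R} f(\curly{u}) \leq |R| \cdot f(OPT)$, hence $Y \geq -(\ceil{k/p}+1)\, f(OPT)$. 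Together these yield $|Y| \leq B$ for a suitable $B$ of order $k(p^{-1}+1)\cdot f(OPT)$.

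Finally, I would apply Lemma~\ref{lem:apxF-chernoff} to the $X_j$ with deviation $\alpha := \varepsilon'(1-\varepsilon')\, f(OPT) / (20 k B)$; summing the two one-sided tails gives $\Pr[\,|\bar X - \Exp{\bar X}| > \alpha\,] \leq 2 e^{-\alpha^2 \ell/12}$. The value of $\ell$ in Line~\ref{line:estimates_calculation} is chosen precisely so that $\alpha^2 \ell / 12 \geq \ln(80 i^2 / \varepsilon')$, which yields the target bound $2\cdot\varepsilon'/(80 i^2) = \varepsilon'/(40 i^2)$. The main obstacle I expect is nailing down a pointwise bound on $|Y|$ tight enough to fit into the $(p^{-1}+1)^2 k^2$ numerator of $\ell$: the naive bound above gives $B = O(k(p^{-1}+1))\,f(OPT)$, whereas the stated $\ell$ asks for a bound closer to $B = O(p^{-1}+1)\,f(OPT)$, and closing this gap seems to require a more delicate argument that exploits either the independence structure of $R \sim \RFunc(x_{\hat\tau})$ or the marginal-based construction of $x_{\hat\tau}$ itself, rather than the crude worst-case estimate $f(R) \leq |R|\cdot f(OPT)$.
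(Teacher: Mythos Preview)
Your overall plan is exactly the paper's: represent the estimate as an average of $\ell$ i.i.d.\ samples of $Y=f(R\cup\{u_i\})-f(R)$ with $R\sim\RFunc\big(\tx\wedge\onevect{\{u_1,\dotsc,u_{i-1}\}}\big)$, bound $|Y|$ pointwise, rescale into $[-1,1]$, and apply Lemma~\ref{lem:apxF-chernoff}. Your upper bound $Y\le f(\{u_i\})\le f(OPT)$ and the final Chernoff computation both match the paper.

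The gap you correctly flagged is the only real issue, and it has a simple fix you are one step away from. Your subadditivity argument bounds $f(R)$ by a sum over \emph{singletons}, which costs a factor of $|R|\approx k/p$. Instead, partition $R$ into at most $\lceil |R|/k\rceil\le p^{-1}+1$ disjoint blocks, each of size at most $k$. Applying the very same subadditivity inequality (now to blocks rather than singletons) gives
\[
f(R)\ \le\ \sum_{\text{blocks }B} f(B)\ \le\ (p^{-1}+1)\cdot f(OPT),
\]
since $OPT$ maximizes $f$ over \emph{all} feasible sets of size $\le k$, not merely over singletons. Hence $Y\ge -(p^{-1}+1)\,f(OPT)$, giving the pointwise bound $|Y|\le (p^{-1}+1)\,f(OPT)$. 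With $B=(p^{-1}+1)\,f(OPT)$ and $\alpha=\varepsilon'(1-\varepsilon')/[20k(p^{-1}+1)]$, the choice of $\ell$ in Line~\ref{line:estimates_calculation} yields $\alpha^{2}\ell/12\ge\ln(80i^{2}/\varepsilon')$, so $2e^{-\alpha^{2}\ell/12}\le \varepsilon'/(40i^{2})$, exactly as you wrote. No ``delicate argument exploiting the independence structure'' is needed; the missing observation is simply that optimality of $OPT$ applies to size-$k$ sets, not just size-$1$ sets.
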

\begin{proof}
		The algorithm calculates the estimate $\bar\partial^{\hat{\tau}}_{u_i}F\left(\tx\wedge\onevect{\curly{u_1,u_2,\dots,u_{i-1}}}\right)$ by averaging $\ell$ samples of
	$f\big(\RFunc(\tx\wedge\onevect{\curly{u_1,u_2,\dots,u_{i-1}}})\cup\curly{u_i}\big)-f\big(\RFunc(\tx\wedge\onevect{\curly{u_1,u_2,\dots,u_{i-1}}})\big)$. Let $Y_t$ be the $t$-th such sample. Note that the expected value of $Y_t$ for every $1 \leq t \leq \ell$ is exactly $\partial_{u_i}F\left(\tx\wedge\onevect{\curly{u_1,u_2,\dotsc,u_{i-1}}}\right)$. Later in this proof, we will show that the value of $Y_t$ always fall within the range $\range{-(p^{-1} + 1)\cdot f(OPT)}{f(OPT)}$. However, before doing so, let us prove that the lemma holds under this assumption.

Consider the random variables $Y_1 / [(p^{-1} + 1) \cdot f(OPT)], Y_2 / [(p^{-1} + 1) \cdot f(OPT)], \dotsc, Y_\ell / [(p^{-1} + 1) \cdot f(OPT)]$. Due to the above assumption, the values of these random variables always fall within the range $\range{-1}{1}$, therefore, by Lemma~\ref{lem:apxF-chernoff},
	\begin{align*}
		\mspace{100mu}&\mspace{-100mu}\pr{\power{\bar\partial^{\hat{\tau}}_{u_i}F\left(\tx\wedge\onevect{\curly{u_1,u_2,\dots,u_{i-1}}}\right)
		-\partial_{u_i}F\left(\tx\wedge\onevect{\curly{u_1,u_2,\dots,u_{i-1}}}\right)}
		>{} \frac{\varepsilon'(1-\varepsilon')}{20k}\cdot f(OPT)}\\
		={} &
		\pr{\power{\frac{1}{\ell} \cdot \sum_{t = 1}^\ell Y_t - \Exp{\frac{1}{\ell} \cdot \sum_{t = 1}^\ell Y_t}} > \frac{\varepsilon'(1-\varepsilon')}{20k}\cdot f(OPT)}\\
		={} &
		\pr{\power{\frac{1}{\ell} \cdot \sum_{t = 1}^\ell \frac{Y_t}{(p^{-1} + 1) \cdot f(OPT)} - \Exp{\frac{1}{\ell} \cdot \sum_{t = 1}^\ell \frac{Y_t}{(p^{-1} + 1) f(OPT)}}} > \frac{\varepsilon'(1-\varepsilon')}{20k(p^{-1} + 1)}}\\
		\leq{} &
		2e^{-\frac{\ell\bracks{\frac{\varepsilon'(1-\varepsilon')}{20k(p^{-1} + 1)}}^2}{12}}
		\leq{}
		2e^{\ln\frac{\varepsilon'}{80i^2}}
		={}
		\frac{\varepsilon'}{40i^2}\enspace.
	\end{align*}
	
	It only remains to prove that for every $1\leq t\leq \ell$, the sample $Y_t$ is contained within the range $\range{-(p^{-1} + 1)\cdot f(OPT)}{f(OPT)}$. Since $f$ is submodular and $f(OPT)$ upper bounds the value of every set of $k$ elements with respect to $f$, every set $S$ of at most $\lceil p^{-1}k \rceil \leq (p^{-1} + 1)k$ elements obeys $f(S)\leq (p^{-1} + 1)\cdot f(OPT)$. Therefore,
	\[ Y_t
	\geq{} -f\left(\RFunc\left(\tx\wedge\onevect{\curly{u_1,u_2,\dots,u_{i-1}}}\right)\right)
	\geq{} -(p^{-1} + 1)\cdot f(OPT) \]
	because $\tx$ contains at most $\lceil p^{-1}k\rceil$ non-zero entries.
	In contrast, the submodularity of $f$ also implies
	\begin{align*}
		Y_t
		=
		f\left(\RFunc\left(\tx\wedge\onevect{\curly{u_1,u_2,\dots,u_{i-1}}}\right)\cup\curly{u_i}\right)
		-f\left(\RFunc\left(\tx\wedge\onevect{\curly{u_1,u_2,\dots,u_{i-1}}}\right)\right)
		\leq{} &
		f(\curly{u_i})-f(\varnothing)\\
		\leq{} &
		f(\curly{u_i})
		\leq{} f(OPT)\enspace.\enspace\qedhere
	\end{align*}
\end{proof}

Lemma~\ref{lem:apxF-prob-each-i} implies the following corollary.

\begin{corollary}
\label{cor:apxF-prob-all-i}
	$\pr{\cE}\geq 1 - \varepsilon'/20$.
\end{corollary}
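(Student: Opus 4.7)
The plan is straightforward: apply a union bound over the indices $i = j, j+1, \dotsc, n$ to the individual high-probability estimate guarantee provided by Lemma~\ref{lem:apxF-prob-each-i}.

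More concretely, I would first observe that the event $\neg\cE$ is, by definition, the union over $j \leq i \leq n$ of the ``bad'' events that the estimate $\bar\partial^{\hat\tau}_{u_i}F(\tx \wedge \onevect{\{u_1,\dotsc,u_{i-1}\}})$ deviates from the true derivative $\partial_{u_i}F(\tx \wedge \onevect{\{u_1,\dotsc,u_{i-1}\}})$ by more than $\frac{\varepsilon'(1-\varepsilon')}{20k} \cdot f(OPT)$. Lemma~\ref{lem:apxF-prob-each-i} upper bounds the probability of each such bad event by $\varepsilon'/(40i^2)$, so the union bound gives
\[
\Pr[\neg \cE] \;\leq\; \sum_{i=j}^{n} \frac{\varepsilon'}{40 i^2} \;\leq\; \frac{\varepsilon'}{40} \sum_{i=1}^{\infty} \frac{1}{i^2} \;=\; \frac{\varepsilon'}{40} \cdot \frac{\pi^2}{6} \;\leq\; \frac{\varepsilon'}{20}\enspace,
\]
where the final inequality uses $\pi^2/6 < 2$. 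Taking complements yields the desired bound $\Pr[\cE] \geq 1 - \varepsilon'/20$.

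There is essentially no obstacle here: the sample size $\ell$ in Line~\ref{line:estimates_calculation} of Algorithm~\ref{alg:sampling} was chosen precisely so that the per-index failure probability shrinks like $1/i^2$, which is summable. The only thing to double-check is that the sum bound $\sum_{i=1}^\infty 1/i^2 \leq 2$ is loose enough to leave a factor of $2$ of slack (it is), which is exactly what allows the $40$ in the denominator of Lemma~\ref{lem:apxF-prob-each-i} to become a $20$ in the corollary.
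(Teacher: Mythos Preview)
Your proposal is correct and essentially identical to the paper's proof: both apply the union bound to Lemma~\ref{lem:apxF-prob-each-i} and then bound $\sum_{i=j}^n 1/i^2$ by $2$. The only cosmetic difference is that the paper uses an integral comparison ($\sum_{i=1}^n 1/i^2 \leq 1 + \int_1^n x^{-2}\,dx \leq 2$) whereas you invoke the closed form $\pi^2/6 < 2$.
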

\begin{proof}
	By Lemma~\ref{lem:apxF-prob-each-i} and the union bound,
	\[ 1 - \pr{\cE}\leq
		\sum_{i=j}^n\frac{\varepsilon'}{40i^2}
		={} \frac{\varepsilon'}{40}\sum_{i=j}^n\frac{1}{i^2}
		\leq{} \frac{\varepsilon'}{40}+\frac{\varepsilon'}{40}\int_1^n\frac{1}{x^2}dx
		={} \frac{\varepsilon'}{40}-\frac{\varepsilon'}{40}\bracks{\frac{1}{x}}_1^n
		={} \frac{\varepsilon'}{40}-\frac{\varepsilon'}{40n}+\frac{\varepsilon'}{40}
		\leq{} \frac{\varepsilon'}{20}\enspace. \qedhere \]
	
\end{proof}

In the next few lemmata we show that, given that the event $\cE$ happens, the approximation ratio of Algorithm~\ref{alg:sampling} is good.
We first consider the case in which $\onenorm{\tx}=k$. The following lemma corresponds to Lemma~\ref{lem:Fx-exactlyK} and bounds the value of $F(\tx)$ from below in this case.

\begin{lemma}
\label{lem:apxF-exactlyK}
Assuming $\cE$ holds, if $\onenorm{\tx}={} k$, then $F(\tx)\geq{} \hat\tau\cdot(c-\varepsilon'/20)$.
\end{lemma}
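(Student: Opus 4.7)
My plan is to adapt the telescoping argument from the proof of Lemma~\ref{lem:Fx-exactlyK}, carefully propagating the sampling error that event $\cE$ permits. First I would label the elements actually added to $x_{\hat\tau}$ as $v_1,v_2,\dotsc,v_\ell$ in order of arrival. Since $\hat\tau$ only joins $T$ upon the arrival of $u_j$, every $v_t$ arrives at or after $u_j$, so $\cE$ controls the sampling error for every decision that contributes to $\tx$. A key observation is that at the moment $v_t$ is seen, the fractional solution $x_{\hat\tau}$ maintained by the algorithm is precisely $\tx\wedge\onevect{\curly{v_1,\dotsc,v_{t-1}}}$, since the algorithm's future updates only touch coordinates indexed by $v_{t+1},\dotsc,v_\ell$.

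Next I would write $F(\tx)$ as a telescoping sum and apply multilinearity, exactly as in Lemma~\ref{lem:Fx-exactlyK}, to obtain
\[
F(\tx) = F(\onevect{\varnothing}) + \sum_{t=1}^{\ell} \tx_{v_t}\cdot\partial_{v_t}F\big(\tx\wedge\onevect{\curly{v_1,\dotsc,v_{t-1}}}\big).
\]

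The main (and only) step that genuinely departs from Lemma~\ref{lem:Fx-exactlyK} is converting the algorithm's selection rule, which is phrased in terms of the estimated derivative $\bar\partial^{\hat\tau}_{v_t}F \geq c\hat\tau/k$, into a lower bound on the true derivative. Under $\cE$ the estimate differs from the true value by at most $\frac{\varepsilon'(1-\varepsilon')}{20k}\cdot f(OPT)$, yielding
\[
\partial_{v_t}F\big(\tx\wedge\onevect{\curly{v_1,\dotsc,v_{t-1}}}\big) \geq \frac{c\hat\tau}{k} - \frac{\varepsilon'(1-\varepsilon')}{20k}\cdot f(OPT).
\]
This is where the only real subtlety arises: the error is expressed in terms of $f(OPT)$, whereas the target lower bound is expressed in terms of $\hat\tau$. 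I would reconcile this by invoking Lemma~\ref{lem:apxtau-tau-in-T}, which provides $f(OPT)\leq \hat\tau/(1-\varepsilon')$; the subtracted term then collapses to $\varepsilon'\hat\tau/(20k)$, and each derivative is bounded below by $(c-\varepsilon'/20)\hat\tau/k$.

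Plugging this uniform bound into the telescoping sum, using $\sum_{t=1}^{\ell}\tx_{v_t} = \onenorm{\tx} = k$ by the hypothesis, and discarding the non-negative term $F(\onevect{\varnothing}) = f(\varnothing)$, gives $F(\tx)\geq(c-\varepsilon'/20)\hat\tau$, which is the claim. I do not anticipate any serious obstacle; the $(1-\varepsilon')$ factor inside the error bound that defines $\cE$ appears to be calibrated precisely so that the $f(OPT)$-to-$\hat\tau$ conversion produces the clean loss of $\varepsilon'/20$ in the approximation constant.
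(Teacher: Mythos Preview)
Your proposal is correct and follows essentially the same argument as the paper: telescope $F(\tx)$ using multilinearity, use $\cE$ together with the bound $(1-\varepsilon')f(OPT)\leq\hat\tau$ to replace each estimated derivative by the true one at the cost of $\varepsilon'\hat\tau/(20k)$, and sum against $\onenorm{\tx}=k$. The only cosmetic difference is that you index the telescoping sum by the selected elements $v_1,\dotsc,v_\ell$ whereas the paper sums over all $u_i$ with $i\geq j$; these are equivalent since $\tx_{u_i}=0$ for non-selected elements.
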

\begin{proof}
	Note that since $\tx$ contains a zero fraction of every element that arrived before $u_j$, the value of $F(\tx)$ can also be written as
	\begin{align*}
		F(\tx)
		={} &
		F(\onevect{\varnothing})+\sum_{i=j}^n\Big(F\big(\tx\wedge\onevect{\curly{u_1,u_2,\dotsc,u_i}}\big)
		-F\big(\tx\wedge\onevect{\curly{u_1,u_2,\dotsc,u_{i-1}}}\big)\Big)\\
		={} &
		F(\onevect{\varnothing})+\sum_{i=j}^n\Big(\tx_{u_i}\cdot\partial_{u_i}F\big(\tx\wedge\onevect{\curly{u_1,u_2,\dotsc,u_{i-1}}}\big)\Big)\enspace,
	\end{align*}
	where the second equality follows from the multilinearity of $F$. Since we assume that the event $\cE$ happened, for every $j \leq i \leq n$ it holds that
	\begin{align*} \bar\partial^{\hat{\tau}}_{u_i}(\tx\wedge\onevect{\curly{u_1,u_2,\dotsc,u_{i-1}}})
	\leq{}& \partial_{u_i}(\tx\wedge\onevect{\curly{u_1,u_2,\dotsc,u_{i-1}}})+\frac{\varepsilon'(1-\varepsilon')}{20k}\cdot f(OPT)\\
	\leq{}& \partial_{u_i}(\tx\wedge\onevect{\curly{u_1,u_2,\dotsc,u_{i-1}}})+\hat\tau\cdot\frac{\varepsilon'}{20k}\enspace, \end{align*}
	where the second inequality follows since $(1-\varepsilon')\cdot f(OPT)\leq\hat\tau$. Combining this with the observation that Algorithm~\ref{alg:sampling} selects $u_i$ only if $\bar\partial^{\hat{\tau}}_{u_i}F\big(\tx\wedge\onevect{\curly{u_1,u_2,\dotsc,u_{i-1}}}\big)\geq c\hat\tau/k$, we get
	\begin{align*}
	F(\tx)
	\geq{} &
	F(\onevect{\varnothing})
	+\sum_{i=j}^n\left[\tx_{u_i}\cdot\left(\bar\partial^{\hat{\tau}}_{u_i}F\big(\tx\wedge\onevect{\curly{u_1,u_2,\dotsc,u_{i-1}}}\big)
	-\hat\tau\cdot\frac{\varepsilon'}{20k}\right)\right]\\
	\geq{} &
	F(\onevect{\varnothing})
	+\hat\tau\cdot\left(\frac{c}{k}-\frac{\varepsilon'}{20k}\right)\cdot\sum_{i=j}^n\tx_{u_i}
	={} F(\onevect{\varnothing})
	+\hat\tau\cdot\left(\frac{c}{k}-\frac{\varepsilon'}{20k}\right)\cdot\onenorm{\tx}
	\geq{} \hat\tau\cdot\left(c-\frac{\varepsilon'}{20}\right)\enspace,
	\end{align*}
	where the last equality holds since $\tilde{x}$ contains a zero fraction of every element arriving before $u_j$ by the definition of $j$, and the last inequality follows since $F$ is non-negative and we assumed that $\onenorm{\tx}=k$.
\end{proof}

Next, we bound in the following lemma the value of $F(\tx)$ when $\onenorm{\tx}<k$. To prove this lemma we use the same steps used above to derive Corollary~\ref{cor:belowK}. Note that the bound we get here on $F(\tx)$ is equal up to a small error term to the bound in this corollary.
\begin{lemma}
\label{lem:apxF-belowK}
	Assuming $\cE$ holds, if $\onenorm{\tx}<k$, then $F(\tx)\geq
	(1-p)\cdot\Big[p\cdot f(OPT)+(1-p)\cdot f\big(OPT\setminus\supp(\tx)\big)\Big]-\hat\tau\cdot(c+\varepsilon'/20)$.
\end{lemma}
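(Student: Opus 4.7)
The plan is to mimic the derivation of Corollary~\ref{cor:belowK}, replacing the exact derivative bounds by estimate-based bounds that incur an additive error under $\cE$. Concretely, the proof will consist of two halves analogous to Lemmata~\ref{lem:belowK_lower} and~\ref{lem:belowK_upper}, followed by a rearrangement.

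For the lower bound on $F(\tx + \onevect{OPT \setminus \supp(\tx)})$, I would observe that the update rule for $x_\tau$ in Algorithm~\ref{alg:sampling} is identical to the update rule in Algorithm~\ref{alg:main_algorithm}, so Observation~\ref{obs:x_type} applies verbatim to $\tx$. When $\onenorm{\tx}<k$, this gives $\tx_u = p$ for every $u\in\supp(\tx)$. The proof of Lemma~\ref{lem:belowK_lower} then carries over unchanged (it only uses Corollary~\ref{cor:discardElements}, Lemma~\ref{lem:lovasz-multilinear}, and the definition of the Lov\'asz extension, none of which depend on how $\tx$ was produced), yielding
\[
F\big(\tx+\onevect{OPT\setminus\supp(\tx)}\big)\geq (1-p)\cdot\big[p\cdot f(OPT)+(1-p)\cdot f\big(OPT\setminus\supp(\tx)\big)\big].
\]

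The main new step is the upper bound on $F(\tx+\onevect{OPT\setminus\supp(\tx)})$. I will bound $\partial_u F(\tx)$ for each $u\in OPT\setminus\supp(\tx)$ by splitting according to arrival time. For $u=u_t$ with $t<j$, Lemma~\ref{lem:apxtau-before-tau} (which also applies here, since it only uses submodularity and the definition of $j$) together with submodularity of $F$ gives $\partial_{u_t}F(\tx)\leq \partial_{u_t}F(\onevect{\varnothing})< c\hat\tau/k$. For $u=u_i$ with $i\geq j$, the rejection of $u_i$ by the algorithm means $\bar\partial^{\hat\tau}_{u_i}F(\tx\wedge\onevect{\{u_1,\dots,u_{i-1}\}})<c\hat\tau/k$. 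Under $\cE$, the true derivative differs from the estimate by at most $\tfrac{\varepsilon'(1-\varepsilon')}{20k}\cdot f(OPT)\leq \tfrac{\varepsilon'}{20k}\cdot\hat\tau$ (using $\hat\tau\geq(1-\varepsilon')f(OPT)$), hence $\partial_{u_i}F(\tx\wedge\onevect{\{u_1,\dots,u_{i-1}\}})\leq c\hat\tau/k+\varepsilon'\hat\tau/(20k)$. Applying submodularity once more to pass from $\tx\wedge\onevect{\{u_1,\dots,u_{i-1}\}}$ to $\tx$ yields the same bound on $\partial_{u_i}F(\tx)$. In both cases,
\[
\partial_uF(\tx)\leq \frac{\hat\tau}{k}\cdot\left(c+\frac{\varepsilon'}{20}\right)\qquad\forall\, u\in OPT\setminus\supp(\tx).
\]

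Finally, summing these derivative bounds and using submodularity exactly as in Lemma~\ref{lem:belowK_upper} gives $F(\tx+\onevect{OPT\setminus\supp(\tx)})\leq F(\tx)+|OPT\setminus\supp(\tx)|\cdot\tfrac{\hat\tau}{k}(c+\varepsilon'/20)\leq F(\tx)+\hat\tau(c+\varepsilon'/20)$. Combining with the first half and rearranging yields the claim. There is no real obstacle; the only subtlety is carefully converting the additive sampling error from the scale $f(OPT)$ to the scale $\hat\tau$ using $\hat\tau\geq(1-\varepsilon')f(OPT)$, which is exactly why the factor $(1-\varepsilon')$ appears in the definition of $\cE$ and cancels to give the clean additive term $\hat\tau\varepsilon'/20$.
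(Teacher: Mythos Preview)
Your proposal is correct and follows essentially the same approach as the paper's proof: both split $OPT\setminus\supp(\tx)$ according to whether the element arrived before or after $\hat\tau$ entered $T$, handle the early elements via the definition of $j$ and the late ones via the rejection condition combined with the sampling-error bound from $\cE$, and then sum using submodularity exactly as in Lemma~\ref{lem:belowK_upper}. If anything, your write-up is slightly more explicit than the paper's about invoking submodularity to pass from the partial derivative at the time of arrival to the partial derivative at $\tx$.
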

\begin{proof}
As done in the proofs of Lemmata~\ref{lem:belowK_lower} and~\ref{lem:belowK_upper} (which were used to prove Corollary~\ref{cor:belowK}), we find lower and upper bounds for $F(\tx+\onevect{OPT\setminus\supp(\tx)})$. Since Algorithms~\ref{alg:main_algorithm} and~\ref{alg:sampling} both handle the scenario of an arriving element getting past the threshold for acceptance similarly, Observation~\ref{obs:x_type} and Lemma~\ref{lem:belowK_lower} apply to $\tx$. Thus,
	\begin{equation}
	\label{eq:apxF-belowK_lower}
		F\left(\tx+\onevect{OPT\setminus\supp(\tx)}\right)
		\geq{} (1-p)\cdot\bracks{p\cdot f(OPT)+(1-p)\cdot f\big(OPT\setminus\supp(\tx)\big)}
		\enspace.
	\end{equation}
	Moreover, for every element $u_i\in OPT\setminus\supp(\tx)$ it holds that, if $i < j$, then $\partial_{u_i} F(\tx) \leq \partial_{u_i} F(\tx \wedge \onevect{u_1, u_2, \dotsc, u_{i - 1}}) < c\hat{\tau} /k$ by the definition of $j$, and if $i \geq j$, then
	\[
	\partial_{u_i}F(\tx)-\hat\tau\cdot\frac{\varepsilon'}{20k}
	\leq{} \partial_{u_i}F(\tx)-\frac{\varepsilon'(1-\varepsilon')}{20k}\cdot f(OPT)
	\leq{} \bar\partial^{\hat{\tau}}_{u_i}F(\tx)
	<{} \frac{c\hat{\tau}}{k} \enspace, \]
	where the first inequality follows from the definition of $\hat\tau$, the second inequality follows from the assumption that the event $\cE$ happens, and the last inequality follows since the elements in $OPT\setminus\supp(\tx)$ were rejected by Algorithm~\ref{alg:sampling} and $f$ is submodular.
	The submodularity of $f$ also implies
	\begin{align}
	\label{eq:apxF-belowK_upper}
		F\big(\tx+\onevect{OPT\setminus\supp(\tx)}\big)
			\leq{} F(\tx)+\mspace{-40mu}\sum_{u\in OPT\setminus\supp(\tx)}
			{\mspace{-40mu}\partial_uF(\tx)}
			\leq{} &
			F(\tx)+|OPT \setminus \supp(\tx)| \cdot \hat\tau\cdot \left(\frac{c}{k}+\frac{\varepsilon'}{20k}\right)\nonumber\\
			\leq{} &
			F(\tx)+\hat\tau\cdot\left(c+\frac{\varepsilon'}{20}\right)\enspace,
	\end{align}
	where the last inequality follows since $OPT$ is a feasible solution, and thus, contains at most $k$ elements. The lemma now follows by combining Inequalities~\eqref{eq:apxF-belowK_lower} and~\eqref{eq:apxF-belowK_upper}.
\end{proof}

The following lemma is obtained by combining the results of the previous two. This lemma corresponds to Lemma~\ref{lem:bound-solution} from the analysis of Algorithm~\ref{alg:main_algorithm}.

\begin{lemma}
\label{lem:apxF-bound-solution}
	Assuming $\cE$ holds,
	$\Exp{\max\curly{f(S_1^{\hat\tau}),f(S_2^{\hat\tau})}}
	\geq\hat\tau\cdot\bracks{
	\min\curly{c,\frac{\alpha(1-p-c)}{\alpha+(1-p)^2}}-\frac{\varepsilon'}{20}}$.
	In particular, for $c=\frac{\alpha(1-p)}{2\alpha+(1-p)^2}$ we get
	$\Exp{\max\curly{f(S_1^{\hat\tau}),f(S_2^{\hat\tau})}}\geq\hat\tau\cdot\bracks{
	\frac{\alpha(1-p)}{2\alpha+(1-p)^2}-\frac{\varepsilon'}{20}}$.
\end{lemma}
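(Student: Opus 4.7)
The proof follows the same template as Lemma~\ref{lem:bound-solution}, with the only new wrinkle being that the $\varepsilon'/20$ additive error terms from Lemmata~\ref{lem:apxF-exactlyK} and~\ref{lem:apxF-belowK} need to be carried through the computation cleanly. My plan is therefore to mirror the earlier proof step by step and to check at the end that the accumulated error reduces to a single $\varepsilon'/20$ term.

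The ``in particular'' claim is purely algebraic and identical to the one in Lemma~\ref{lem:bound-solution}: for the prescribed value $c=\frac{\alpha(1-p)}{2\alpha+(1-p)^2}$ one verifies $\frac{\alpha(1-p-c)}{\alpha+(1-p)^2}=c$, so the minimum in the first part of the statement collapses to $c$, and the particular bound follows from the first part. So I would dispatch this first and then focus on the main inequality.

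For the main inequality I would split on whether $\onenorm{\tx}=k$ or $\onenorm{\tx}<k$. In the first case, Lemma~\ref{lem:apxF-exactlyK} together with the rounding guarantee $\Exp{f(S_1^{\hat\tau})}\geq F(\tx)$ immediately yields
\[
\Exp{\max\{f(S_1^{\hat\tau}),f(S_2^{\hat\tau})\}}\;\geq\;\Exp{f(S_1^{\hat\tau})}\;\geq\;\hat\tau\cdot(c-\varepsilon'/20),
\]
which is at least $\hat\tau\cdot[\min\{c,\frac{\alpha(1-p-c)}{\alpha+(1-p)^2}\}-\varepsilon'/20]$. In the second case I would plug Lemma~\ref{lem:apxF-belowK} into the rounding guarantee for $S_1^{\hat\tau}$ and combine it with the $\alpha$-approximation guarantee $\Exp{f(S_2^{\hat\tau})}\geq\alpha\cdot f(OPT\cap\supp(\tx))$ using the convex combination with weights $\frac{\alpha}{\alpha+(1-p)^2}$ and $\frac{(1-p)^2}{\alpha+(1-p)^2}$, exactly as in Lemma~\ref{lem:bound-solution}. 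After rearranging, submodularity gives $f(OPT\cap\supp(\tx))+f(OPT\setminus\supp(\tx))\geq f(OPT)\geq\hat\tau$, and I obtain
\[
\Exp{\max\{f(S_1^{\hat\tau}),f(S_2^{\hat\tau})\}}\;\geq\;\hat\tau\cdot\frac{\alpha(1-p-c)}{\alpha+(1-p)^2}-\frac{\alpha\hat\tau\,\varepsilon'/20}{\alpha+(1-p)^2}.
\]

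The only step that requires a moment of care — and the one I expect to be the sole potential pitfall — is showing that this error term is bounded above by $\hat\tau\cdot\varepsilon'/20$, so that one can absorb it into the claimed bound. This follows from the trivial observation $\frac{\alpha}{\alpha+(1-p)^2}\leq 1$ (since $(1-p)^2\geq 0$), after which the two cases combine to give the first part of the lemma. Everything else is essentially a transcription of the argument used for Lemma~\ref{lem:bound-solution}, with $\hat\tau$ playing the role of $\tau$ and the Lemma~\ref{lem:apxF-exactlyK}/Lemma~\ref{lem:apxF-belowK} estimates replacing the exact lower bounds of Lemma~\ref{lem:Fx-exactlyK} and Corollary~\ref{cor:belowK}.
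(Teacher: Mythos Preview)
Your proposal is correct and mirrors the paper's proof essentially verbatim: the same case split on $\onenorm{\tx}$, the same convex combination with weights $\frac{\alpha}{\alpha+(1-p)^2}$ and $\frac{(1-p)^2}{\alpha+(1-p)^2}$, and the same final observation that $\frac{\alpha}{\alpha+(1-p)^2}\leq 1$ to absorb the error into a single $\varepsilon'/20$ term. The only cosmetic difference is that the paper dispatches the ``in particular'' claim at the end rather than the beginning.
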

\begin{proof}
	If $\onenorm{\tx}=k$, then by the definition of $S_1^{\hat\tau}$
	\begin{equation}
	\label{eq:apxF-solution_exactlyK}
		\Exp{\max\curly{f(S_1^{\hat\tau}),f(S_2^{\hat\tau})}}
		\geq{} \Exp{f(S_1^{\hat\tau})}\geq{} F(\tx)
		\geq{} \hat\tau\cdot\left(c-\frac{\varepsilon'}{20}\right)\enspace,
	\end{equation}
	where the last inequality follows from Lemma~\ref{lem:apxF-exactlyK}.
	To address the case in which $\onenorm{\tx}<k$, note that $OPT\cap\supp(\tx)$ is a feasible solution within the support of $\tx$. Thus, $\Exp{f(S_2^{\hat\tau})}\geq\alpha\cdot f(OPT\cap\supp(\tx))$ by the definition of $S_2^{\hat\tau}$. Therefore, since no convex combination of two values is higher than their maximum, Lemma~\ref{lem:apxF-belowK} yields
	\begin{align*}
		\Exp{\max\curly{f(S_1),f(S_2)}}\mspace{-81mu}&\mspace{81mu}
		\geq
		\max\curly{\Exp{f(S_1)},\Exp{f(S_2)}}\\
		\geq{}&
		\max\bigg\{(1-p)\cdot\big[p\cdot f(OPT)+(1-p)\cdot f(OPT\setminus\supp(\tx))\big]-\hat{\tau}\left(c+\frac{\varepsilon'}{20}\right),\\
		&\alpha\cdot{f(OPT\cap\supp(\tx))}\bigg\}\\
		\geq{} &
		\frac{\alpha}{\alpha+(1-p)^2}\cdot
		\bracks{(1-p)\cdot\bracks{p\cdot f(OPT)+(1-p)\cdot f(OPT\setminus\supp(\tx))}-\hat{\tau}\left(c+\frac{\varepsilon'}{20}\right)}\\
		&+\frac{(1-p)^2}{\alpha+(1-p)^2}\cdot\alpha\cdot f(OPT\cap\supp(\tx))
		\enspace.
	\end{align*}
	Observe that the rightmost side in the above inequality is equal to
	\[
	\frac{\alpha(1-p)^2}{\alpha+(1-p)^2}
	\cdot\bracks{f(OPT\cap\supp(\tx))+f(OPT\setminus\supp(\tx))}
	+\frac{\alpha\bracks{p(1-p)\cdot f(OPT)-\hat{\tau}\left(c+\frac{\varepsilon'}{20}\right)}}{\alpha+(1-p)^2}\enspace.
	\]
	Moreover, since $f$ is submodular and non-negative, it holds that $f(OPT\cap\supp(\tx))+f(OPT\setminus\supp(\tx))\geq f(OPT)\geq\hat\tau$.
	Therefore, by combining all the above, we get
	\begin{align}
	\label{eq:apxF-solution_belowK}
		\Exp{\max\curly{f(S_1),f(S_2)}}
		\geq{} &
		\frac{\alpha(1-p)^2}{\alpha+(1-p)^2}\cdot\hat\tau +\frac{\alpha\bracks{p(1-p)\cdot \hat\tau-\left(c+\frac{\varepsilon'}{20}\right)\hat\tau}}{\alpha+(1-p)^2}\nonumber\\
		={} &
		\hat\tau\cdot\frac{\alpha\left(1-p-c-\frac{\varepsilon'}{20}\right)}{\alpha+(1-p)^2}
		\geq{}
		\hat\tau\cdot\bracks{\frac{\alpha(1-p-c)}{\alpha+(1-p)^2}-\frac{\varepsilon'}{20}}\enspace.
	\end{align}
	The first part of the lemma now follows from Inequalities~\eqref{eq:apxF-solution_exactlyK} and~\eqref{eq:apxF-solution_belowK}. To prove the second part, note that in the proof of Lemma~\ref{lem:bound-solution} we have shown that setting $c=\frac{\alpha(1-p)}{2\alpha+(1-p)^2}$ in $\frac{\alpha(1-p-c)}{\alpha+(1-p)^2}$ yields $c=\frac{\alpha(1-p-c)}{\alpha+(1-p)^2}$. Hence, when the parameter $c$ is set as above, 
	\[c-\frac{\varepsilon'}{20}=\frac{\alpha(1-p-c)}{\alpha+(1-p)^2}-\frac{\varepsilon'}{20}=\frac{\alpha(1-p)}{2\alpha+(1-p)^2}-\frac{\varepsilon'}{20}\enspace. \qedhere\]
\end{proof}

We are now ready to prove Theorem~\ref{trm:polynomial}.
\begin{proof}[Proof of Theorem~\ref{trm:polynomial}]
In the proof of Theorem~\ref{trm:polynomial-knowing-tau-oracle}, we have discussed an algorithm by Buchbinder et al.~\cite{DBLP:conf/soda/BuchbinderFNS14} for the offline version of our problem, and described the approximation ratio of this algorithm for instances in which the size of the ground set is upper bounded by $\lceil k/p \rceil$. In particular, we showed that for $p=0.24$ this algorithm can find a solution whose expected value is at least a $0.460675$ fraction of the optimal value. Thus, it is possible to implement Algorithm~\ref{alg:sampling} with $\alpha=0.460675$, $p=0.24$, $c=\frac{\alpha(1-p)}{2\alpha+(1-p)^2}$ and $\varepsilon'=10^{-4}$. By Lemma~\ref{lem:apxF-bound-solution}, conditioned on the event $\cE$, the expected value of the output set $T$ of such an implementation is at least
		\begin{align*}
		\Exp{f(T)}
		\geq
		\hat\tau\cdot\bracks{\frac{\alpha(1-p)}{2\alpha+(1-p)^2}-\frac{\varepsilon'}{20}}
		\geq{} &
		0.9999\cdot f(OPT)\cdot\bracks{\frac{0.460675\cdot(1-0.24)}{2\cdot0.460675+(1-0.24)^2}-0.000005}\\
		\geq{} &
		f(OPT)\cdot\bracks{\frac{0.350113}{1.49895}-0.000005}
		\geq 0.233567\cdot f(OPT)\enspace.
	\end{align*}
Using the low of total expectation, we now get
	\begin{align*}
		\Exp{f(T)}
		={} &
		\pr{\cE}\cdot \Exp{f(T) \mid \cE} + \Pr[\bar{\cE}] \cdot \Exp{f(T) \mid \bar{\cE}}
		\geq
		\pr{\cE}\cdot \Exp{f(T) \mid \cE}\\
		\geq{} &
		\left(1 - \frac{\varepsilon'}{20}\right) \cdot \Exp{f(T) \mid \cE}
		\geq
		0.999995 \cdot 0.233567\cdot f(OPT)
		\geq
		\frac{f(OPT)}{4.282}
		\enspace,
	\end{align*}
	where the first inequality holds by the non-negativity of $f$ and the second inequality follows from Corollary~\ref{cor:apxF-prob-all-i}. 
This shows that the above mentioned implementation of Algorithm~\ref{alg:sampling} achieves the approximation guarantee of Theorem~\ref{trm:polynomial}.

To complete the proof of the theorem, it remains to observe that, for the above specified values for the parameters $p$, $\varepsilon'$ and $c$, Lemma~\ref{lem:apxF-space} shows that Algorithm~\ref{alg:sampling} obeys the space complexity guarantees of Theorem~\ref{trm:polynomial} because the algorithm for calculating $S^\tau_2$ requires only $\poly(1/\varepsilon') \cdot \tilde{O}(k)$ space (as explained in the proof of Theorem~\ref{trm:polynomial-knowing-tau-oracle}).
\end{proof}

\end{document}